\def\R{\mathbb{R}}
\def\F{\mathcal{F}}
\definecolor{shadecolor}{gray}{0.9}
\newcommand{\card}[1]{\ensuremath{|#1|}}
\def\etal{\text{et al.}}
\def\part{\texttt{part}}
\newcommand{\squishlisttwo}{
   \begin{list}{$\bullet$}
       { \setlength{\itemsep}{0pt}      \setlength{\parsep}{3pt}
       \setlength{\topsep}{3pt}       \setlength{\partopsep}{0pt}
      \setlength{\leftmargin}{1.5em} \setlength{\labelwidth}{1em}
       \setlength{\labelsep}{0.5em} } }
\newcounter{Lcount}
\newcommand{\squishlist}{
\begin{list}{\arabic{Lcount}. }
{ \usecounter{Lcount}
\setlength{\itemsep}{0pt}
\setlength{\parsep}{0pt}
\setlength{\topsep}{0pt}
\setlength{\partopsep}{0pt}
\setlength{\leftmargin}{2em}
\setlength{\labelwidth}{1.5em}
\setlength{\labelsep}{0.5em} } }
\newcommand{\squishend}{
\end{list} }
\begin{document}
\pagestyle{plain}

\newtheorem{remark}[theorem]{Remark}
\newtheorem{claim}{Claim}
\newtheorem{property}[theorem]{Property}

\title{A Signaling Game Approach to Data Querying and Interaction}

\author{Ben McCamish}
\affiliation{
	\institution{Oregon State University}
}
\email{mccamisb@oregonstate.edu}

\author{Vahid Ghadakchi}
\affiliation{
	\institution{Oregon State University}
}
\email{ghadakcv@oregonstate.edu}

\author{Arash Termehchy}
\affiliation{
	\institution{Oregon State University}
}
\email{termehca@oregonstate.edu}

\author{Behrouz Touri}
\affiliation{
	\institution{University of California San Diego}
}
\email{btouri@eng.ucsd.edu}

\renewcommand{\shortauthors}{B. McCamish et al.}

\begin{abstract}
As most users do {\it not} precisely know the structure and/or the content of databases, their queries do {\it not} exactly reflect their information needs. The database management systems (DBMS) may interact with users and use their feedback on the returned results to learn the information needs behind their queries. Current query interfaces assume that users do {\it not} learn and modify the way way they express their information needs in form of queries during their interaction with the DBMS. Using a real-world interaction workload, we show that users learn and modify how to express their information needs during their interactions with the DBMS and their learning is accurately modeled by a well-known reinforcement learning mechanism. As current data interaction systems assume that users do {\it not} modify their strategies, they cannot discover the information needs behind users' queries effectively. We model the interaction between users and DBMS as a game with identical interest between two rational agents whose goal is to establish a common language for representing information needs in form of queries. We propose a reinforcement learning method that learns and answers the information needs behind queries and adapts to the changes in users' strategies and prove that it improves the effectiveness of answering queries stochastically speaking. We propose two efficient implementation of this method  over large relational databases. Our extensive empirical studies over real-world query workloads indicate that our algorithms are efficient and effective. 
\end{abstract}

\maketitle
\section{Introduction}
\label{sec:introduction}
Most users do not know the structure and content of databases and concepts such as schema or formal query languages sufficiently well to express their information needs precisely in the form of queries~\cite{Usable:Jagadish,Chen:2009:KSS,Idreos:SIGMOD:2015}. They may convey their intents in easy-to-use but inherently ambiguous forms, such as keyword queries, which are open to numerous interpretations. Thus, it is very challenging for a database management system (DBMS) to understand and satisfy the intents behind these queries. The fundamental challenge in the interaction of these users and DBMS is that the users and DBMS represent intents in different forms.

Many such users may explore a database to find answers for various intents over a rather long period of time. For these users, database querying is an inherently interactive and continuing process. As both the user and DBMS have the same goal of the user receiving her desired information, the user and DBMS would like to gradually improve their understandings of each other and reach a {\it common language of representing intents} over the course of various queries and interactions. The user may learn more about the structure and content of the database and how to express intents as she submits queries and observes the returned results. Also, the DBMS may learn more about how the user expresses her intents by leveraging user feedback on the returned results. The user feedback may include clicking on the relevant answers \cite{Yue:2012:KDB:2240304.2240501}, the amount of time the user spends on reading the results \cite{Granka:2004:EAU:1008992.1009079}, user's eye movements \cite{Huang:2012:USU:2207676.2208591}, or the signals sent in touch-based devises \cite{DBLP:conf/icde/LiarouI14}. Ideally, the user and DBMS should establish as quickly as possible this common representation of intents in which the DBMS accurately understands all or most user's queries. 

Researchers have developed systems that leverage user feedback to help the DBMS understand the intent behind ill-specified and vague queries more precisely \cite{Chaudhuri:2006:PIR:1166074.1166085,Chatzopoulou:2009:QRI:1561638.1561642}. These systems, however, generally assume that a user does {\it not} modify her method of expressing intents throughout her interaction with the DBMS. For example, they maintain that the user picks queries to express an intent according to a fixed probability distribution. It is known that the learning methods that are useful in a static setting do not deliver desired outcomes in a setting where all agents may modify their strategies \cite{Grotov:2016:OLR:2911451.2914798,Daskalakis:2010:LAN:1929237.1929248}. Hence, one may not be able to use current techniques to help the DBMS understand the users' information need in a rather long-term interaction. 

To the best of our knowledge, the impact of user learning on database interaction has been generally ignored. In this paper, we propose a novel framework that formalizes the interaction between the user and the DBMS as a game with identical interest between  two active and potentially rational agents: the user and DBMS. The common goal of the user and DBMS is to reach a mutual understanding on expressing information needs in the form of keyword queries. In each interaction, the user and DBMS receive certain payoff according to how much the returned results are relevant to the intent behind the submitted query. The user receives her payoff by consuming the relevant information and the DBMS becomes aware of its payoff by observing the user's feedback on the returned results. We believe that such a game-theoretic framework naturally models the long-term interaction between the user and DBMS. We explore the user learning mechanisms and propose algorithms for DBMS to improve its understanding of intents behind the user queries effectively and efficiently over large databases. In particular, we make the following contributions:

\squishlisttwo
    \item We model the long term interaction between the user and DBMS using keyword queries as a particular type of game called a signaling game~\cite{Cho:QuarterlyJournalEconomics:1987} in Section~\ref{sec:framework}. 
   
    \item Using extensive empirical studies over a real-world interaction log, we show that users modify the way they express their information need over their course of interactions in Section~\ref{sec:results}. We also show that this adaptation is accurately modeled by a well-known reinforcement learning algorithm \cite{roth1995learning} in experimental game-theory. 
 
 	\item Current systems generally assume that a user does {\it not} learn and/or modify her method of expressing intents throughout her interaction with the DBMS. However, it is known that the learning methods that are useful in static settings do not deliver desired outcomes in the dynamic ones \cite{auer2002nonstochastic}. We propose a method of answering user queries in a natural and interactive setting in Section~\ref{sec:learning} and prove that it improves the effectiveness of answering queries stochastically speaking, and converges almost surely. We show that our results hold for both the cases where the user adapts her strategy using an appropriate learning algorithm and the case where she follows a fixed strategy.

	\item We describe our data interaction system that provides an efficient implementation of our reinforcement learning method on large relational databases in Section~\ref{sec:relational}. In particular, we first propose an algorithm that implements our learning method called {\it Reservoir}. Then, using certain mild assumptions and the ideas of sampling over relational operators, we propose another algorithm called {\it Poisson-Olken} that implements our reinforcement learning scheme and considerably improves the efficiency of {\it Reservoir}.

	\item We report the results of our extensive empirical studies on measuring the effectiveness  of our reinforcement learning method and  the efficiency of our algorithms using real-world and large interaction workloads, queries, and databases in Section~\ref{sec:comparison}. Our results indicate that our proposed reinforcement learning method is more effective than the start-of-the-art algorithm for long-term interactions. They also show that {\it Poisson-Olken} can process queries over large databases faster than the {\it Reservoir} algorithm.
	
	\item Finally, we formally analyze the eventual stable states and equilibria of the data interaction game in Section~\ref{sec:equilib}. We show that the game has generally several types of equilibria some of which are {\it not} desirable.

\end{list}

\section{A Game-theoretic Framework}
\label{sec:framework}
Users and DBMSs typically achieve a common understanding {\it gradually} and using a {\it querying/feedback} paradigm. After submitting each query, the user may revise her strategy of expressing intents based on the returned result. If the returned answers satisfy her intent to a large extent, she may keep using the same query to articulate her intent. Otherwise, she may revise her strategy and choose another query to express her intent in the hope that the new query will provide her with more relevant answers. We will describe this behavior of users in Section~\ref{sec:results} in more details. The user may also inform the database system about the degree by which the returned answers satisfy the intent behind the query using explicit or implicit feedback, e.g., click-through information \cite{Granka:2004:EAU:1008992.1009079}. The DBMS may update its interpretation of the query according to the user's feedback. 

Intuitively, one may model this interaction as a game between two agents with identical interests in which the agents communicate via sharing queries, results, and feedback on the results. In each interaction, both agents will receive some reward according to the degree by which the returned result for a query matches its intent. The user receives her rewards in the form of answers relevant to her intent and the DBMS receives its reward through getting positive feedback on the returned results. The final goal of both agents is to maximize the amount of reward they receive during the course of their interaction. Next, we describe the components and structure of this interaction game for relational databases. 

{\bf Basic Definitions:} We fix two disjoint (countably) infinite sets of attributes and relation symbols. Every relation symbol $R$ is associated with a set of attribute symbols denoted as $sort(R)$. Let $dom$ be a countably infinite set of constants, e.g., strings. An instance $I_R$ of relation symbol $R$ with $n = \card{sort(R)}$ is a (finite) subset of $dom^n$. A {\it schema} $S$ is a set of relation symbols. A database (instance) of $S$ is a mapping over $S$ that associates with each relation symbol $R$ in $S$ an instance of $I_R$. In this paper, we assume that $dom$ is a set of strings.

\subsection{Intent} 
An {\it intent} represents an information need sought after by the user. Current keyword query interfaces over relational databases generally assume that each intent is a query in a sufficiently expressive query language in the domain of interest, e.g., Select-Project-Join subset of SQL \cite{Chen:2009:KSS,Usable:Jagadish}. Our framework and results are orthogonal to the language that precisely describes the users' intents. Table~\ref{example:framework:table:instance} illustrates a database with schema \textit{Univ(Name, Abbreviation, State, Type, Ranking)} that contains information about university rankings. A user may want to find the information about university {\it MSU} in Michigan, which is precisely represented by the intent $e_2$ in Table~\ref{example:framework:table:intents}, which using the Datalog syntax \cite{AliceBook} is: $ans(z)$ $\leftarrow$ $Univ(x,`MSU\textrm', `MI\textrm', y, z)$.

\subsection{Query}
Users' articulations of their intents are {\it queries}. Many users do not know the formal query language, e.g., SQL, that precisely describes their intents. Thus, they may prefer to articulate their intents in languages that are easy-to-use, relatively less complex, and ambiguous such as keyword query language~\cite{Usable:Jagadish,Chen:2009:KSS}. In the proposed game-theoretic frameworks for database interaction, we assume that the user expresses her intents as keyword queries. More formally, we fix a countably infinite set of terms, i.e., keywords, $T$. A {\it keyword query} (query for short) is a nonempty (finite) set of terms in $T$. Consider the database instance in Table~\ref{example:framework:table:instance}. Table~\ref{example:framework:table:intentsandqueries} depicts a set of intents and queries over this database. Suppose the user wants to find the information about Michigan State University in Michigan, i.e. the intent $e_2$. Because the user does not know any formal database query language and may not be sufficiently familiar with the content of the data, she may express intent $e_2$ using $q_2:$ {\it `MSU'}.

Some users may know a formal database query language that is sufficiently expressive to represent their intents. Nevertheless, because they may not know precisely the content and schema of the database, their submitted queries may not always be the same as their intents \cite{Chaudhuri:2006:PIR:1166074.1166085,Khoussainova:2010:SCA:1880172.1880175}. For example, a user may know how to write a SQL query. But, since she may not know the state abbreviation {\it MI}, she may articulate intent $e_2$ as $ans(t)$ $\leftarrow$ $Univ(x,`MSU\textrm', y, z, t)$, which is different from $e_2$. We plan to extend our framework for these scenarios in future work. But, in this paper, we assume that users articulate their intents as keyword queries.

\subsection{User Strategy} 
The user strategy indicates the likelihood by which the user submits query $q$ given that her intent is $e$. In practice, a user has finitely many intents and submits finitely many queries in a finite period of time. Hence, we assume that the sets of the user's intents and queries are finite. We index each user's intent and query by $1 \leq i \leq m$ and $1 \leq j \leq n$, respectively. A user strategy, denoted as $U$, is a $m\times n$ row-stochastic matrix from her intents to her queries. The matrix on the top of Table~\ref{example:framework:table:badstrategy:nash} depicts a user strategy using intents and queries in Table~\ref{example:framework:table:intentsandqueries}. According to this strategy, the user submits query $q_2$ to express intents $e_1$, $e_2$, and $e_3$.

\begin{table}[ht!]
        \centering
        \small
    	\caption{A database instance of relation Univ}
        \begin{tabular}{l l l l l}
            \hline
            \hline
            Name & Abbreviation & State & Type & Rank\\
            \hline
            Missouri State University & MSU & MO & public & 20\\
            Mississippi State University & MSU & MS & public & 22\\
            Murray State University & MSU & KY & public & 14\\
            Michigan State University & MSU & MI & public & 18\\
            \hline
        \end{tabular}
    \label{example:framework:table:instance}
\end{table}

\begin{table}[htbp!]
	\caption{Intents and Queries}
    \centering
    \small
    \begin{subtable}{1\linewidth}
    	\centering
 		\caption{Intents}
        \begin{tabular}{l l}
            \hline
            \hline
            Intent\# & \multicolumn{1}{c}{Intent} \\
            \hline
 	        $e_1$ & $ans(z) \leftarrow Univ(x,`MSU\textrm', `MS\textrm', y, z)$\\
            $e_2$ & $ans(z) \leftarrow Univ(x,`MSU\textrm', `MI\textrm', y, z)$\\
  	        $e_3$ & $ans(z) \leftarrow Univ(x,`MSU\textrm', `MO\textrm', y, z)$\\
            \hline
        \end{tabular}
        \label{example:framework:table:intents}
    \end{subtable}%
    
    \begin{subtable}{1\linewidth}
    	\centering
        \caption{Queries}
        \begin{tabular}{l l}
            \hline
            \hline
            Query\# & \multicolumn{1}{c}{Query} \\
            \hline
            $q_1$ & `MSU MI'\\
            $q_2$ & `MSU'\\
            \hline
        \end{tabular}
        \label{example:framework:table:queries}
    \end{subtable}%
    \label{example:framework:table:intentsandqueries}
\end{table}
\begin{table}[ht!]
	\caption{Two strategy profiles over the intents and queries in Table~\ref{example:framework:table:intentsandqueries}. User and DBMS strategies at the top and bottom, respectively.}
	\centering
    \begin{subtable}{.45\linewidth}
    	\caption{A strategy profile}
        \begin{tabular}{c|c|c|}
            \cline{2-3}
             & \multicolumn{1}{c|}{$q_1$} &$q_2$\\
            \hline
            \multicolumn{1}{|c|}{$e_1$} & 0 & 1\\
            \hline
            \multicolumn{1}{|c|}{$e_2$} & 0 & 1\\
            \hline
            \multicolumn{1}{|c|}{$e_3$} & 0 & 1\\
            \hline
        \end{tabular}
        \begin{tabular}{c|c|c|c|}
            \cline{2-4}
             & \multicolumn{1}{c|}{$e_1$} & $e_2$ & $e_3$ \\
            \hline
            \multicolumn{1}{|c|}{$q_1$} & 0 & 1 & 0\\
            \hline
            \multicolumn{1}{|c|}{$q_2$} & 0 & 1 & 0\\
            \hline
        \end{tabular}
       	\label{example:framework:table:badstrategy:nash}
    \end{subtable}	
    \begin{subtable}{.45\linewidth}
    	\caption{Another strategy profile}
        \begin{tabular}{c|c|c|}
            \cline{2-3}
             & \multicolumn{1}{c|}{$q_1$} &$q_2$\\
            \hline
            \multicolumn{1}{|c|}{$e_1$} & 0 & 1\\
            \hline
            \multicolumn{1}{|c|}{$e_2$} & 1 & 0\\
            \hline
            \multicolumn{1}{|c|}{$e_3$} & 0 & 1\\
            \hline
        \end{tabular}
        \begin{tabular}{c|c|c|c|}
            \cline{2-4}
             & \multicolumn{1}{c|}{$e_1$} & $e_2$ & $e_3$ \\
            \hline
            \multicolumn{1}{|c|}{$q_1$} & 0 & 1 & 0\\
            \hline
            \multicolumn{1}{|c|}{$q_2$} & 0.5 & 0 & 0.5\\
            \hline
        \end{tabular}
        \label{example:framework:table:beststrategy:nash}
    \end{subtable}
    \label{example:framework:table:strategies}
\end{table}

\subsection{DBMS Strategy} 
\label{sec:DBStrategy}
The DBMS interprets queries to find the intents behind them. It usually interprets queries by mapping them to a subset of SQL \cite{Chen:2009:KSS,IRStyle,spark}. Since the final goal of users is to see the result of applying the interpretation(s) on the underlying database, the DBMS runs its interpretation(s) over the database and returns its results. Moreover, since the user may {\it not} know SQL, suggesting possible SQL queries may not be useful. A DBMS may {\it not} exactly know the language that can express all users' intents. Current usable query interfaces, including keyword query systems, select a query language for the interpreted intents that is sufficiently complex to express many users' intents and is simple enough so that the interpretation and running its outcome(s) are done efficiently \cite{Chen:2009:KSS}. As an example consider current keyword query interfaces over relational databases~\cite{Chen:2009:KSS}. Given constant $v$ in database $I$ and keyword $w$ in keyword query $q$, let $match(v,w)$ be a function that is true if $w$ appears in $v$ and false otherwise. A majority of keyword query interfaces interpret keyword queries as Select-Project-Join queries that have below certain number of joins and whose {\it where} clauses contain only conjunctions of $match$ functions~\cite{IRStyle,spark}. Using a larger subset of SQL, e.g. the ones with more joins, makes it inefficient to perform the interpretation and run its outcomes. Given schema $S$, the {\it interpretation language} of the DBMS, denoted as $L$, is a subset of SQL over $S$. We precisely define $L$ for our implementation of DBMS strategy in Section~\ref{sec:relational}. To interpret a keyword query, the DBMS searches $L$ for the SQL queries that represent the intent behind the query as accurately as possible. 

Because users may be overwhelmed by the results of many interpretations, keyword query interfaces use a deterministic real-valued scoring function to rank their interpretations and deliver only the results of top-$k$ ones to the user \cite{Chen:2009:KSS}. It is known that such a deterministic approach may significantly limit the accuracy of interpreting queries in long-term interactions in which the information system utilizes user's feedback~\cite{hofmann2013balancing,vorobev2015gathering,auer2002finite}. Because the DBMS shows only the result of interpretation(s) with the highest score(s) to the user, it receives feedback only on a small set  of interpretations. Thus, its learning remains largely biased toward the initial set of highly ranked interpretations. For example, it may never learn that the intent behind a query is satisfied by an interpretation with a relatively low score according to the current scoring function. 

To better leverage users feedback during the interaction, the DBMS must show the results of and get feedback on a sufficiently diverse set of interpretations~\cite{hofmann2013balancing,vorobev2015gathering,auer2002finite}. Of course, the DBMS should ensure that this set of interpretations are relatively relevant to the query, otherwise the user may become discouraged and give up querying. This dilemma is called the {\it exploitation versus exploration} trade-off. A DBMS that only {\it exploits}, returns top-ranked interpretations according to its scoring function. Hence, the DBMS may adopt a {\it stochastic strategy} to both exploit and explore: it randomly selects and shows the results of intents such that the ones with higher scores are chosen with larger probabilities~\cite{hofmann2013balancing,vorobev2015gathering,auer2002finite}. In this approach, users are mostly shown results of interpretations that are relevant to their intents according to the current knowledge of the DBMS and provide feedback on a relatively diverse set of interpretations. More formally, given $Q$ is a set of all keyword queries, the DBMS strategy $D$ is a stochastic mapping from $Q$ to $L$. To the best of our knowledge, to search $L$ efficiently, current keyword query interfaces limit their search per query to a finite subset of $L$ \cite{Chen:2009:KSS,IRStyle,spark}. In this paper, we follow a similar approach and assume that $D$ maps each query to only a finite subset of $L$. The matrix on the bottom of Table~\ref{example:framework:table:badstrategy:nash} depicts a DBMS strategy for the intents and queries in Table~\ref{example:framework:table:intentsandqueries}. Based on this strategy, the DBMS uses a exploitative strategy and always interprets query $q_2$ as $e_2$. The matrix on the bottom of Table~\ref{example:framework:table:beststrategy:nash} depicts another DBMS strategy for the same set of intents and queries. In this example, DBMS uses a randomized strategy and does both exploitation and exploration. For instance, it explores $e_1$ and $e_2$ to answer $q_2$ with equal probabilities, but it always returns $e_2$ in the response to $q_1$.

\vspace{-4pt}
\subsection{Interaction \& Adaptation}
The data interaction game is a repeated game with identical interest between two players, the user and the DBMS. At each round of the game, i.e., a single interaction, the user selects an intent according to the prior probability distribution $\pi$. She then picks the query $q$ according to her strategy and submits it to the DBMS. The DBMS observes $q$ and interprets $q$ based on its strategy, and returns the results of the interpretation(s) on the underlying database to the user. The user provides some feedback on the returned tuples and informs the DBMS how relevant the tuples are to her intent. In this paper, we assume that the user informs the DBMS if some tuples satisfy the intent via some signal, e.g., selecting the tuple, in some interactions. The feedback signals may be noisy, e.g., a user may click on a tuple by mistake. Researchers have proposed models to accurately detect the informative signals \cite{hofmann2013balancing}. Dealing with the issue of noisy signals is out of the scope of this paper.

The goal of both the user and the DBMS is to have as many satisfying tuples as possible in the returned tuples. Hence, both the user and the DBMS receive some payoff, i.e., reward, according to the degree by which the returned tuples match the intent. This payoff is measured based on the user feedback and using standard effectiveness metrics \cite{IRBook}. One example of such metrics is {\it precision at $k$}, $p@k$, which is the fraction of relevant tuples in the top-$k$ returned tuples. At the end of each round, both the user and the DBMS receive a payoff equal to the value of the selected effectiveness metric for the returned result. We denote the payoff received by the players at each round of the game, i.e., a single interaction, for returning interpretation $e_{\ell}$ for intent $e_i$ as $r(e_i,e_{\ell})$. This payoff is computed using the user's feedback on the result of interpretation $e_{\ell}$ over the underlying database.

Next, we compute the expected payoff of the players. Since DBMS strategy $D$ maps each query to a finite set of interpretations, and the set of submitted queries by a user, or a population of users, is finite, the set of interpretations for all queries submitted by a user, denoted as $L^s$, is finite. Hence, we show the DBMS strategy for a user as an $n\times o$ row-stochastic matrix from the set of the user's queries to the set of interpretations $L^s$. We index each interpretation in $L^s$ by $1 \leq \ell \leq o$. Each pair of the user and the DBMS strategy, ($U$,$D$), is a {\it strategy profile}. The expected payoff for both players with strategy profile ($U$,$D$) is as follows.
\begin{align}
\label{eqn:payoff}
u_{r}(U,D) =\sum_{i=1}^m\pi_i\sum_{j=1}^nU_{ij}\sum_{\ell=1}^o D_{j\ell}\ r(e_i,e_{\ell}), 
\end{align}
The expected payoff reflects the degree by which the user and DBMS have reached a common language for communication. This value is high for the case in which the user knows which queries to pick to articulate her intents and the DBMS returns the results that satisfy the intents behind the user's queries. Hence, this function reflects the success of the communication and interaction. For example, given that all intents have equal prior probabilities, intuitively, the strategy profile in Table~\ref{example:framework:table:beststrategy:nash} shows a larger degree of mutual understanding between the players than the one in Table~\ref{example:framework:table:badstrategy:nash}. This is reflected in their values of expected payoff as the expected payoffs of the former and latter are $\frac{2}{3}$ and $\frac{1}{3}$, respectively. We note that the DBMS may {\it not} know the set of users' queries beforehand and does {\it not} compute the expected payoff directly. Instead, it uses query answering algorithms that leverage user feedback, such that the expected payoff improves over the course of several interactions as we will show in Section~\ref{sec:learning}. 

None of the players know the other player's strategy during the interaction. Given the information available to each player, it may modify its strategy at the end of each round (interaction). For example, the DBMS may reduce the probability of returning certain interpretations that has not received any positive feedback from the user in the previous rounds of the game. Let the user and DBMS strategy at round $t \in \mathbb{N}$ of the game be $U(t)$ and $D(t)$, respectively. In round $t \in \mathbb{N}$ of the game, the user and DBMS have access to the information about their past interactions. The user has access to her sequence of intents, queries, and results, the DBMS knows the sequence of queries and results, and both players have access to the sequence of payoffs ({\it not} expected payoffs) up to round $t-1$. It depends on the degree of rationality and abilities of the user and the DBMS  how to leverage these pieces of information to improve the expected payoff of the game. For example, it may {\it not} be reasonable to assume that the user adopts a mechanism that requires instant access to the detailed information about her past interactions as it is not clear whether users can memorize this information for a long-term interaction. A {\it data interaction game} is represented as tuple $(U(t), D(t), \pi, (e^u(t-1)), (q(t-1)), (e^d(t-1)), (r(t-1)))$ in which $U(t)$ and $D(t)$ are respectively the strategies of the user and DBMS at round $t$, $\pi$ is the prior probability of intents in $U$, $(e^u(t-1))$ is the sequence of intents, $(q(t-1))$ is the sequence of queries, $(e^d(t-1))$ is the sequence of interpretations, and  $(r(t-1)))$ is the sequence of payoffs up to time $t$. Table~\ref{framework:notation} contains the notation and concept definitions introduced in this section for future reference.

\begin{table}
    \centering
    \caption{Summary of the notations used in the model.}
    \vspace{-10pt}
    \label{framework:notation}
    \begin{tabular}{|l|p{6.7cm}|}
    \hline
    Notation & Definition\\
    \hline
    $e_i$ & A user's intent\\
    \hline
    $q_j$ & A query submitted by the user \\
    \hline
    $\pi_i$ & The prior probability that the user queries for $e_i$\\
    \hline
    $r(e_i,e_\ell)$ & The reward when the user looks for $e_i$ and the DBMS returns $e_\ell$\\
    \hline
    $U$ & The user strategy\\
    \hline
    $U_{ij}$ & The probability that user submits $q_j$ for intent $e_i$\\
    \hline
    $D$ & The DBMS strategy\\
    \hline
    $D_{j\ell}$ & The probability that DBMS intent $e_{\ell}$ for query $q_j$\\
    \hline
    $(U,D)$ & A strategy profile\\
    \hline
    $u_r(U,D)$ & The expected payoff of the strategy profile $(U,D)$ computed using reward metric $r$ based to Equation~1\\
    \hline
    \end{tabular}
    \vspace{-15pt}
\end{table}

\section{User Learning Mechanism}
\label{sec:results}
It is well established that humans show reinforcement behavior in learning \cite{4176,ICMLRL}. Many lab studies with human subjects conclude that one can model human learning using reinforcement learning models \cite{4176,ICMLRL}. The exact reinforcement learning method used by a person, however, may vary based on her capabilities and the task at hand. We have performed an empirical study of a real-world interaction log to find the reinforcement learning method(s) that best explain the mechanism by which users adapt their strategies during interaction with a DBMS. 

\subsection{Reinforcement Learning Methods}
To provide a comprehensive comparison, we evaluate six reinforcement learning methods used to model human learning in experimental game theory and/or Human Computer Interaction (HCI)~\cite{roth1995learning,cen2013reinforcement}. These methods mainly vary based on 1) the degree by which the user considers past interactions when computing future strategies, 2) how they update the user strategy, and 3) the rate by which they update the user strategy. {\it Win-Keep/Lose-Randomize} keeps a query with non-zero reward in past interactions for an intent. If such a query does not exist, it picks a query randomly. {\it Latest-Reward} reinforces the probability of using a query to express an intent based on the most recent reward of the query to convey the intent. {\it Bush and Mosteller's} and {\it Cross's} models increases (decreases) the probability of using a query based its past success (failures) of expressing an intent. A query is successful if it delivers a reward more than a given threshold, e.g., zero. {\it Roth and Erev's} model uses the aggregated reward from past interactions to compute the probability by which a query is used. {\it Roth and Erev's modified} model is similar to Roth and Erev's model, with an additional parameter that determines to what extent the user {\it forgets} the reward received for a query in past interactions.

\subsubsection{Win-Keep/Lose-Randomize}
This method uses only the most recent interaction for an intent to determine the queries used to express the intent in the future \cite{Barrett}. Assume that the user conveys an intent~$e$ by a query~$q$. If the reward of using~$q$ is above a specified threshold~$\tau$ the user will use~$q$ to express~$e$ in the future. Otherwise, the user randomly picks another query uniformly at random to express $e$. 

\subsubsection{Bush and Mosteller's Model:}
Bush and Mosteller's model increases the probability that a user will choose a given query to express an intent by an amount proportional to the reward of using that query and the current probability of using this query for the intent~\cite{bush1953stochastic}. It also decreases the probabilities of queries not used in a successful interaction. If a user receives reward $r$ for using $q(t)$ at time $t$ to express intent $e_i$, the model updates the probabilities of using queries in the user strategy as follows.

\begin{equation}
  \label{eq:BM:jq}
  U_{ij}(t+1) =  
  \begin{cases}
      U_{ij}(t) + \alpha^{BM} \cdot (1-U_{ij}(t)) & q_j=q(t) \land r \geq 0\\
      U_{ij}(t) - \beta^{BM} \cdot U_{ij}(t) & q_j=q(t) \land r < 0\\
  \end{cases}
\end{equation}

\begin{equation}
  \label{eq:BM:jnotq}
  U_{ij}(t+1) =  
  \begin{cases}
      U_{ij}(t) - \alpha^{BM} \cdot U_{ij}(t) & q_j\neq q(t) \land r \geq 0\\
      U_{ij}(t) + \beta^{BM} \cdot (1-U_{ij}(t) & q_j\neq q(t)\land r < 0
  \end{cases}
\end{equation}

$\alpha^{BM} \in [0,1]$ and $\beta^{BM} \in [0,1]$ are parameters of the model. If query~$q_j$ is equal to  $q(t)$ then Equation~\ref{eq:BM:jq} is used. For all other queries $q_j$ for the intent $e_i$ at time~$t$, Equation~\ref{eq:BM:jnotq} is used. The probabilities of using queries for intents other than $e_i$ remains unchanged. Since effectiveness metrics in interaction are always greater than zero, $\beta^{BM}$ is never used in our experiments.

\subsubsection{Cross's Model:}
Cross's model modifies the user's strategy similar to Bush and Mosteller's model~\cite{cross1973stochastic}, but uses the amount of the received reward to update the user strategy. Given a user receives reward $r$ for using $q(t)$ at time $t$ to express intent $e_i$, we have:

\begin{equation}
  U_{ij}(t+1) =  
  \begin{cases}
      U_{ij}(t) + R(r) \cdot (1-U_{ij}(t)) & q_j=q(t)\\
      U_{ij}(t) - R(r) \cdot U_{ij}(t)   & q_j\neq q(t)
  \end{cases}
\end{equation}
\begin{equation}
  R(r) = \alpha^{C} \cdot r + \beta^{C}
\end{equation}

Parameters $\alpha^{C} \in [0,1]$ and $\beta^{C} \in [0,1]$ are used to compute the adjusted reward $R(r)$ based on the value of actual reward $r$. The parameter $\beta^{C}$ is a static increment of the adjusted reward. Similar to Bush and Mosteller's model, the aforementioned formulas are used to update the probabilities of using queries for the intent $e_i$ in the current interaction. Other entries in the user's strategy are remained unchanged.

\subsubsection{Roth and Erev's Model:}
Roth and Erev's model reinforces the probabilities directly from the reward value $r$ that is received when the user uses query $q(t)$ \cite{roth1995learning}. Its most important difference with other models is that it explicitly accumulates all the rewards gained by using a query to express an intent. $S_{ij}(t)$ in matrix $S(t)$ maintains the accumulated reward of using query $q_j$ to express intent $e_i$ over the course of interaction up to round (time) $t$. 

\begin{equation}
  S_{ij}(t+1) =  
  \begin{cases}
      S_{ij}(t) + r & q_j=q(t)\\
      S_{ij}(t) & q_j\neq q(t)
  \end{cases}
\end{equation}

\begin{equation}
  U_{ij}(t+1) = \frac{S_{ij}(t+1)}{\sum\limits_{j'}^n S_{ij'}(t+1)}
\end{equation}

Roth and Erev's model increases the probability of using a query to express an intent based on the accumulated rewards of using that query over the long-term interaction of the user. 
Each query {\it not} used in a successful interaction will be implicitly penalized as when the probability of a query increases, all others will decrease to keep $U$ row-stochastic.  

\subsubsection{Roth and Erev's Modified Model:}
Roth and Erev's modified model is similar to the original Roth and Erev's model, but it has an additional parameter that determines to what extent the user takes in to account the outcomes of her past interactions with the system~\cite{erev1995need}. It is reasonable to assume that the user may forget the results of her much earlier interactions with the system. User's memory is imperfect which means that over time the strategy may change merely due to the forgetful nature of the user. This is accounted for by the \textit{forget} parameter $\sigma \in [0,1]$. Matrix $S(t)$ has the same role it has for the Roth and Erev's model.

\begin{equation}
  S_{ij}(t+1) = (1-\sigma) \cdot S_{ij}(t) + E(j, R(r))
\end{equation}

\begin{equation}
  E(j, R(r)) =  
  \begin{cases}
      R(r) \cdot (1-\epsilon) & q_j=q(t)\\
      R(r) \cdot (\epsilon) & q_j \neq q(t)
  \end{cases}
\end{equation}

\begin{equation}
  R(r) = r - r_{min}
\end{equation}

\begin{equation}
  U_{ij}(t+1) = \frac{S_{ij}(t+1)}{\sum\limits_{j'}^n S_{ij'}(t+1)}
\end{equation}

In the aforementioned formulas, $\epsilon \in [0,1]$ is a parameter that weights the reward that the user receives, $n$ is the maximum number of possible queries for a given intent $e_i$, and $r_{min}$ is the minimum expected reward that the user wants to receive. The intuition behind this parameter is that the user often assumes some minimum amount of reward is guaranteed when she queries the database. The model uses this minimum amount to discount the received reward. We set $r_{min}$ to 0 in our analysis, representing that there is no expected reward in an interaction. 

\subsection{Empirical Analysis}
\label{sec:empAnal}
\subsubsection{Interaction Logs}
We use an anonymized Yahoo! interaction log for our empirical study, which consists of queries submitted to a Yahoo! search engine in July 2010 \cite{yahoo}. Each record in the log consists of a time stamp, user cookie id, submitted query, the top 10 results displayed to the user, and the positions of the user clicks on the returned answers. Generally speaking, typical users of Yahoo! are normal users who may not know advanced concepts, such as formal query language and schema, and use keyword queries to find their desired information. Yahoo! may generally use a combination of structured and unstructured datasets to satisfy users' intents. Nevertheless, as normal users are not aware of the existence of schema and mainly rely on the content of the returned answers to (re)formulate their queries, we expect that the users' learning mechanisms over this dataset closely resemble their learning mechanisms over structured data. We have used three different contiguous subsamples of this log whose information is shown in Table~\ref{results:table:aggstats}. The duration of each subsample is the time between the time-stamp of the first and last interaction records. Because we would like to specifically look at the users that exhibit some learning throughout their interaction, we have collected only the interactions in which a user submits at least two different queries to express the same intent. The records of the 8H-interaction sample appear at the beginning of the the 43H-interaction sample, which themselves appear at the beginning of the 101H-interaction sample.

Table~\ref{results:table:recordEx} illustrates an example of what the log record looks like. Note, for the \textit{Results} column, there are 10 results returned.  Table~\ref{results:table:scoreEx} illustrates an example of the relevance judgment scores dataset. For example, if the user entered $QueryID=00002efd$ and one of results returned was $ResultID=2722a07f$, then the relevance of that result would be three.

\begin{table}[h!]
	\centering
	\small
	\caption{Log Record Example}
	\begin{tabular}{|l|l|l|l|}
		\hline
		QueryID & CookieID & TimeStampID & Results\\
		\hline
		00002efd & 1deac14e & 1279486689 & 2722a07f ... e1468bbf\\
		12fe75b7 & 4b74d72d	& 1279546874 & dc381b38	... df9a561f\\
		\hline
	\end{tabular}
	\label{results:table:recordEx}
\end{table}

\begin{table}[h!]
	\centering
	\caption{Relevance Judgment Score Example}
	\begin{tabular}{|l|l|l|}
		\hline
		QueryID & ResultID & Score\\
		\hline
		00002efd & 2722a07f & 3\\
		12fe75b7 & dc381b38	& 4\\
		\hline
	\end{tabular}
	\label{results:table:scoreEx}
\end{table}

\subsubsection{Intent \& Reward}
Accompanying the interaction log is a set of {\it relevance judgment scores} for each query and result pair. Each relevance judgment score is a value between 0 and 4 and shows the degree of relevance of the result to the query, with 0 meaning not relevant at all and 4 meaning the most relevant result. We define the intent behind each query as the set of results with non-zero relevance scores. We use the standard ranking quality metric NDCG for the returned results of a query as the reward in each interaction as it models different levels of relevance \cite{IRBook}. The value of NDCG is between 0 and 1 and it is 1 for the most effective list.

\begin{table}[ht!]
	\centering
	\caption{Subsamples of Yahoo! interaction log}
	\begin{tabular}{|c|c|c|c|c|c|}
		\hline
		Duration & \#Interactions & \#Users & \#Queries & \#Intents \\
		\hline
		\hline
		\textasciitilde8H & 622 & 272 & 111 & 62 \\ 
		\hline
		\textasciitilde43H & 12323 & 4056 & 341 & 151\\ 
		\hline
		\textasciitilde101H & 195468 & 79516 & 13976 & 4829 \\ 
		\hline
	\end{tabular}
	\label{results:table:aggstats}
\end{table}

\subsubsection{Parameter Estimation} 
Some models, e.g., Cross's model, have some parameters that need to be trained. We have used a set of 5,000 records that appear in the interaction log immediately before the first subsample of Table~\ref{results:table:aggstats} and found the optimal values for those parameters using grid search and the sum of squared errors. 

\subsubsection{Training \& Testing} 
We train and test a single user strategy over each subsample and model, which represents the strategy of the user population in each subsample. The user strategy in each model is initialized with a uniform distribution, so that all queries are equally likely to be used for an intent. After estimating parameters, we train the user strategy using each model over 90\% of the total number of records in each selected subsample in the order by which the records appear in the interaction log. We use the value of NDCG as reward for the models that use rewards to update the user strategy after each interaction. We then test the accuracy of the prediction of using a query to express an intent for each model over the remaining 10\% of each subsample using the user strategy computed at the end of the training phase. Each intent is conveyed using only a single query in the testing portions of our subsamples. Hence, no learning is done in the testing phase and we do not update the user strategies. We report the mean squared errors over all intents in the testing phase for each subsample and model in Figure~\ref{fig:results:filtered}. A lower mean squared error implies that the model more accurately represents the users' learning method. We have excluded the Latest Reward results from the figure as they are an order of magnitude worse than the others. 

\begin{figure}[ht!]
    \centering
    \includegraphics[width = 0.9\linewidth]{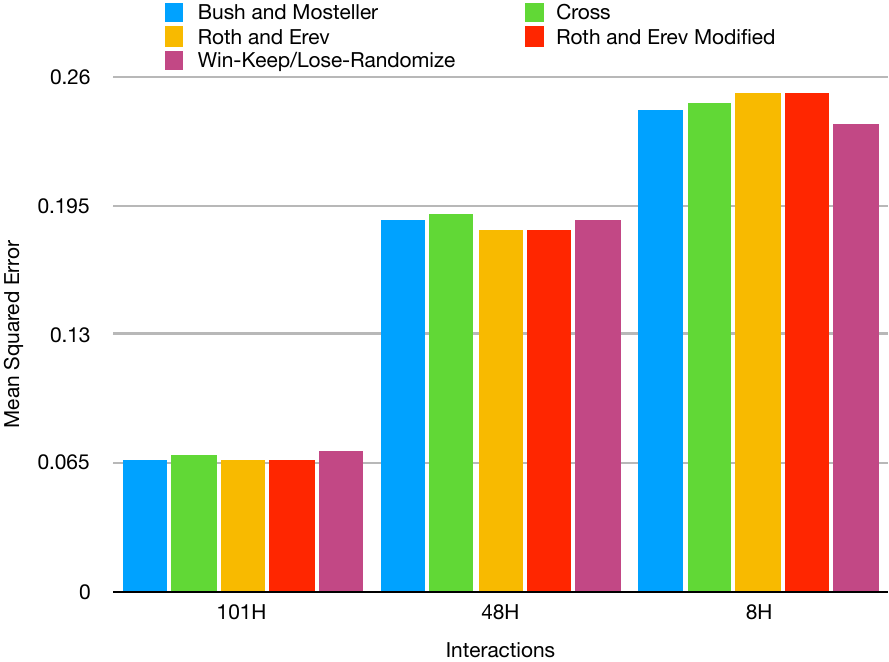}
    \caption{Accuracies of learning over the subsamples of Table~\ref{results:table:aggstats}}
    \label{fig:results:filtered}
\end{figure}

\subsubsection{Results}
Win-Keep/Lose-Randomize performs surprisingly more accurate than other methods for the 8H-interaction subsample. It indicates that in short-term and/or beginning of their interactions,  users may not have enough interactions to leverage a more complex learning scheme and use a rather simple mechanism to update their strategies. Both Roth and Erev's methods use the accumulated reward values to adjust the user strategy gradually. Hence, they cannot precisely model user learning over a rather short interaction and are less accurate than relatively more aggressive learning models such as Bush and Mosteller's and Cross's over this subsample. Both Roth and Erev's deliver the same result and outperform other methods in the 43-H and 101-H subsamples. Win-Keep/Lose-Randomize is the least accurate method over these subsamples. Since larger subsamples provide more training data, the predication accuracy of all models improves as the interaction subsamples becomes larger. The learned value for the {\it forget} parameter in the Roth and Erev's modified model is very small and close to zero in our experiments, therefore, it generally acts like the Roth and Erev's model.

Long-term communications between users and DBMS may include multiple sessions. Since Yahoo! query workload contains the time stamps and user ids of each interaction, we have been able to extract the starting and ending times of each session. Our results indicate that as long as the user and DBMS communicate over sufficiently many of interactions, e.g., about 10k for Yahoo! query workload, the users follow the Roth and Erev's model of learning. Given that the communication of the user and DBMS involve sufficiently many interactions, we have {\it not} observed any difference in the mechanism by which users learn based on the numbers of sessions in the user and DBMS communication.

\subsubsection{Conclusion} 
Our analysis indicates that users show a substantially intelligent behavior when adopting and modifying their strategies over relatively medium and long-term interactions. 
They leverage their past interactions and their outcomes, i.e., have an effective long-term memory. This behavior is most accurately modeled using Roth and Erev's model. Hence, in the rest of the paper, we set the user learning method to this model. 
\section{Learning Algorithm for DBMS}

\label{sec:learning}
Current systems generally assume that a user does {\it not} learn and/or modify her method of expressing intents throughout her interaction with the DBMS. However, it is known that the learning methods that are useful in static settings do not deliver desired outcomes in the dynamic ones \cite{auer2002nonstochastic}. 
Moreover, it has been shown that if the players do {\it not} use the right learning algorithms in games with identical interests, the game and its payoff may not converge to any desired states~\cite{shapley1964some}. Thus, choosing the correct learning mechanism for the DBMS is crucial to improve the payoff and converge to a desired state. The following algorithmic questions are of interest:
 \begin{enumerate}[i.]
   \item How can a DBMS learn or adapt to a user's strategy?
   \item Mathematically, is a given learning algorithm effective?
   \item What would be the asymptotic behavior of a given learning algorithm?
 \end{enumerate}
Here, we address the first and the second questions above. Dealing with the third question is far beyond the scope and space of this paper. A summary of the notations introduced in Section~\ref{sec:framework} and used in this section can be found in Table~\ref{framework:notation}.

\subsection{DBMS Reinforcement Learning}
\label{sec:arbitrary}
We adopt Roth and Erev's learning method for adaptation of the DBMS strategy, with a slight modification. The original Roth and Erev method considers only a single action space. In our work, this would translate to having only a single query. Instead we extend this such that each query has its own action space or set of possible intents. The adaptation happens over discrete time $t=0,1,2,3,\ldots$ instances where $t$ denotes the $t$th interaction of the user and the DBMS. We refer to $t$ simply as the iteration of the learning rule. 
For simplicity of notation, we refer to intent $e_i$ and result $s_\ell$ as intent $i$ and $\ell$, respectively, 
in the rest of the paper. Hence, we may rewrite the expected payoff for both user and DBMS as:
\begin{align*}
  u_r(U,D)=\sum_{i=1}^m\pi_i\sum_{j=1}^nU_{ij}\sum_{\ell=1}^{o}D_{j\ell}r_{i\ell},
\end{align*}
 where $r:[m]\times [o]\to \R^+$ is the effectiveness measure between the intent $i$ and the result, i.e., decoded intent $\ell$.
With this, the reinforcement learning mechanism for the DBMS adaptation is as follows.
\begin{enumerate}[a.]
 	\item Let $R(0)>0$ be an $n\times o$ initial reward matrix whose entries are strictly positive.
    \item Let $D(0)$ be the initial DBMS strategy with $D_{j\ell}(0)=\frac{R_{j\ell}(0)}{\sum_{\ell=1}^{o}R_{j\ell}(0)}>0$ for all $j\in [n]$ and $\ell\in [o]$.
    \item For iterations $t=1,2,\ldots$, do
    \begin{enumerate}[i.]
      \item If the user's query at time $t$ is $q(t)$, DBMS returns a result $E(t)\in E$ with probability:
      \begin{align*}
        &P(E(t)=i'\mid q(t))=D_{q(t)i'}(t).
      \end{align*}
      \item User gives a reward $r_{ii'}$ given that $i$ is the intent of the user at time $t$. Note that the reward depends both on the intent $i$ at time $t$ and the result $i'$. Then, set
      \begin{align}
        R_{j\ell}(t+1)=\left\{\begin{array}{ll}
            R_{j\ell}(t) + r_{i\ell} &\mbox{if $j=q(t)$ and $ \ell =i'$}\\
            R_{j\ell}(t)&\mbox{otherwise}
        \end{array}\right..
      \end{align}
      \item Update the DBMS strategy by
      \begin{align}\label{eqn:udaterule}
        D_{ji}(t+1)=\frac{R_{ji}(t+1)}{\sum_{\ell=1}^{o}R_{j\ell}(t+1)},
      \end{align}
      for all $j\in [n]$ and $i\in [o]$.
    \end{enumerate}
 \end{enumerate}

 In the above scheme $R(t)$ is simply the reward matrix at time $t$. 
 
 Few comments are in order regarding the above adaptation rule:
 \begin{enumerate}[-]
    \item One can use available ranking functions, e.g. \cite{Chaudhuri:2006:PIR:1166074.1166085}, for the initial reward condition $R(0)$  which possibly leads to an intuitive initial point for the learning rule. One may normalize and convert the scores returned by these functions to probability values.
    \item In step c.ii., if the DBMS has the knowledge of the user's intent after the interactions (e.g.\ through a click), the DBMS sets $R_{ji}+1$ for the known intent $i$. The mathematical analysis of both cases will be similar.
    \item In the initial step, as the DBMS uses a ranking function to compute the probabilities, it may not materialize 
    $R$ and $D$. As the game progresses, DBMS maintains the 
    strategy and reward matrices with entries for only 
    the observed queries, their underlying intents, and their returned results. 
    Hence, the DBMS does not need to materialize $R$ and $D$ for the sets of possible intents, queries, and results.
    DBMS also does not need to know the set of user's interns beforehand. Hence, the algorithm is practical for the 
    cases where the sought-for intents, 
    submitted queries, and returned results are not very large. 
    Moreover, $R$ and $D$ are generally sparse. 
    As queries and intents generally follow a power law 
    distribution \cite{IRBook}, one may use sampling techniques to 
    use this algorithm in other settings. The rigorous theoretical and empirical analysis of applying such techniques are 
    interesting subjects of future work.
\end{enumerate}
\subsection{Analysis of the Learning Rule}
We show in Section~\ref{sec:results} that users modify their strategies in data interactions. Nevertheless, ideally, one would like to use a learning mechanism for the DBMS that accurately discovers the intents behind users' queries whether or not the users modify their strategies, as it is {\it not} certain that all users will always modify their strategies. Also, in some relevant applications, the user's learning is happening in a much slower time-scale compared to the learning of the DBMS. So, one can assume that the user's strategy is fixed compared to the time-scale of the DBMS adaptation. Therefore, first, we consider the case that the user is {\it not} adapting her strategy, i.e., she has a fixed strategy during the interaction. Then, we consider the case that the user's strategy is adapting to the DBMS's strategy but perhaps on a slower time-scale in Section~\ref{sec:userlearning}.

In this section, we provide an analysis of the reinforcement mechanism provided above and will show that, statistically speaking, the adaptation rule leads to improvement of the efficiency of the interaction. Note that since the user gives feedback only on one tuple in the result, one can without the loss of generality assume that the cardinality of the list $k$ is $1$.

For the analysis of the reinforcement learning mechanism in Section~\ref{sec:learning} and for simplification, denote
\begin{align}\label{eqn:ut}
  u(t):=u_r(U,D(t))=u_r(U,D(t)),
\end{align}
for an effectiveness measure $r$ as $u_r$
is defined in \eqref{eqn:payoff}.

We recall that a random process $\{X(t)\}$ is a submartingale \cite{durrett2010} if it is absolutely integrable (i.e.\ $E(|X(t)|)<\infty$ for all $t$) and
\begin{align*}
  E(X(t+1)\mid \F_t)\geq X(t),
\end{align*}
where $\F_t$ is the history or $\sigma$-algebra generated by $X_1,\ldots,X_t$. In other words, a process $\{X(t)\}$ is a sub-martingale if the expected value of $X(t+1)$ given $X(t),X(t-1),\ldots, X(0)$, is not strictly less than the value of $X_t$. Note that submartingales are nothing but the stochastic counterparts of monotonically increasing sequences. As in the case of bounded (from above) monotonically increasing sequences, submartingales pose the same property, i.e.\ any submartingale $\{X(t)\}$ with $E(|X(t)|)<B$ for some $B\in \R^+$ and all $t\geq 0$ is convergent almost surely. We refer the interested readers to \cite{durrett2010} for further information on this result (martingale convergence theorem).

The main result in this section is that the sequence of the utilities $\{u(t)\}$ (which is indeed a stochastic process as $\{D(t)\}$ is a stochastic process) defined by \eqref{eqn:ut} is a submartignale when the reinforcement learning rule in Section~\ref{sec:learning} is utilized. As a result the proposed reinforcement learning rule \textit{stochastically} improves the efficiency of communication between the DBMS and the user. More importantly, this holds \textit{for an arbitrary reward/effectiveness measure $r$}. This is rather a very strong result as the algorithm is  robust to the choice of the reward mechanism.

 To show this, we discuss an intermediate result. For simplicity of notation, we fix the time $t$ and we use superscript $+$ to denote variables at time $(t+1)$ and drop the dependencies at time $t$ for variables depending on time $t$. Throughout the rest of our discussions, we let $\{\F_t\}$ be the natural filtration for the process $\{
D(t)\}$, i.e.\ $\F$ is the $\sigma$-algebra generated by $D(0),\ldots,D(t)$.

\begin{lemma}\label{lemma:expected}
  For any $\ell \in [m]$ and $j\in [n]$, we have
  \begin{align*}
    &E(D^+_{j\ell}\mid \F_t)-D_{j\ell}\\
    &\quad=D_{j\ell} \cdot \sum_{i=1}^m \pi_i U_{ij}\left( \frac{r_{i\ell}}{ \bar{R}_j+r_{il}} - \sum_{\ell' =1}^{o}  D_{j\ell'}\frac{r_{i\ell'}}{\bar{R}_j +r_{i\ell'}}\right),
  \end{align*}
  where
  $\bar{R}_j = \sum_{\ell' =1}^{o} R_{j\ell'}.$
\end{lemma}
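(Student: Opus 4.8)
The plan is to compute the conditional expectation directly, by enumerating the outcomes of a single interaction at time $t$ and tracking exactly how the entry $D_{j\ell}$ is updated. First I would observe that only row $q(t)$ of the reward matrix $R$ (and hence of $D$) is touched at each step, so $D_{j\ell}$ can change only on the event $\{q(t)=j\}$. Conditioning on $\F_t$, the joint event that the user's intent is $i$ and her submitted query is $j$ has probability $\pi_i U_{ij}$ (as $U$ is fixed), and given that the query is $j$ the DBMS returns result $i'$ with probability $D_{ji'}$. On this event the only reward added is $r_{ii'}$ to the entry $R_{ji'}$, so $\bar R_j^{+}=\bar R_j + r_{ii'}$ while $R_{j\ell}^{+}=R_{j\ell}+r_{i\ell}\,\mathbf{1}[i'=\ell]$.

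Next I would package the updated entry uniformly as
\[
D_{j\ell}^{+}(i,i') = \frac{R_{j\ell} + r_{i\ell}\,\mathbf{1}[i'=\ell]}{\bar R_j + r_{ii'}},
\]
so that, splitting according to whether $q(t)=j$,
\[
E(D^{+}_{j\ell}\mid\F_t) = \Big(1-\sum_{i} \pi_i U_{ij}\Big)\,D_{j\ell} + \sum_{i} \pi_i U_{ij}\sum_{i'} D_{ji'}\,D_{j\ell}^{+}(i,i').
\]
Subtracting $D_{j\ell}$ absorbs the first term and the ``no-change'' contribution, leaving $\sum_{i}\pi_i U_{ij}\big[\sum_{i'}D_{ji'}\,D_{j\ell}^{+}(i,i') - D_{j\ell}\big]$, so it suffices to simplify the inner bracket for a fixed intent $i$.

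The key algebraic step, and the only place any care is needed, is this simplification. Substituting $R_{j\ell}=D_{j\ell}\bar R_j$ (from the definition of $D$) and peeling off the $i'=\ell$ term turns the bracket into
\[
D_{j\ell}\Big[\,\bar R_j\sum_{i'}\frac{D_{ji'}}{\bar R_j + r_{ii'}} + \frac{r_{i\ell}}{\bar R_j + r_{i\ell}} - 1\,\Big].
\]
Here I would invoke row-stochasticity $\sum_{i'}D_{ji'}=1$ to rewrite $-1$ as $-\sum_{i'}D_{ji'}$ and merge it into the first sum, producing the telescoping identity
\[
\bar R_j\sum_{i'}\frac{D_{ji'}}{\bar R_j + r_{ii'}} - 1 = \sum_{i'}D_{ji'}\,\frac{\bar R_j - (\bar R_j + r_{ii'})}{\bar R_j + r_{ii'}} = -\sum_{i'}D_{ji'}\,\frac{r_{ii'}}{\bar R_j + r_{ii'}}.
\]
Relabeling the summation index $i'$ as $\ell'$ and recombining yields exactly the claimed formula. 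There is no genuine obstacle beyond this cancellation: the proof is an elementary conditional-expectation computation, with the uniform indicator representation of $D_{j\ell}^{+}$ and the row-stochasticity identity being the two devices that keep the bookkeeping clean. I would remark that positivity of $R(0)$ guarantees $\bar R_j>0$ throughout, so every denominator is well defined.
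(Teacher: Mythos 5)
Your proposal is correct and follows essentially the same route as the paper's proof: both condition on the outcome of a single interaction (intent $i$, query $j$, returned result), use the identity $R_{j\ell}=D_{j\ell}\bar{R}_j$, and reduce the claim to the same elementary cancellation, with your indicator packaging and row-stochasticity telescoping being merely a cleaner organization of the paper's case-by-case bookkeeping over the events $A_{i,\ell'}$ and $A^c$.
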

\begin{proof}
  Fix $\ell\in [m]$ and $j\in [n]$. Let $A$ be the event that at the $t$'th iteration, we reinforce a pair $(j,\ell')$ for some $\ell' \in [m]$. Then on the complement $A^c$ of $A$, $D^+_{j\ell}(\omega)=D_{j\ell}(\omega)$. Let $A_{i,\ell'} \subseteq A$ be the subset of $A$ such that the intent of the user is $i$ and the pair $(j,\ell')$ is reinforced. Note that the collection of sets $\{A_{i,\ell'}\}$ for  $i, \ell' \in [m],$ are pairwise mutually exclusive and their union constitute the set $A.$

  We note that
  \begin{align*}
    D^+_{j\ell}&=\sum_{i=1}^m \left(\frac{R_{j\ell}+r_{il}}{\bar{R}_{j}+r_{i\ell}}1_{A_{i,\ell}}+
    \sum_{\substack{\ell' =1 \\ \ell'\neq \ell}}^{{o}} \frac{R_{j\ell}}{\bar{R}_{j}+ r_{i \ell'}}1_{A_{i,\ell'}}\right)\cr
    &\qquad +
    D_{j\ell}1_{A^c}.
  \end{align*}
  Therefore, we have
  \begin{align*}
    &E(D^+_{j\ell}\mid \F_t)= \sum_{i=1}^{m} \pi_iU_{ij}D_{j\ell}\frac{R_{j\ell}+r_{i\ell}}{\bar{R}_{j}+r_{i\ell}}\cr
    &+\sum_{i=1}^{m} \pi_iU_{ij}\sum_{\ell \not=\ell'}D_{j\ell' }\frac{R_{j\ell}}{\bar{R}_{j}+r_{i\ell'}}+
    (1-p)D_{j\ell},
  \end{align*}
  where $p=\mathbb{P}(A\mid \F)$. Note that $D_{j\ell}=\frac{R_{ji}}{\bar{R}_j}$ and hence,
  \begin{align*}
    E(D^+_{j\ell}\mid \F_t)-D_{j\ell}&=\sum_{i=1}^m \pi_iU_{ij}D_{j\ell}\frac{r_{i\ell}\bar{R}_j - R_{j\ell}}{\bar{R}_j(\bar{R}_j +r_{i\ell})} \cr
    &\qquad-\sum_{i=1}^m \pi_iU_{ij}\sum_{\ell\neq \ell'} D_{j\ell'} \frac{R_{j\ell} r_{i\ell'}}{\bar{R}_j(\bar{R}_j +r_{i\ell'})}.
  \end{align*}
  Replacing $\frac{R_{jl}}{\bar{R}_j}$ with $D_{j\ell}$ and rearranging the terms in the above expression, we get the result.
\end{proof}
To show the main result, we will utilize the following result in martingale theory.
\begin{theorem}\label{thrm:convergence}\cite{robbins1985convergence}
A random process $\{X_t\}$ converges almost surely if $X_t$ is bounded, i.e., $E(|X_t|) <B$ for some $B\in \R^+$ and all $t\geq 0$ and
\begin{align}
E(X_{t+1}| \F_t)  \geq X_t - \beta_t
\end{align}
where $\beta_t\geq 0$ is a summable sequence almost surely, i.e., $\sum_t \beta_t < \infty$ with probability~$1.$
\end{theorem}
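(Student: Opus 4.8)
The plan is to turn the ``almost submartingale'' hypothesis into an honest submartingale by absorbing the perturbations into a compensator, and then to invoke the martingale convergence theorem recalled just above the statement. Set $A_t:=\sum_{s=0}^{t-1}\beta_s$; assuming, as is standard, that each $\beta_t$ is $\F_t$-measurable, $A_t$ is a nondecreasing $\F_t$-adapted process with $A_t\uparrow A_\infty<\infty$ almost surely. Define $Y_t:=X_t+A_t$. Using $A_{t+1}\in\F_t$ together with $E(X_{t+1}\mid\F_t)\geq X_t-\beta_t$ gives $E(Y_{t+1}\mid\F_t)=E(X_{t+1}\mid\F_t)+A_{t+1}\geq (X_t-\beta_t)+(A_t+\beta_t)=Y_t$, so $\{Y_t\}$ is a submartingale. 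Because $A_t\to A_\infty$ with $A_\infty$ finite almost surely, $\{X_t\}$ converges almost surely if and only if $\{Y_t\}$ does, so it suffices to establish convergence of $\{Y_t\}$.

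The naive next step would be to apply the recalled convergence theorem directly to $\{Y_t\}$: since $Y_t^+\leq |X_t|+A_t$, it would be enough to have $\sup_t E(A_t)<\infty$. This is exactly where the main obstacle sits. The hypothesis only supplies $\sum_t\beta_t<\infty$ \emph{almost surely}, not $\sum_t E(\beta_t)<\infty$, so $E(A_t)$ may diverge and the theorem cannot be applied as is. I would resolve this by a localization (stopping-time truncation) argument that trades almost-sure summability for a uniform pathwise bound.

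Concretely, for each $K\in\mathbb{N}$ define the stopping time $\tau_K:=\inf\{t\geq 0:A_{t+1}>K\}$, which is $\F_t$-adapted since $A$ is. The stopped process $\{Y_{t\wedge\tau_K}\}$ is again a submartingale, and by the definition of $\tau_K$ we have $A_{t\wedge\tau_K}\leq K$ for all $t$. Combined with the (uniform) boundedness of $X_t$ --- which in our application is immediate, as $X_t=u(t)$ is a bounded payoff --- this gives $E\big((Y_{t\wedge\tau_K})^+\big)\leq B+K<\infty$ uniformly in $t$, hence $\{Y_{t\wedge\tau_K}\}$ is bounded in $L^1$ and the recalled martingale convergence theorem yields its almost-sure convergence for every fixed $K$.

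It then remains to strip off the localization. On the probability-one event $\{A_\infty<\infty\}$, each sample path $\omega$ satisfies $\tau_K(\omega)=\infty$ as soon as $K>A_\infty(\omega)$, and for such $K$ we have $Y_{t\wedge\tau_K}(\omega)=Y_t(\omega)$ for all $t$; thus the almost-sure convergence of the stopped processes transfers to $\{Y_t\}$ on a set of probability one. Finally $X_t=Y_t-A_t$ converges almost surely, being the difference of two almost surely convergent sequences. The only genuinely delicate point is the passage from almost-sure to $L^1$ summability of $\{\beta_t\}$, which the stopping-time truncation is designed to handle; the remaining steps are the routine bookkeeping of conditional expectations and an appeal to the $L^1$-bounded submartingale convergence theorem already stated in the paper.
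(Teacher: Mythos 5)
Your proposal cannot really be compared against the paper's own argument, because the paper gives none: the statement is quoted with a citation as a weakened form of the Robbins--Siegmund convergence theorem, and no proof is supplied. Your compensator-plus-localization argument is therefore a self-contained addition, and its structure is sound: assuming $\beta_t$ is $\F_t$-measurable, the stopped process $Y_{t\wedge\tau_K}$ is an honest submartingale (note that the unstopped $Y_t$ need not even be integrable, since $E(\beta_t)$ may be infinite, so the submartingale property should only be asserted after stopping, where the increment $1_{\{\tau_K>t\}}(X_{t+1}-X_t+\beta_t)$ is bounded), it is bounded in $L^1$ once one has a pointwise bound on $X_t$, Doob's theorem applies, and the de-localization on the probability-one event $\{A_\infty<\infty\}$ is correct.

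The one substantive issue is the point you only half-acknowledge: your $L^1$ bound on the stopped process uses \emph{uniform pointwise} boundedness of $X_t$, whereas the theorem hypothesizes only $E(|X_t|)<B$. This is not a gap you could close with more work, because the theorem as literally stated is false. Counterexample: work in cycles $k=1,2,\dots$ with independent uniform random variables $V_k$ and $\epsilon_k=2^{-k}$; set $X_{3k}=0$, $X_{3k+1}=1$, $X_{3k+2}=\epsilon_k^{-1}1_{\{V_k\le\epsilon_k\}}$, and $X_{3k+3}=0$. The step into time $3k+1$ has upward drift (no $\beta$ needed), the step into time $3k+2$ is an exact martingale step since $E(\epsilon_k^{-1}1_{\{V_k\le\epsilon_k\}})=1$, and the step into time $3k+3$ is compensated by $\beta_{3k+2}=\epsilon_k^{-1}1_{\{V_k\le\epsilon_k\}}$, with all other $\beta_t=0$. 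Then $E(|X_t|)\le 1$ for all $t$, the $\beta_t$ are adapted, and $\sum_t\beta_t<\infty$ almost surely by Borel--Cantelli, yet every sample path satisfies $X_{3k}=0$ and $X_{3k+1}=1$ for all $k$, so $X_t$ converges nowhere. Hence the hypothesis must genuinely be strengthened, either to a pointwise bound on $X_t$ (the analogue of the pointwise nonnegativity required by Robbins--Siegmund) or to summability in expectation, $\sum_t E(\beta_t)<\infty$. Since in the paper's application $X_t=u(t)$ is a payoff that is pointwise bounded, your argument proves exactly the statement the paper needs; you should simply say explicitly that you are proving the theorem under the (necessary) pointwise-boundedness hypothesis rather than the stated $L^1$ one, and flag that the stated version is defective.
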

Note that this result is a weaker form of the Robins-Siegmund martingale convergence theorem in {\cite{robbins1985convergence}} but it will serve for the purpose of our discussion.

Using Lemma~\ref{lemma:expected} and the above result, we show that up to a summable disturbance, the proposed learning mechanism is stochastically improving.
\begin{theorem}\label{thrm:submartingale}
  Let $\{u(t)\}$ be the sequence given by \eqref{eqn:ut}. Then,
  \[E(u(t+1\mid \F_t)\geq E(u(t)\mid \F_t)-\beta_t,\]
  for some non-negative random process $\{\beta_t\}$ that is summable (i.e.\ $\sum_{t=0}^\infty\beta<\infty$ almost surely). As a result $\{u(t)\}$ converges almost surely.
\end{theorem}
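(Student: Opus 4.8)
The plan is to exhibit the one-step conditional increment of $u(t)$ as a non-negative quantity minus a summable perturbation, and then apply Theorem~\ref{thrm:convergence}. Boundedness is immediate: since $r$ is bounded (say by a constant $r_{\max}$) and $U,\pi,D(t)$ are (row-)stochastic, $u(t)\in[0,r_{\max}]$, so $E(|u(t)|)<\infty$ for all $t$ and the integrability hypothesis holds automatically. All the content lies in producing the right $\{\beta_t\}$.

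First I would compute $E(u(t+1)\mid\F_t)-u(t)$. Because $u_r(U,\cdot)$ is linear in $D$ while $U,\pi,r$ are deterministic, the increment equals $\sum_i\pi_i\sum_j U_{ij}\sum_\ell r_{i\ell}\big(E(D^+_{j\ell}\mid\F_t)-D_{j\ell}\big)$, into which I substitute the closed form from Lemma~\ref{lemma:expected}. Fixing a query $j$ and writing $F^{(j)}_\ell:=\sum_i\pi_iU_{ij}r_{i\ell}$ and $G^{(j)}_\ell:=\sum_i\pi_iU_{ij}\frac{r_{i\ell}}{\bar R_j+r_{i\ell}}$, the contribution of $j$ collapses into the single covariance $\mathrm{Cov}_{D_{j\cdot}}(F^{(j)},G^{(j)})$ taken under the distribution $D_{j\cdot}$. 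The key algebraic move is the identity $\frac{r}{\bar R+r}=\frac{r}{\bar R}-\frac{r^2}{\bar R(\bar R+r)}$, which splits $G^{(j)}=\frac{1}{\bar R_j}F^{(j)}-H^{(j)}$ with $H^{(j)}_\ell:=\sum_i\pi_iU_{ij}\frac{r_{i\ell}^2}{\bar R_j(\bar R_j+r_{i\ell})}\ge 0$. This gives
\[E(u(t+1)\mid\F_t)-u(t)=\sum_j\Big(\tfrac{1}{\bar R_j}\mathrm{Var}_{D_{j\cdot}}(F^{(j)})-\mathrm{Cov}_{D_{j\cdot}}(F^{(j)},H^{(j)})\Big).\]
The first summand is a variance, hence non-negative; this is precisely the statement that the rule is \emph{stochastically improving}, and, crucially, it is a variance for \emph{any} choice of $r$, which is the source of the robustness claimed in the theorem. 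Setting $\beta_t:=\sum_j\big|\mathrm{Cov}_{D_{j\cdot}}(F^{(j)},H^{(j)})\big|\ge 0$ yields $E(u(t+1)\mid\F_t)\ge u(t)-\beta_t$ directly.

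It remains to show $\sum_t\beta_t<\infty$ almost surely, which I expect to be the main obstacle. From $0\le H^{(j)}_\ell\le r_{\max}^2/\bar R_j^2$ and $|F^{(j)}_\ell|\le r_{\max}$ one bounds $\beta_t\le C\sum_j \bar R_j(t)^{-2}$, so summability reduces to controlling the growth of the cumulative rewards $\bar R_j(t)=\sum_\ell R_{j\ell}(t)$. Here the fixed-$U$ assumption of this subsection is essential: the indicator that query $j$ is submitted at round $t$ is i.i.d.\ Bernoulli with parameter $p_j=\sum_i\pi_iU_{ij}$, so by the strong law every query with $p_j>0$ is played a number of times growing linearly in $t$, whence $\bar R_j(t)$ grows linearly almost surely and $\sum_t \bar R_j(t)^{-2}<\infty$; queries with $p_j=0$ carry $U_{ij}=0$ and contribute nothing to $u$ or to $\beta_t$. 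Hence $\{\beta_t\}$ is summable almost surely. Finally, having verified that $\{u(t)\}$ is bounded and satisfies $E(u(t+1)\mid\F_t)\ge u(t)-\beta_t$ with $\beta_t\ge0$ summable, Theorem~\ref{thrm:convergence} delivers the almost sure convergence of $\{u(t)\}$.
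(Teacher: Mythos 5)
Your decomposition is exactly the paper's: your $F^{(j)}$, $G^{(j)}$, $H^{(j)}$ are the paper's $y_{j\cdot}$, $z_{j\cdot}$, $\tilde z_{j\cdot}$; the split $G^{(j)}=\frac{1}{\bar R_j}F^{(j)}-H^{(j)}$ is precisely the passage from \eqref{eqn:submartingaleshort} to \eqref{eqn:submalast}; your variance and covariance terms are the paper's $V_t$ and $-\tilde V_t$; and you reduce summability to the same bound $\beta_t\le C\sum_j \bar R_j(t)^{-2}$ before invoking Theorem~\ref{thrm:convergence}. The one place you diverge from the paper is the final summability argument, and that is where there is a genuine gap.

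You argue that, since query $j$ is submitted i.i.d.\ with probability $p_j=\sum_i\pi_iU_{ij}$, every query with $p_j>0$ is played linearly often, ``whence $\bar R_j(t)$ grows linearly.'' That inference fails: $\bar R_j$ increases by the \emph{realized reward} $r_{i\ell}$, not by a fixed positive amount per play, and $r_{i\ell}$ may be zero. The theorem is advertised as holding for an arbitrary effectiveness measure --- including precision (zero whenever no relevant tuple is returned) and the $0/1$ measure used in Section~\ref{sec:userlearning} --- so zero-reward interactions are squarely in scope, and a query can be played linearly often while $\bar R_j(t)$ stays bounded, in which case $\sum_t\bar R_j(t)^{-2}=\infty$ and your bound on $\beta_t$ gives nothing. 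Your dichotomy ($p_j>0$ versus $p_j=0$) therefore does not cover all cases. The correct split is on whether row $j$ can ever be positively reinforced: if $r_{i\ell}=0$ for every $i$ with $\pi_iU_{ij}>0$ and every $\ell$, then $F^{(j)}\equiv 0$ and $H^{(j)}\equiv 0$, so that query contributes exactly zero to $\beta_t$ and can be discarded; otherwise one must show that the \emph{conditional} probability of a positive increment stays bounded away from zero. This does hold, but it requires a structural fact about the learning rule: entries $R_{j\ell}$ that can never be reinforced keep their initial values, so the $D_{j\cdot}$-mass on reinforceable results equals $1-\mathrm{const}/\bar R_j(t)$ and never falls below its initial positive value; hence an increment of at least $\epsilon_j:=\min\{r_{i\ell}: r_{i\ell}>0\}$ occurs with conditional probability at least some fixed $p>0$ at every round. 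Moreover, these reward events are not i.i.d.\ (their probabilities depend on $D(t)$), so the plain strong law you invoke is not applicable even after this correction; one needs the Borel--Cantelli lemma for adapted processes (equivalently a conditional law of large numbers). This is exactly the route the paper takes: it asserts that $\bar R_j(t+1)=\bar R_j(t)+\epsilon_t$ with $\epsilon_t\ge\epsilon>0$ occurring with probability $p_t\ge p>0$, and then applies adapted Borel--Cantelli to conclude $\sum_t\bar R_j(t)^{-2}<\infty$. (The paper states that positive-probability property with little justification, but it is the right intermediate claim; your replacement for it is a false implication.)
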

\begin{proof}
Let $u^+:=u(t+1)$, $u:=u(t)$,
\[u^j:=u^j(U(t),D(t))=\sum_{i=1}^m\sum_{\ell=1}^{o}\pi_iU_{ij}D_{j\ell}r_{i\ell(t)},\]
and also define $\bar{R}_j:=\sum_{\ell'=1}^mR_{j\ell'}$. Note that $u^j$ is the efficiency of the $j$th signal/query.

Using the linearity of conditional expectation and Lemma~\ref{lemma:expected}, we have:
\begin{align}\label{eqn:submartingale}
    &E(u^+\mid \F_t)-u=\sum_{i=1}^m\sum_{j=1}^n\pi_iU_{ij} \sum_{\ell=1}^{o} r_{i\ell'} \left(E(D^+_{j\ell}\mid \F_t)-D_{j\ell}\right)\cr
    &=\sum_{i=1}^m\sum_{j=1}^n\sum_{\ell=1}^{o} \pi_iU_{ij} D_{j\ell}r_{i\ell}\left( \sum_{i'=1}^m \pi_i' U_{i'j}\left( \frac{r_{i'\ell}}{ \bar{R}_j+r_{i'\ell}} \right.\right. \cr
    & \left.\left. \qquad\qquad\qquad  -\sum_{\ell' =1}^{o}  D_{j\ell'}\frac{r_{i'\ell'}}{\bar{R}_j +r_{i'\ell'}}\right)\right).
 \end{align}
 Now, let $y_{j\ell} = \sum_{i=1}^m \pi_i U_{ij}r_{i\ell}$ and $z_{j\ell} = \sum_{i=1}^m \pi_i U_{ij} \frac{r_{i\ell}}{ \bar{R}_j+r_{i\ell}}$. Then, we get from the above expression that
 \begin{align}\label{eqn:submartingaleshort}
    &E(u^+\mid \F_t)-u =\cr
    &\qquad \sum_{j=1}^n\left(\sum_{\ell=1}^{o} D_{j\ell} y_{i\ell}z_{j\ell}-
     \sum_{\ell=1}^{o} D_{j\ell}y_{j\ell}\sum_{\ell' =1}^{o}  D_{j\ell'}z_{j\ell'}\right).
\end{align}

Now, we express the above expression as
\begin{equation}\label{eqn:submalast}
E(u^+\mid \F_t)-u = V_t + \tilde{V}_t
\end{equation}
where
\[V_t = \sum_{j=1}^n \frac{1}{\bar{R}_j} \left(\sum_{\ell=1}^{o} D_{j\ell} y_{j\ell}^2 - \left(\sum_{l=1}^{o} D_{j\ell}y_{j\ell}\right)^2\right),\]
and
\begin{align}\label{eqn:tildeV}
\tilde{V}_t = \sum_{j=1}^n\left( \sum_{\ell=1}^{o} D_{j\ell} y_{j\ell} \sum_{\ell' =1}^{o} D_{j\ell'}\tilde{z}_{j\ell'} - \sum_{\ell=1}^m D_{j\ell} y_{j\ell} \tilde{z}_{j\ell} \right).
\end{align}
Further,  $\tilde{z}_{j\ell} = \sum_{i=1} \pi_i U_{ij} \frac{r_{i\ell}^2}{\bar{R}_j (\bar{R}_j+r_{i\ell})}.$

We claim that $V_t \geq 0$ for each $t$ and $\{\tilde{V}_t\}$ is a summable sequence almost surely. Then, from \eqref{eqn:submalast} and Theorem~\ref{thrm:convergence}, we get that $\{u_t\}$ converges almost surely and it completes the proof. Next, we validate our claims.

We first show that $V_t \geq 0, \forall t.$ Note that $D$ is a row-stochastic matrix and hence, $\sum_{\ell=1}^{o}D_{j\ell}=1$. Therefore, by the Jensen's inequality \cite{durrett2010}, we have:
\begin{align*}
\sum_{\ell=1}^{o}D_{j\ell}(y_{j\ell})^2\geq \sum_{\ell=1}^{o}(D_{j\ell}y_{j\ell})^2.
\end{align*}
Hence, $V \geq 0.$

We next claim that $\{\tilde{V}_t\}$ is a summable sequence with probability one. It can be observed from \eqref{eqn:tildeV} that
\begin{equation}
V_t \leq \sum_{j=1}^{o} \frac{o^2n}{\bar{R}_j^2}.
\end{equation}
since $y_{j\ell} \leq 1, \tilde{z}_{j\ell} \leq \bar{R}_j^{-2}$ for each $j \in [n], \ell \in [m]$ and $D$ is a row-stochastic matrix. To prove the claim, it suffices to show that for each $j \in [m]$, the sequence $\{\frac{1}{R^2_{j}(t)}\}$ is summable. Note that for each $j \in [m]$ and for each $t$, we have $\bar{R}_j(t+1) = \bar{R}_j(t) + \epsilon_t$ where $\epsilon_t \geq \epsilon >0$ with probability $p_t \geq p >0$. Therefore, using the Borel-Cantelli Lemma for adapted processes \cite{durrett2010} we have $\{\frac{1}{R^2_{j}(t)}\}$ is summable which concludes the proof.
\end{proof}

The above result implies that the effectiveness of the DBMS, stochastically speaking, increases as time progresses when the learning rule in Section~\ref{sec:learning} is utilized. Not only that, but this property does not depend on the choice of the effectiveness function (i.e.\ $r_{i\ell}$ in this case). This is indeed a desirable property for any adapting scheme for DBMS adaptation.
\subsection{User Adaptation}
 \label{sec:userlearning}
Here, we consider the case that the user also adapts to the DBMS's strategy. 
When the user submits a query $q$ and the DBMS returns a result 
that fully satisfy the intent behind the query $e$, 
it is relatively more likely that the user will use the query $q$ 
to express $e$ again during her interaction with the DBMS. 
On the other hand, if the DBMS returns a result that does not
contain any tuple that is relevant to the $e$, 
it less likely that the user expresses $e$ using $q$ in future.
In fact, researchers have observed that users show reinforcement 
learning behavior when interacting 
with a DBMS over a period of time~\cite{cen2013reinforcement}. 
In particular, the authors in \cite{cen2013reinforcement} have shown
that some groups of users learned to formulate queries with a model 
similar to Roth-Erev reinforcement learning.
We define the similarity measure as follows. For simplicity we assume that $m=o$ and use the following similarity measure:

 
 \begin{align*}
   r_{i\ell}=\left\{\begin{array}{ll}
                      1 & \mbox{if $i=\ell$},  \\
                      0 & \mbox{otherwise}
                    \end{array}\right..
 \end{align*}

  In this case, we assume that the user adapts to the DBMS strategy at time steps $0<t_1<\cdots<t_k<\cdots$ and in those time-steps the DBMS is not adapting as there is no reason to assume the synchronicity between the user and the DBMS. The reinforcement learning mechanism for the user is as follows:
 \begin{enumerate}[a.]
  \item Let $S(0)>0$ be an $m\times n$ reward matrix whose entries are strictly positive.
    \item Let $U(0)$ be the initial user's strategy with $$U_{ij}(0)=\frac{S_{ij}(0)}{\sum_{j'=1}^{n}S_{ij'}(0)}$$ for all $i\in [m]$ and $j\in [n].$ and let $U(t_k)=U(t_k-1)=\cdots=U(t_{k-1}+1)$ for all $k$.
     \item For all $k \geq 1,$ do the following:
    \begin{enumerate}[i.]
      \item The nature picks a random intent $t \in [m]$ with probability $\pi_i$ (independent of the earlier choices of the nature) and the user picks a query $j\in [n]$ with probability
      \begin{align*}
        &P(q(t_k)=j\mid i(t_k)=i)=U_{ij}(t_k).
      \end{align*}
      \item The DBMS uses the current strategy $D(t_k)$ and interpret the query by the intent $i'(t)=i'$ with probability
      \begin{align*}
      &U(i'(t_k)=i'\mid q(t_k)=j)=D_{ji'}(t_k).
      \end{align*}
      
      \item User gives a reward $1$ if $i=i'$ and otherwise, gives no rewards, i.e. 
      \begin{align*}
        S_{ij}^+=\left\{\begin{array}{ll}
            S_{ij}(t_k) + 1 &\mbox{if $j=q(t_k)$ and $i(t_k)=i'(t_k)$}\\
            S_{ij}(t_k)&\mbox{otherwise}
        \end{array}\right.
      \end{align*}
      where $S_{ij}^+= S_{ij}(t_k+1)$. 
      \item Update the user's strategy by
      \begin{align}\label{eqn:udateruleuser}
        U_{ij}(t_k+1)=\frac{U_{ji}(t_k+1)}{\sum_{j'=1}^{n}S_{ij'}(t_k+1)},
      \end{align}
      for all $i\in [m]$ and $j\in [n]$.
    \end{enumerate}
 \end{enumerate}

 In the above scheme $S(t)$ is the reward matrix at time $t$ for the user. 

\subsection{Analysis of User and DBMS Adaptation}
In this section, we provide an analysis of the reinforcement mechanism provided above and will show that, statistically speaking, 
our proposed adaptation rule for DBMS, even when the user adapts, leads to improvement of the efficiency of the interaction. 
With a slight abuse of notation, let 
\begin{align}\label{eqn:utnew}
  u(t):=u_r(U,D(t))=u_r(U(t),D(t)),
\end{align}
for an effectiveness measure $r$ as $u_r$
is defined in \eqref{eqn:payoff}.

\begin{lemma} 
Let $t = t_k$ for some~$k \in \mathbb{N}.$ Then, for any $i \in [m]$ and $j\in [n],$ we have 
\begin{align}\label{eq:lem}
E(U_{ij}^+ \mid \F_{t})  - U_{ij} 
= \frac{\pi_i U_{ij}}{\sum_{\ell=1}^n S_{i\ell}+1} (D_{ji} - u^i(t))
\end{align}
where 
\begin{equation*}
u^i(t) = \sum_{j=1}^n U_{ij}(t) D_{ji}(t).
\end{equation*}
\end{lemma}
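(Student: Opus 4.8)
The plan is to compute the one-step conditional expectation directly, by conditioning on $\F_t$ and enumerating the mutually exclusive events that can change the $i$-th row of the user's reward matrix $S$, in exact analogy with the proof of Lemma~\ref{lemma:expected} but now for the user update rule \eqref{eqn:udateruleuser}. Fix $i\in[m]$ and $j\in[n]$. Conditioned on $\F_t$ we know $S(t)$ (hence $U(t)$) and, since the DBMS does not adapt at the times $t_k$, also $D$. The first observation is that the entire $i$-th row of $S$, and therefore $U_{ij}$, is left untouched unless nature draws intent $i$ at step $t_k$. As nature picks $i$ with probability $\pi_i$ independently of the past, this immediately gives $E(U_{ij}^+\mid\F_t)-U_{ij}=\pi_i\bigl(E(U_{ij}^+\mid i(t_k)=i,\F_t)-U_{ij}\bigr)$, so it suffices to analyze the single-intent conditional expectation.

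Next I would partition the event $\{i(t_k)=i\}$ into four cases according to whether the user submits $q=j$ (probability $U_{ij}$) or $q=j'\neq j$ (probability $U_{ij'}$), and whether the DBMS decodes the query back to $i$ or not (probabilities $D_{ji}$ and $1-D_{ji}$, respectively $D_{j'i}$ and $1-D_{j'i}$). Writing $\bar S_i:=\sum_{\ell=1}^n S_{i\ell}$ and $U_{ij}=S_{ij}/\bar S_i$, the reward equals $1$ precisely when the decoded intent matches $i$, and only then does the denominator grow to $\bar S_i+1$; moreover $S_{ij}$ itself increases only in the subcase $(q=j,\ \text{decode}=i)$. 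Recording the resulting value of $U_{ij}^+$ in the four cases, namely $\frac{S_{ij}+1}{\bar S_i+1}$, $\frac{S_{ij}}{\bar S_i}$, $\frac{S_{ij}}{\bar S_i+1}$, and $\frac{S_{ij}}{\bar S_i}$, and weighting by the case probabilities yields an explicit formula for $E(U_{ij}^+\mid i(t_k)=i,\F_t)$.

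The decisive step is the algebraic collapse. I would split the two reinforcing contributions using $\frac{S_{ij}+1}{\bar S_i+1}=U_{ij}+\frac{\bar S_i-S_{ij}}{\bar S_i(\bar S_i+1)}$ and $\frac{S_{ij}}{\bar S_i+1}=U_{ij}-\frac{S_{ij}}{\bar S_i(\bar S_i+1)}$. The leading $U_{ij}$ parts combine with the two non-reinforcing cases so that, using row-stochasticity $\sum_{j'}U_{ij'}=1$, everything cancels against $-U_{ij}$ and telescopes away; the remaining correction terms, after substituting $U_{ij}=S_{ij}/\bar S_i$ and factoring $\frac{1}{\bar S_i(\bar S_i+1)}$, reassemble into $\frac{U_{ij}}{\bar S_i+1}\bigl(D_{ji}-\sum_{j'}U_{ij'}D_{j'i}\bigr)=\frac{U_{ij}}{\bar S_i+1}\bigl(D_{ji}-u^i(t)\bigr)$. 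Multiplying by $\pi_i$ gives \eqref{eq:lem}.

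I expect the only real obstacle to be the careful bookkeeping of the four cases and, in particular, recognizing that the cross terms must be regrouped as $u^i(t)=\sum_{j'}U_{ij'}D_{j'i}$ rather than left split between the $j$ and $j'\neq j$ contributions; once that recombination is seen, the cancellation is forced and the rest follows the same pattern already established in Lemma~\ref{lemma:expected}.
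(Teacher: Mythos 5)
Your proposal is correct and follows essentially the same route as the paper: both proofs condition on $\F_t$, enumerate the mutually exclusive events under which row $i$ of $S$ is reinforced (reward $1$ only when the decoded intent equals the true intent $i$), record the resulting values $\frac{S_{ij}+1}{\bar S_i+1}$, $\frac{S_{ij}}{\bar S_i+1}$, or $U_{ij}$, and collapse the expectation using $U_{ij}=S_{ij}/\bar S_i$ together with row-stochasticity to reach $\frac{\pi_i U_{ij}}{\bar S_i+1}\bigl(D_{ji}-u^i(t)\bigr)$. The only difference is bookkeeping — you factor out $\pi_i$ first and split into four cases, while the paper uses three events $B_1$, $B_2$, $B^c$ with the $\pi_i$ kept inside the probabilities — which does not change the substance of the argument.
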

\begin{proof}
Fix $i\in [m], j\in [n]$ and $k \in \mathbb{N}$. Let $B$ be the event that at the $t_k$'th iteration, user reinforces a pair $(i,\ell)$ for some $\ell \in [n]$. Then, on the complement $B^c$ of $B$, $P^+_{ij}(\omega)=P_{ij}(\omega)$. Let $B_1\subseteq B$ be the subset of $B$ such that the pair $(i,j)$ is reinforced and $B_2=B \setminus B_1$ be the event that some other pair $(i, \ell)$ is reinforced for $\ell\not=i$.

  We note that
  \begin{align*}
    U^+_{ij}=\frac{S_{ij}+1}{\sum_{\ell=1}^n S_{i\ell}+1}1_{B_1}+
    \frac{S_{ij}}{\sum_{\ell=1}^n S_{i\ell}+1}1_{B_2}+
    U_{ij}1_{B^c}.
  \end{align*}
  Therefore, we have
  \begin{align*}
    &E(U^+_{ij}\mid \F_{k_t})=\pi_iU_{ij}D_{ji}\frac{S_{ij}+1}{\sum_{\ell=1}^nS_{i\ell}+1}\cr
    &+\sum_{\ell\not=j}\pi_{i}U_{i\ell}D_{\ell i}\frac{S_{ij}}{\sum_{\ell'=1}^nS_{i\ell'}+1}+
    (1-p)U_{ij},
  \end{align*}
  where $p=U(B \mid \F_{k_t}) = \sum_{\ell} \pi_i U_{ij} D_{ji}$. Note that $U_{ij}=\frac{S_{ij}}{\sum_{\ell=1}^n S_{i\ell}}$ and hence, 
  \begin{align*}
    E(U^+_{ij}\mid \F_t) & -U_{ij}= \cr
    &\frac{1}{\sum_{\ell'=1}^nS_{i\ell'}+1}\left(\pi_iU_{ij}D_{ji} - \pi_i U_{ij} \sum_{\ell} U_{i\ell}D_{\ell i}  \right).
  \end{align*}
  which can be rewritten as in \eqref{eq:lem}.
\end{proof}

Using Lemma~\ref{lemma:expected}, we show that the process $\{u(t)\}$ is a sub-martingale.
\begin{theorem}\label{thrm:submartingalenew}
Let $t = t_k$ for some $k \in \mathbb{N}.$ Then, we have
 \begin{align}
 E(u(t+1) \mid \F_t) - u(t) \geq 0
 \end{align} 
where $u(t)$ is given by \eqref{eqn:utnew}. 
\end{theorem}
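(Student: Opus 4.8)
The plan is to exploit the fact that at a user-adaptation step $t=t_k$ the DBMS strategy is frozen, so $D(t+1)=D(t)$ and the entire one-step change in $u$ is driven by the change in $U$ alone. Under the $0$-$1$ similarity measure $r_{i\ell}=1_{\{i=\ell\}}$ the payoff collapses, since $\sum_{\ell}D_{j\ell}r_{i\ell}=D_{ji}$, to $u(t)=\sum_{i=1}^m\pi_i\sum_{j=1}^nU_{ij}(t)D_{ji}(t)=\sum_{i=1}^m\pi_i u^i(t)$. First I would write the expected increment as a $D$-weighted sum of the expected increments of the user entries, using that $D$ is measurable w.r.t.\ $\F_t$ and unchanged at this step:
$$E(u(t+1)\mid\F_t)-u(t)=\sum_{i=1}^m\pi_i\sum_{j=1}^n\big(E(U^+_{ij}\mid\F_t)-U_{ij}\big)\,D_{ji}.$$

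Next I would substitute the identity \eqref{eq:lem} from the preceding lemma, namely $E(U^+_{ij}\mid\F_t)-U_{ij}=\frac{\pi_i U_{ij}}{\sum_{\ell}S_{i\ell}+1}(D_{ji}-u^i(t))$. Setting $c_i:=\frac{\pi_i^2}{\sum_{\ell=1}^nS_{i\ell}+1}>0$, the increment becomes
$$E(u(t+1)\mid\F_t)-u(t)=\sum_{i=1}^m c_i\sum_{j=1}^n U_{ij}\,(D_{ji}-u^i(t))\,D_{ji}=\sum_{i=1}^m c_i\left(\sum_{j=1}^n U_{ij}D_{ji}^2-u^i(t)\sum_{j=1}^nU_{ij}D_{ji}\right).$$
Recognizing $\sum_{j}U_{ij}D_{ji}=u^i(t)$ by definition, the bracket simplifies to $\sum_{j}U_{ij}D_{ji}^2-(u^i(t))^2$.

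Finally I would invoke Jensen's inequality exactly as in the proof of Theorem~\ref{thrm:submartingale}: since $U$ is row-stochastic, $u^i(t)=\sum_j U_{ij}D_{ji}$ is a convex combination of the values $\{D_{ji}\}_j$, and convexity of $x\mapsto x^2$ yields $\sum_j U_{ij}D_{ji}^2\geq(\sum_jU_{ij}D_{ji})^2=(u^i(t))^2$. Hence each bracketed term is non-negative, and since $c_i>0$ the whole sum is non-negative, giving $E(u(t+1)\mid\F_t)-u(t)\geq 0$ as claimed.

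The main obstacle is purely organizational rather than analytic: one must keep careful track of the asynchronous schedule (the DBMS is frozen at $t_k$, so only $U$ varies) and correctly collapse the double sum over $(i,j)$ via $\sum_j U_{ij}D_{ji}=u^i(t)$, after which the inequality follows from the same one-line Jensen argument used earlier. It is worth noting that, unlike Theorem~\ref{thrm:submartingale}, no summable-disturbance term $\tilde V_t$ appears here: because the $0$-$1$ reward reinforces a single entry by exactly $1$, the increment is an \emph{exact} submartingale increment, so the conclusion $E(u(t+1)\mid\F_t)\geq u(t)$ holds without an error correction, and the almost-sure convergence of $\{u(t)\}$ then follows from the martingale convergence theorem since $\{u(t)\}$ is bounded.
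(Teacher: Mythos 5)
Your proof is correct and follows essentially the same route as the paper's: both decompose $E(u(t+1)\mid\F_t)-u(t)$ as a $\pi_i D_{ji}$-weighted sum of the expected increments of $U_{ij}$, substitute the identity \eqref{eq:lem}, collapse the result to $\sum_i \frac{\pi_i^2}{\sum_\ell S_{i\ell}+1}\left(\sum_j U_{ij}D_{ji}^2-(u^i)^2\right)$, and finish with Jensen's inequality applied to the row-stochastic rows of $U$. Your added observations (that $D$ is frozen at $t_k$, and that no summable-disturbance term $\tilde V_t$ is needed here, unlike in Theorem~\ref{thrm:submartingale}) are accurate and consistent with the paper, which records the almost-sure convergence in the subsequent corollary.
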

\begin{proof}
Fix $t = t_k$ for some $ k\in \mathbb{N}.$ Let $u^+:=u(t+1)$, $u:=u(t)$, $u^i:=u^i(U(t),D(t))$ and also define $\tilde{S}^i:={\sum_{\ell'=1}^mS_{i\ell'}+1}$. Then, using the linearity of conditional expectation and Lemma~\ref{lemma:expected}, we have:
\begin{align}\label{eqn:submartingale:2}
    &E(u^+\mid \F_t)-u=\sum_{i=1}^m\sum_{j=1}^n\pi_iD_{ji}\left(E(U^+_{ij}\mid \F_t)-U_{ij}\right)\cr
    &=\sum_{i=1}^m\sum_{j=1}^n\pi_iD_{ji} \frac{\pi_iU_{ij}}{\sum_{\ell'=1}^mS_{j\ell'}+1}\left(D_{ji}-u^i\right)\cr
    &=\sum_{i=1}^m\frac{\pi_i^2}{\tilde{S}^i}\left(\sum_{j=1}^n U_{ij}(D_{ji})^2-(u^i)^2\right).
\end{align}
Note that $U$ is a row-stochastic matrix and hence, $\sum_{i=1}^mU_{ij}=1$. Therefore, by the Jensen's inequality \cite{durrett2010}, we have:
\begin{align*}
\sum_{j=1}^n U_{ij}(D_{ji})^2\geq \left(\sum_{j=1}^nD_{ji}U_{ij}\right)^2=(u^i)^2.
\end{align*}
Replacing this in the right-hand-side of \eqref{eqn:submartingale:2}, we conclude that $E(u^+\mid \F_t)-u\geq 0$ and hence, the sequence $\{u(t)\}$ is a submartingale.
\end{proof}

\begin{corollary}
  The sequence $\{u(t)\}$ given by \eqref{eqn:ut} converges almost surely.
\end{corollary}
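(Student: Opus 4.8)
The plan is to deduce this corollary directly from the martingale machinery already assembled in this section, namely Theorem~\ref{thrm:convergence} (the weak Robbins--Siegmund criterion). The two ingredients I need are that $\{u(t)\}$ is essentially a submartingale and that it is uniformly bounded; the corollary is then immediate.

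First I would record the boundedness. Since $u(t)=u_r(U(t),D(t))$ is the bilinear form \eqref{eqn:payoff} evaluated at the row-stochastic matrices $U(t)$ and $D(t)$, and since in this regime the reward kernel is $r_{i\ell}\in\{0,1\}$ (recall $m=o$ and $r_{i\ell}=1$ iff $i=\ell$), we have
\[
0\le u(t)=\sum_{i=1}^m\pi_i\sum_{j=1}^nU_{ij}(t)\sum_{\ell=1}^oD_{j\ell}(t)\,r_{i\ell}\le\sum_{i=1}^m\pi_i=1.
\]
Hence $E(|u(t)|)\le 1$ for every $t$, so $\{u(t)\}$ is absolutely integrable with a uniform bound $B=1$.

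Next I would assemble the (sub)martingale inequality by splitting on which player updates. At the user-adaptation instants $t=t_k$, Theorem~\ref{thrm:submartingalenew} gives $E(u(t+1)\mid\F_t)\ge u(t)$, a genuine submartingale step with no disturbance. At the DBMS-adaptation instants, Theorem~\ref{thrm:submartingale} gives $E(u(t+1)\mid\F_t)\ge u(t)-\beta_t$ with $\{\beta_t\}$ non-negative and almost surely summable; at any step where neither player updates, $u(t+1)=u(t)$ trivially. Combining the three cases, $\{u(t)\}$ satisfies $E(u(t+1)\mid\F_t)\ge u(t)-\beta_t$ for a single non-negative, almost-surely summable sequence $\{\beta_t\}$ (inserting steps with $\beta_t=0$ leaves $\sum_t\beta_t$ unchanged). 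With boundedness and this almost-supermartingale inequality in hand, Theorem~\ref{thrm:convergence} applies and yields that $\{u(t)\}$ converges almost surely, which is the claim.

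The point requiring care---and the main obstacle---is the bookkeeping across the two interleaved adaptation regimes: I must check that the natural filtration $\{\F_t\}$ is genuinely common to both the DBMS updates \eqref{eqn:udaterule} and the user updates \eqref{eqn:udateruleuser}, so that conditioning on $\F_t$ is meaningful at every step, and that splicing the zero-disturbance user steps into the $\beta_t$-disturbed DBMS steps preserves summability of $\sum_t\beta_t$. Since the only genuinely nontrivial disturbance arises from the DBMS branch (where the earlier Borel--Cantelli argument already established $\sum_t 1/\bar R_j(t)^2<\infty$ almost surely), and the user branch contributes $\beta_t=0$, the combined sequence remains summable and the invocation of Theorem~\ref{thrm:convergence} is legitimate.
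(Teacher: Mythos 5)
Your proposal is correct and takes essentially the same route as the paper: the paper's own proof simply observes that Theorems~\ref{thrm:submartingale} and~\ref{thrm:submartingalenew} together show $\{u(t)\}$ satisfies the hypotheses of Theorem~\ref{thrm:convergence}, which is exactly the combination you assemble. Your explicit verification of the uniform bound $0\le u(t)\le 1$ and of the splicing of the zero-disturbance user steps into the summable $\beta_t$-disturbed DBMS steps merely fills in details the paper leaves implicit.
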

\begin{proof}
  Note from Theorem~\ref{thrm:submartingale} and \ref{thrm:submartingalenew} that the sequence $\{u(t)\}$ satisfies all the conditions of Theorem~\ref{thrm:convergence}.  Hence, proven.
\end{proof}

\section{Efficient Query Answering over Relational Databases}
\label{sec:relational}
An efficient implementation of the algorithm proposed in Section~\ref{sec:learning} over large relational databases poses two challenges. First, since the set of possible interpretations and their results for a given query is enormous, one has to find efficient ways of maintaining users' reinforcements and updating DBMS strategy. Second, keyword and other usable query interfaces over databases normally return the top-$k$ tuples according to some scoring functions~\cite{IRStyle,Chen:2009:KSS}. Due to a series of seminal works by database researchers~\cite{Fagin:2001:OAA:375551.375567}, there are efficient algorithms to find such a list of answers. Nevertheless, our reinforcement learning algorithm uses a randomized semantic for answering algorithms in which candidate tuples are associated a probability for each query that reflects the likelihood by which it satisfies the intent behind the query. The tuples must be returned randomly according to their associated probabilities. Using (weighted) sampling to answer SQL queries with aggregation functions approximately and efficiently is an active research area \cite{DBLP:conf/sigmod/ChaudhuriDK17,Idreos:SIGMOD:2015}. However, there has not been any attempt on using a randomized strategy to answer so-called point queries over relational data and achieve a balanced exploitation-exploration trade-off efficiently.

\subsection{Maintaining DBMS Strategy}
\label{sec:system}
\subsubsection{Keyword Query Interface}
\label{sec:keywordQueryInterface}
We use the current architecture of keyword query interfaces over relational databases that directly use schema information to interpret the input keyword query \cite{Chen:2009:KSS}. A notable example of such systems is IR-Style \cite{IRStyle}. As it is mentioned in Section~\ref{sec:DBStrategy}, given a keyword query, these systems translate the input query to a Select-Project-Join query whose {\it where} clause contains function $match$. The results of these interpretations are computed, scored according to some ranking function, and are returned to the user. We provide an overview of the basic concepts of such a system. We refer the reader to \cite{IRStyle,Chen:2009:KSS} for more explanation.

\subsubsection{Tuple-set:} Given keyword query $q$, a {\it tuple-set} is a set of tuples in a base relation that contain some terms in $q$. After receiving $q$, the query interface uses an inverted index to compute a set of tuple-sets. For instance, consider a database of products with relations {\it Product(pid, name)}, {\it Customer(cid, name)}, and {\it ProductCustomer(pid, cid)} where {\it pid} and {\it cid} are numeric strings. Given query {\it iMac John}, the query interface returns a tuple-set from {\it Product} and a tuple-set from {\it Customer} that match at least one term in the query. The query interface may also use a scoring function, e.g., traditional TF-IDF text matching score, to measure how exactly each tuple in a tuple-set matches some terms in $q$.

\subsubsection{Candidate Network:} A {\it candidate network} is a join expression that connects the tuple-sets via primary key-foreign key relationships. A candidate network joins the tuples in different tuple-sets and produces joint tuples that contain the terms in the input keyword query. One may consider the candidate network as a join tree expression whose leafs are tuple-sets. For instance, one candidate network for the aforementioned database of products is {\it Product} $\bowtie$ {\it ProductCustomer} $\bowtie$ {\it Customer}. To connect tuple-sets via primary key-foreign key links, a candidate network may include base relations whose tuples may not contain any term in the query, e.g., {\it ProductCustomer} in the preceding example. Given a set of tuple-sets, the query interface uses the schema of the database and progressively generates candidate networks that can join the tuple-sets. For efficiency considerations, keyword query interfaces limit the number of relations in a candidate network to be lower than a given threshold. For each candidate network, the query interface runs a SQL query and return its results to the users.There are algorithms to reduce the running time of this stage, e.g., run only the SQL queries guaranteed to produce top-$k$ tuples \cite{IRStyle}. Keyword query interfaces normally compute the score of joint tuples by summing up the scores of their constructing tuples multiplied by the inverse of the number of relations in the candidate network to penalize long joins. We use the same scoring scheme. We also consider each (joint) tuple to be candidate answer to the query if it contains at least one term in the query.

\subsubsection{Managing Reinforcements}
The aforementioned keyword query interface implements a basic DBMS strategy of mapping queries to results but it does not leverage users' feedback and adopts a deterministic strategy without any exploration. A naive way to record users' reinforcement is to maintain a mapping from queries to tuples and directly record the reinforcements applied to each pair of query and tuple. In this method, the DBMS has to maintain the list of all submitted queries and returned tuples. Because many returned tuples are the joint tuples produced by candidate networks, it will take an enormous amount of space and is inefficient to update. Hence, instead of recording reinforcements directly for each pair of query and tuple, we construct some features for queries and tuples and maintain the reinforcement in the constructed feature space. More precisely, we construct and maintain a set of {\it n-gram} features for each attribute value in the base relations and each input query. N-grams are contiguous sequences of terms in a text and are widely used in text analytics and retrieval \cite{IRBook}. In our implementation, we use up to 3-gram features to model the challenges in managing a large set of features. Each feature in every attribute value in the database has its associated attribute and relation names to reflect the structure of the data. We maintain a reinforcement mapping from query features to tuple features. After a tuple gets reinforced by the user for an input query, our system increases the reinforcement value for the Cartesian product of the features in the query and the ones in the reinforced tuple. According to our experiments in Section~\ref{sec:comparison}, this reinforcement mapping can be efficiently maintained in the main memory by only a modest space overhead. 

Given an input query $q$, our system computes the score of each tuple $t$ in every tuple-set using the reinforcement mapping: it finds the n-gram features in $t$ and $q$ and sums up their reinforcement values recorded in the reinforcement mapping. Our system may use a weighted combination of this reinforcement score and traditional text matching score, e.g., TF-IDF score, to compute the final score. One may also weight each tuple feature proportional to its inverse frequency in the database similar to some traditional relevance feedback models \cite{IRBook}. Due to the space limit, we mainly focus on developing an efficient implementation of query answering based on reinforcement learning over relational databases and leave using more advanced scoring methods for future work. The scores of joint tuples are computed as it is explained in Section~\ref{sec:keywordQueryInterface}. We will explain in Section~\ref{sec:sampling}, how we convert these scores to probabilities and return tuples. Using features to compute and record user feedback has also the advantage of using the reinforcement of a pair of query and tuple to compute the relevance score of other tuples for other queries that share some features. Hence, reinforcement for one query can be used to return more relevant answers to other queries.

\subsection{Efficient Exploitation \& Exploration}
\label{sec:sampling}
We propose the following two algorithms to generate a weighted random sample of size $k$ over all candidate tuples for a query.

\subsubsection{Reservoir}
\label{app:res} 
To provide a random sample, one may calculate the total scores of all candidate answers to compute their sampling probabilities. Because this value is not known beforehand, one may use weighted reservoir sampling \cite{Chaudhuri:1999:RSO:304182.304206} to deliver a random sample without knowing the total score of candidate answers in a single scan of the data as follows. 

\begin{algorithm}
\caption{Reservoir}\label{resevoir}
  \begin{algorithmic}[]
  \State $W \gets 0$
  \State $\text{Initialize reservoir array }A[k] \text{to }k \text{dummy tuples.}$
  \ForAll{$\text{candidate network }CN$}
    \ForAll{$t \in CN$}
      \If{$ A \text{ has dummy values}$}
        \State $\text{insert }k \text{ copies of }t\text{ into }A$
      \Else
        \State $W \leftarrow$ $W + Sc(t)$
        \ForAll{$i=1 \in k$}
          \State $\text{insert }t \text{ into }A[i] \text{ with probability }\frac{Sc(t)}{W}$
        \EndFor
      \EndIf
    \EndFor
  \EndFor
  \end{algorithmic}
\end{algorithm}

Reservoir generates the list of answers only after computing the results of all candidate networks, therefore, users have to wait for a long time to see any result. It also computes the results of all candidate networks by performing their joins fully, which may be inefficient. We propose the following optimizations to improve its efficiency and reduce the users' waiting time. 

\subsubsection{Poisson-Olken} 
{\it Poisson-Olken} algorithm uses Poisson sampling to output progressively the selected tuples as it processes each candidate network. It selects the tuple $t$ with probability $\frac{Sc(t)}{M}$, where $M$ is an upper bound to the total scores of all candidate answers. To compute $M$, we use the following heuristic. Given candidate network $CN$, we get the upper bound for the total score of all tuples generated from $CN$: $M_{CN}=$ $\frac{1}{n}( \sum_{TS \in CN} Sc_{max}(TS))$ $\frac{1}{2}\Pi_{TS \in CN} \card{TS}$ in which $Sc_{max}(TS)$ is the maximum query score of tuples in the tuple-set $TS$ and $\card{TS}$ is the size of each tuple-set. The term $\frac{1}{n} ( \sum_{TS \in CN} Sc_{max}(TS))$ is an upper bound to the scores of tuples generated by $CN$. Since each tuple generated by $CN$ must contain one tuple from each tuple-set in $CN$, the maximum number of tuples in $CN$ is $\Pi_{TS \in CN} \card{TS}$. It is very unlikely that all tuples of every tuple-set join with all tuples in every other tuple-set in a candidate network. Hence, we divide this value by $2$ to get a more realistic estimation. We do not consider candidate networks with cyclic joins, thus, each tuple-set appears at most once in a candidate network. The value of $M$ is the sum of the aforementioned values for all candidate networks with size greater than one and the total scores of tuples in each tuple-set. Since the scores of tuples in each tuple-set is kept in the main memory, the maximum and total scores and the size of each tuple-set is computed efficiently before computing the results of any candidate network.   

Both {\it Reservoir} and the aforementioned Poisson sampling compute the full joins of each candidate network and then sample the output. This may take a long time particularly for candidate networks with some base relations. There are several join sampling methods that compute a sample of a join by joining only samples the input tables  and avoid computing the full join \cite{Olken93randomsampling,Chaudhuri:1999:RSO:304182.304206,DBLP:conf/sigmod/KandulaSVOGCD16}. To sample the results of join $R_1 \bowtie R_2$, most of these methods must know some statistics, such as the number of tuples in $R_2$ that join with each tuple in $R_1$, before performing the join. They precompute these statistics in a preprocessing step for each base relation. But, since $R_1$ and/or $R_2$ in our candidate networks may be tuples sets, one cannot know the aforementioned statistics unless one performs the full join. 

However, the join sampling algorithm proposed by Olken \cite{Olken93randomsampling} finds a random sample of the join without the need to precompute these statistics. Given join $R_1 \bowtie R_2$, let $t \rtimes R_2$ denote the set of tuples in $R_2$ that join with $t \in R_1$, i.e., the right semi-join of $t$ and $R_2$. Also, let $\card{t\rtimes R_2}^{t \in R_1}_{max}$ be the maximum number of tuples in $R_2$ that join with a single tuple $t \in R_1$. The Olken algorithm first randomly picks a tuple $t_1$ from $R_1$. It then randomly selects the tuple $t_2$ from $t_1 \rtimes R_2$. It accepts the joint tuple $t_1 \bowtie t_2$ with probability $\frac{\card{t_1 \rtimes R_2}}{\card{t\rtimes R_2}^{t \in R_1}_{max}}$ and rejects it with the remaining probability. To avoid scanning $R_2$ multiple times, Olken algorithm needs an index over $R_2$. Since the joins in our candidate networks are over only primary and foreign keys, we do not need too many indexes to implement this approach.

We extend the Olken algorithm to sample the results of a candidate network without doing its joins fully as follows. Given candidate network $R_1 \bowtie R_2$, our algorithm randomly samples tuple $t_1 \in R_1$ with probability $\frac{Sc(t_1)}{\sum_{t \in R_1}{(Sc(t))}}$, where $Sc(t)$ is the score of tuple $t$, if $R_1$ is a tuple-set. Otherwise, if $R_1$ is a base relation, it picks the tuple with probability $\frac{1}{\card{R_1}}$. The value of $\sum_{t \in R}{(Sc(t))}$ for each tuple set $R$ is computed at the beginning of the query processing and the value of ${\card{R}}$ for each base relation is calculated in a preprocessing step. The algorithm then samples tuple $t_2$ from $t_1 \rtimes R_2$ with probability $\frac{Sc(t_2)}{\sum_{t \in t_1 \rtimes R_2}{(Sc(t))}}$ if $R_2$ is a tuple-set and $\frac{1}{\card{t_1 \rtimes R_2}}$ if $R_2$ is a base relation. It accepts the joint tuple with probability $\frac{\sum_{t \in t_1 \rtimes R_2}Sc(t)}{\max{(\sum_{t \in s \rtimes R_2, s \in R_1}Sc(t))}}$ and rejects it with the remaining probability.

To compute the exact value of $\max{(\sum_{t \in s \rtimes R_2, s \in R_1}Sc(t))}$, one has to perform the full join of $R_1$ and $R_2$. Hence, we use an upper bound on $\max{(\sum_{t \in s \rtimes R_2, s \in R_1}Sc(t))}$ in Olken algorithm. Using an upper bound for this value, Olken algorithm produces a correct random sample but it may reject a larger number of tuples and generate a smaller number of samples. To compute an upper bound on the value of $\max{(\sum_{t \in s \rtimes R_2, s \in R_1}Sc(t))}$, we precompute the value of $\card{t\rtimes B_i}^{t \in B_j}_{max}$ before the query time for all base relations $B_i$ and $B_j$ with primary and foreign keys of the same domain of values. Assume that $B_1$ and $B_2$ are the base relations of tuple-sets $R_1$ and $R_2$, respectively. We have $\card{t\rtimes R_2}^{t \in R_1}_{max}$ $ \leq \card{t\rtimes B_2}^{t \in B_1}_{max}$. Because $\max{(\sum_{t \in s \rtimes R_2, s \in R_1}Sc(t))}$ $ \leq \max_{t \in R_2}{(Sc(t))} \card{t\rtimes R_2}^{t \in R_1}_{max}$, we have $\max{(\sum_{t \in s \rtimes R_2, s \in R_1}Sc(t))}$ $\leq \max_{t \in R_2}{(Sc(t))} \card{t\rtimes B_2}^{t \in B_1}_{max}$. Hence, we use $\frac{\sum_{t \in t_1 \rtimes R_2}Sc(t)}{\max_{t \in R_2}{(Sc(t))} \card{t\rtimes B_2}^{t \in B_1}_{max}}$ for the probability of acceptance. We iteratively apply the aforementioned algorithm to candidate networks with multiple joins by treating the join of each two relations as the first relation for the subsequent join in the network. 

The following algorithm adopts a Poisson sampling method to return a sample of size $k$ over all candidate networks using the aforementioned join sampling algorithm. We show binomial distribution with parameters $n$ and $p$ as $B(n,p)$.
We denote the aforementioned join algorithm as {\it Extended-Olken}. Also, $ApproxTotalScore$ denotes the approximated value of total score computed as explained at the beginning of this section.

\begin{algorithm}[ht!]
\caption{Poisson-Olken}\label{poissonolken}
  \begin{algorithmic}[]
  \State $x \gets k$
  \State $W \gets \frac{ApproxTotalScore}{k}$
  \While {$x > 0$}
    \ForAll{$\text{ candidate network } CN$}
      \If{$CN \text{ is a single tuple-set }$}
        \ForAll{$t \in CN$}
          \State $\text{output } t \text{ with probability } \frac{Sc(t)}{W}$
          \If{$ \text{a tuple } t \text{ is picked}$}
            \State $ x \gets x-1$
          \EndIf
        \EndFor
      \Else
        \State $\text{let } CN = R_1\bowtie \ldots \bowtie R_n$
        \ForAll{$t \in R_1$}
          \State $\text{Pick value } X \text{ from distribution } B (k, \frac{Sc(t)}{W})$
          \State $\text{Pipeline } X \text{ copies of }t \text{ to the Olken algorithm }$
          \If{$\text{Olken accepts } m \text{ tuples }$}
            \State $x \leftarrow x-m$
          \EndIf
       \EndFor
      \EndIf
    \EndFor
  \EndWhile
  \end{algorithmic}
\end{algorithm}

The expected value of produced tuples in the {\it Poisson-Olken} algorithm is close to $k$. However, as opposed to reservoir sampling, there is a non-zero probability that {\it Poisson-Olken} may deliver fewer than $k$ tuples. To drastically reduce this chance, one may use a larger value for $k$ in the algorithm and reject the appropriate number of the resulting tuples after the algorithm terminates \cite{Chaudhuri:1999:RSO:304182.304206}. The resulting algorithm will not progressively produce the sampled tuples, but, as our empirical study in Section~\ref{sec:comparison} indicates, it is faster than {\it Reservoir} over large databases with relatively many candidate networks as it does not perform any full join.
\section{Empirical Study}
\label{sec:comparison}

\subsection{Effectiveness}
\subsubsection{Experimental Setup}
It is difficult to evaluate the effectiveness of online and reinforcement learning algorithms for information systems in a live setting with real users because it requires a very long time and a large amount of resources \cite{vorobev2015gathering,hofmann2013balancing,slivkins2013ranked,radlinski2008learning,Grotov:2016:OLR:2911451.2914798}. Thus, most studies in this area use purely simulated user interactions \cite{slivkins2013ranked,radlinski2008learning,hofmann2013balancing}. A notable expectation is \cite{vorobev2015gathering}, which uses a real-world interaction log to simulate a live interaction setting.  We follow a similar approach and use Yahoo! interaction log \cite{yahoo} to simulate interactions using real-world queries and dataset.

{\bf User Strategy Initialization:} 
We train a user strategy over the Yahoo! 43H-interaction log whose details are in Section~\ref{sec:results} using Roth and Erev's method, which is deemed the most accurate to model user learning according to the results of Section~\ref{sec:results}. This strategy has 341 queries and 151 intents. The Yahoo! interaction log contains user clicks on the returned intents, i.e. URLs. However, a user may click a URL by mistake \cite{vorobev2015gathering}. We consider only the clicks that are not noisy according to the relevance judgment information that accompanies the interaction log. According to the empirical study reported in Section~\ref{sec:empAnal}, the parameters of number and length of sessions and the amount of time between consecutive sessions do {\it not} impact the user learning mechanism in long-term communications. Thus, we have {\it not} organized the generated interactions into sessions.

{\bf Metric:} 
Since almost all returned results have only one relevant answer and the relevant answers to all queries have the same level of relevance, we measure the effectiveness of the algorithms using the standard metric of Reciprocal Rank (RR) \cite{IRBook}. RR is $\frac{1}{r}$ where $r$ is the position of the first relevant answer to the query in the list of the returned answers. RR is particularly useful where each query in the workload has a very few relevant answers in the returned results, which is the case for the queries used in our experiment. 

{\bf Algorithms:} 
We compare the algorithm introduced in Section~\ref{sec:arbitrary} against the state-of-the-art and popular algorithm for online learning in information retrieval called UCB-1~\cite{auer2002finite,vorobev2015gathering, radlinski2008learning, moon2012online}. It has been shown to outperform its competitors in several studies \cite{moon2012online, radlinski2008learning}. It calculates a score for an intent $e$ given the $t$th submission of query $q$ as: $Score_t(q, e) =$ $\frac{W_{q,e,t}}{X_{q,e,t}} + \alpha \sqrt{\frac{2ln~t}{X_{q,e,t}}}$, in which $X$ is how many times an intent was shown to the user, $W$ is how many times the user selects a returned intent, and $\alpha$ is the exploration rate set between~$[0,1]$. The first term in the formula prefers the intents that have received relatively more positive feedback, i.e., exploitation, and the second term gives higher scores to the intents that have been shown to the user less often and/or have {\it not} been tried for a relatively long time, i.e., exploration. UCB-1 assumes that users follow a fixed probabilistic strategy. Thus, its goal is to find the fixed but unknown expectation of the relevance of an intent to the input query, which is roughly the first term in the formula; by minimizing the number of unsuccessful trials. 

{\bf Parameter Estimation:}
We randomly select 50\% of the intents in the trained user strategy to learn the exploration parameter $\alpha$ in UCB-1 using grid search and sum of squared errors over 10,000 interactions that are after the interactions in the 43H-interaction log. We do {\it not} use these intents to compare algorithms in our simulation. We calculate the prior probabilities, $\pi$ in Equation~\ref{eqn:payoff}, for the intents in the trained user strategy that are {\it not} used to find the parameter of UCB-1 using the entire Yahoo! interaction log.

{\bf DBMS Strategy Initialization:}
The DBMS starts the interaction with an strategy that does {\it not} have any query. Thus, the DBMS is {\it not} aware of the set of submitted queries apriori. When the DBMS sees a query for the first time, it stores the query in its strategy, assigns equal probabilities for all intents to be returned for this query, returns some intent(s) to answer the query, and stores the user feedback on the returned intent(s) in the DBMS strategy. If the DBMS has already encountered the query, it leverages the previous user's feedback on the results of this query and returns the set of intents for this query using our proposed learning algorithm.
Retrieval systems that leverage online learning perform some filtering over the initial set of answers to make efficient and effective exploration possible \cite{vorobev2015gathering,hofmann2013balancing}. More precisely, to reduce the set of alternatives over a large dataset, online and reinforcement learning algorithms apply a traditional selection algorithm to reduce the number of possible intents to a manageable size. Otherwise, the learning algorithm has to explore and solicit user feedback on numerous items, which takes a very long time. For instance, online learning algorithms used in searching a set of documents, e.g., UCB-1, use traditional information retrieval algorithms to filter out obviously non-relevant answers to the input query, e.g., the documents with low TF-IDF scores. Then, they apply the exploitation-exploration paradigm and solicit user feedback on the remaining candidate answers. The Yahoo! interaction workload has all queries and intents anonymized, thus we are unable to perform a filtering method of our own choosing. Hence, we use the entire collection of possible intents in the portion of the Yahoo! query log used for our simulation. This way, there 4521 intent per query that can be returned, which is close to the number of answers a reinforcement learning algorithm may consider over a large data set after filtering \cite{vorobev2015gathering}. The DBMS strategy for our method is initialized to be completely random.

\subsubsection{Results}
We simulate the interaction of a user population that starts with our trained user strategy with UCB-1 and our algorithm. In each interaction, an intent is randomly picked from the set of intents in the user strategy by its prior probability and submitted to UCB-1 and our method. Afterwards, each algorithm returns a list of 10 answers and the user clicks on the top-ranked answer that is relevant to the query according to the relevance judgment information. We run our simulations for one million interactions. 

Figure~\ref{fig:mrr:100000} shows the accumulated Mean Reciprocal Rank (MRR) over all queries in the simulated interactions. Our method delivers a higher MRR than UCB-1 and its MRR keeps improving over the duration of the interaction. UCB-1, however, increases the MRR at a much slower rate. Since UCB-1 is developed for the case where users do {\it not} change their strategies, it learns and commits to a fixed probabilistic mapping of queries to intents quite early in the interaction. Hence, it cannot learn as effectively as our algorithm where users modify their strategies using a randomized method, such as Roth and Erev's. As our method is more exploratory than UCB-1, it enables users to provide feedback on more varieties of intents than they do for UCB-1. This enables our method to learn more accurately how users express their intents in the long-run. 

We have also observed that our method allows users to try more varieties of queries to express an intent and learn the one(s) that convey the intent effectively. As UCB-1 commits to a certain mapping of a query to an intent early in the interaction, it may {\it not} return sufficiently many relevant answers if the user tries this query to express another intent. This new mapping, however, could be promising in the long-run. Hence, the user and UCB-1 strategies may stabilize in less than desirable states. Since our method does {\it not} commit to a fixed strategy that early, users may try this query for another intent and reinforce the mapping if they get relevant answers. Thus, users have more chances to try and pick a query for an intent that will be learned and mapped effectively to the intent by the DBMS. 

For example, suppose the user has a strategy with two intents and two queries. The user is able to learn from their interactions with the database system. The DBMS strategy also has two intents and two queries and is learning with either UCB-1 or Roth and Erev. The user has been submitting both queries for a single intent for some time. Thus, UCB-1 has learned to return only that single intent when it receives either query. However, when the user does decide to query for its second intent, it is unlikely that UCB-1 will return the second intent. This is due to the fact that UCB-1 commits to a particular strategy quite soon and has difficulty adjusting to a user strategy that changes over time. Roth and Erev, however, changes its strategy at a more gradual rate. Thus, even after some time interacting with the user, it will explore and have a much higher chance at adapting to the changing user strategy.

\begin{figure}[ht!]
    \centering
    \includegraphics[width = 0.9\linewidth]{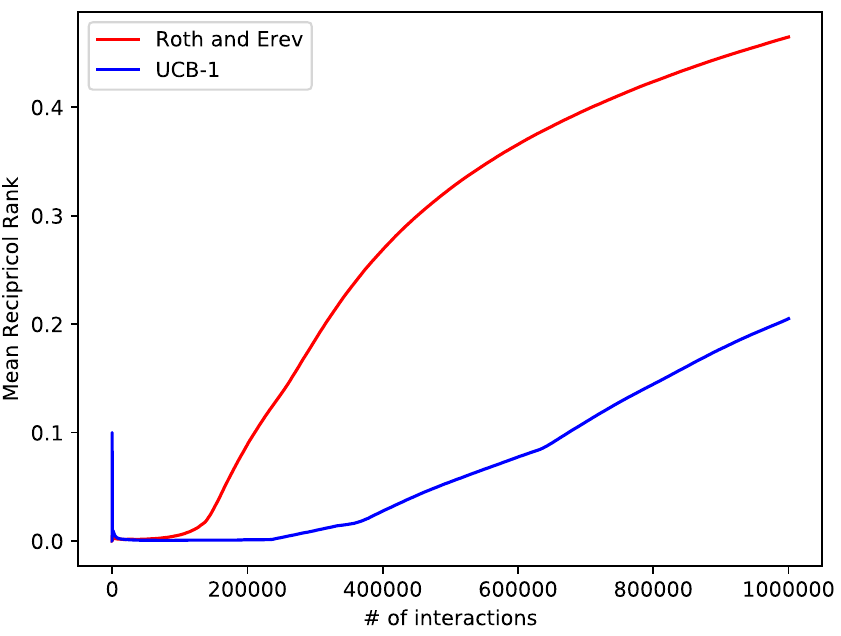}
    \vspace{-10pt}
    \caption{Mean reciprocal rank for 1,000,000 interactions}
    \label{fig:mrr:100000}
\end{figure}

Because our proposed learning algorithm is more exploratory than UCB-1, it may have a longer startup period than UCB-1's. One may pretrain the DBMS strategy using a sample of previously collected user interactions to mitigate this lengthy startup period and improve the effectiveness of answering users' queries in the initial interactions. Such an approach has been used in the context of online learning for document search engines. Another method is for the DBMS to use a less exploratory learning algorithm, such as UCB-1, at the beginning of the interaction. After a certain number of interactions, the DBMS can switch to our proposed learning algorithm. The DBMS can distinguish the time of switching to our algorithm by observing the amount of positive reinforcement it receives from the user. If the user does not provide any or very small number of positive feedback on the returned results, the DBMS is not yet ready to switch to a relatively more exploratory algorithm. If the DBMS observes a relatively large number of positive feedback on sufficiently many queries, it has already provided a relatively accurate answers to many queries. Hence, users may be willing to work with a more exploratory DBMS learning algorithm to discover more relevant answers to their queries. Finally, one may use a relatively large value of reinforcement in the database learning algorithm at the beginning of the interaction to reduce its degree of exploration. The DBMS may switch to a relatively small value of reinforcement after it observes positive feedback on sufficiently many queries.

We have implemented the latter of these methods by increasing the value of reinforcement by some factor. Figure~\ref{fig:mrr:startup} shows the results of applying this technique in our proposed DBMS learning algorithm over the Yahoo! query workload. The value of reinforcement is initially 3 and 6 times larger than the default value proposed in Section~\ref{sec:learning} until a threshold satisfaction value is reached, at which point the reinforcement values scales back down to its original rate.

\begin{figure}[ht!]
    \centering
    \includegraphics[width = 0.9\linewidth]{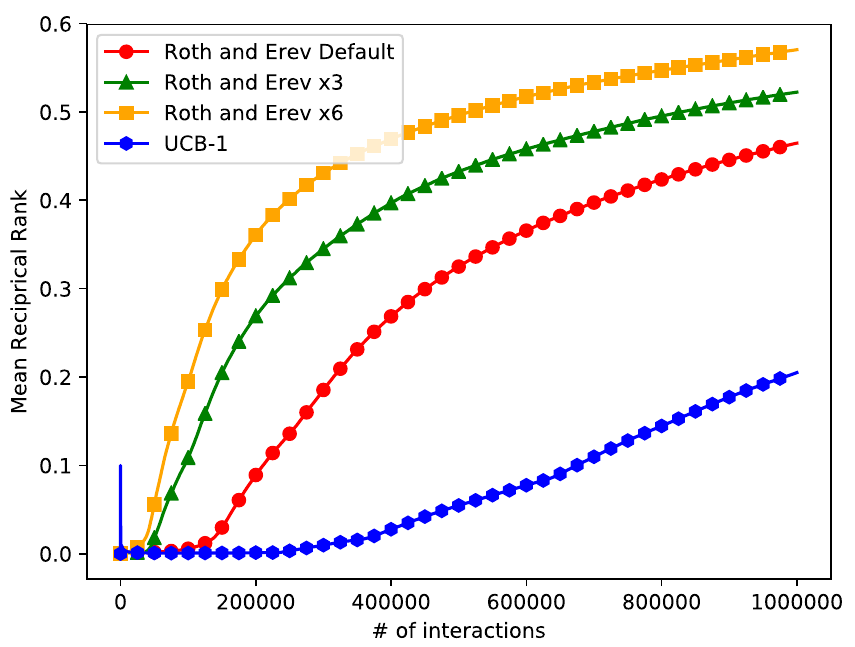}
    \caption{Mean reciprocal rank for 1,000,000 interactions with different degrees of reinforcements}
    \label{fig:mrr:startup}
\end{figure}

We notice that by increasing the reinforcement value by some factor, the startup period is reduced. However, there are some drawbacks to this method. Although we don't see it here, by increasing the rate of reinforcement in the beginning, some amount of exploration may be sacrificed. Thus more exploitation will occur in the beginning of the series of interactions. This may lead to behavior similar to UCB-1 and perform too much exploitation and not enough exploration. Finding the correct degree of reinforcement is an interesting area for future work.

\subsection{Efficiency}
\subsubsection{Experimental Setup}

\textbf{Databases and Queries:}
We have built two databases from Freebase ({\small \it developers.google.com/freebase}), {\it TV-Program} and {\it Play}. {\it TV-Program} contains~7 tables and consisting of~291,026 tuples. {\it Play} contains~3 tables and consisting of 8,685 tuples. 
For our queries, we have used two samples of 621 (459 unique) and 221 (141 unique) queries from Bing ({\it bing.com}) query log whose relevant answers after filtering our noisy clicks, are in {\it TV-program} and {\it Play} databases, respectively \cite{EvaluatingEvidences:Demidova}. After submitting each query and getting some results, we simulate user feedback using the relevance information in the Bing log.  

\textbf{Play:}
Play contains 3 tables, illustrated in table~\ref{table:app:play}. \textit{Fbid} is the identifier for which tuple this is. We use it to determine whether the user was looking for this tuple when they entered a query.
\begin{table}[ht!]
  \centering
  \begin{tabular}{|l|p{3.5cm}|}
    \hline
    Table Name & Attributes\\
    \hline
    \hline
    tbl\_play & id, fbid, name, description\\
    tbl\_genre & id, fbid, name, description\\
    tbl\_play\_genre & id, tbl\_play\_id, tbl\_genre\_id\\
    \hline
  \end{tabular}
  \caption{Play Database Schema}
  \label{table:app:play}
\end{table}

\textbf{TV Program:}
TV Program contains 7 tables, illustrated in table~\ref{table:app:tv}. \textit{Fbid} is is the same as in the Play database.
\begin{table}[ht!]
  \centering
  \begin{tabular}{|l|p{4.2cm}|}
    \hline
    Table Name & Attributes\\
    \hline
    \hline
    tbl\_tv\_program & id, fbid, name, description\\
    tbl\_tv\_program\_tv\_actor & id, tbl\_tv\_program\_id, tbl\_tv\_actor\_id\\
    tbl\_tv\_actor & id, fbid, name, description\\
    tbl\_tv\_program\_tv\_genre & id, tbl\_tv\_program\_id, tbl\_tv\_genre\_id\\
    tbl\_tv\_genre & id, fbid, name, description\\
    tbl\_tv\_program\_tv\_subject & id, tbl\_tv\_program\_id, tbl\_tv\_subject\_id\\
    tbl\_tv\_subject & id, fbid, name, description\\
    \hline
  \end{tabular}
  \caption{Play Database Schema}
  \label{table:app:tv}
\end{table}

\textbf{Query Processing:}
We have used Whoosh inverted index \\({\it whoosh.readthedocs.io}) to index each table in databases. Whoosh recognizes the concept of table with multiple attributes, but cannot perform joins between different tables. Because the {\it Poisson-Olken} algorithm needs indexes over primary and foreign keys used to build candidate network, we have build hash indexes over these tables in Whoosh. Given an index-key, these indexes return the tuple(s) that match these keys inside Whoosh. To provide a fair comparison between {\it Reservoir} and {\it Poisson-Olken}, we have used these indexes to perform join for both methods. We also precompute and maintain all 3-grams of the tuples in each database as mentioned in Section~\ref{sec:system}. We have implemented our system using both {\it Reservoir} and {\it Poisson} algorithms. We have limited the size of each candidate network to 5. Our system returns 10 tuples in each interaction for both methods.

\textbf{Hardware Platform:} We run experiments on a server with 32 2.6GHz Intel Xeon E5-2640 processors with 50GB of main memory. 

\subsubsection{Results}
Table~\ref{table:comparison:avgTimes} depicts the time for processing candidate networks and reporting the results for both {\it Reservoir} and {\it Poisson-Olken} over {\it TV-Program} and {\it Play} databases over 1000 interactions. These results also show that {\it Poisson-Olken} is able to significantly improve the time for executing the joins in the candidate network, shown as performing joins in the table, over {\it Reservoir} in both databases. The improvement is more significant for the larger database, {\it TV-Program}. {\it Poisson-Olken} progressively produces tuples to show to user. But, we are not able to use this feature for all interactions. For a considerable number of interactions, {\it Poisson-Olken} does not produce 10 tuples, as explained in Section~\ref{sec:sampling}. Hence, we have to use a larger value of $k$ and wait for the algorithm to finish in order to find a randomize sample of the answers as explained at the end of Section~\ref{sec:sampling}. Both methods have spent a negligible amount of time to reinforce the features, which indicate that using a rich set of features one can perform and manage reinforcement efficiently.

\begin{table}[ht!]
	\centering
	\small
	\caption{Average candidate networks processing times in seconds for 1000 interactions}
	\begin{tabular}{|l|l|l|l|}
		\hline
		Database & Reservoir & Poisson-Olken \\
		\hline
		\hline
		Play & 0.078 & 0.042\\
		\hline
		TV Program & 0.298 &  0.171 \\
		\hline
	\end{tabular}
	\label{table:comparison:avgTimes}
\end{table}

\section{Equilibrium Analysis}
\label{sec:equilib}
In this section, we formally investigate the eventual stable states and equilibria of the game. Our analyses hold for every adaptation and 
learning algorithms used by the agents in the game and not only for the methods proposed in this paper.
\subsection{Fixed User Strategy}
In some settings, the strategy of a user may change in a much slower time scale than that of the DBMS. In these cases, it is reasonable to assume that the user's strategy is fixed. Hence, the game will reach a desirable state where the DBMS adapts a strategy that maximizes the expected payoff. Let a {\it strategy profile} be a pair of user and DBMS strategies.  
\begin{definition}
\label{def:bestResponse}
Given a strategy profile ($U$,$D$), $D$ is a best response to $U$ w.r.t. 
effectiveness measure $r$ if we have $u_{r}(U,D)$ $\geq u_{r}(U,D')$ for 
all the database strategies $D'$.
\end{definition}
A DBMS strategy $D$ is a {\it strict best response} to $U$ if the inequality in Definition~\ref{def:bestResponse} becomes strict for all $D'\not= D$.

\begin{example}
\label{example:equilib:strictbest}
Consider the database instance about universities that is shown in Table~\ref{example:equilib:table:instance} and the intents, queries, and the strategy profiles in Tables~\ref{example:equilib:table:intents}, \ref{example:equilib:table:queries}, \ref{example:equilib:table:badstrategy:nash}, and \ref{example:equilib:table:beststrategy:nash}, respectively. 
Given a uniform prior over the intents, the DBMS strategy is a best response the user strategy w.r.t. w.r.t precision and $p@k$ in both strategy profiles \ref{example:equilib:table:badstrategy:nash} and \ref{example:equilib:table:beststrategy:nash}.

\begin{table}[ht!]
        \centering
        \small
        \caption{A database instance of relation Univ}
        \begin{tabular}{l l l l l}
            \hline
            \hline
            Name & Abbreviation & State & Type & Rank\\
            \hline
            Missouri State University & MSU & MO & public & 20\\
            Mississippi State University & MSU & MS & public & 22\\
            Murray State University & MSU & KY & public & 14\\
            Michigan State University & MSU & MI & public & 18\\
            \hline
        \end{tabular}
    \label{example:equilib:table:instance}
\end{table}

\begin{table}[htbp!]
    \caption{Intents and Queries}
    \centering
    \small
    \begin{subtable}{1\linewidth}
        \centering
        \caption{Intents}
        \begin{tabular}{l l}
            \hline
            \hline
            Intent\# & \multicolumn{1}{c}{Intent} \\
            \hline
            $e_1$ & $ans(z) \leftarrow Univ(x,`MSU\textrm', `MS\textrm', y, z)$\\
            $e_2$ & $ans(z) \leftarrow Univ(x,`MSU\textrm', `MI\textrm', y, z)$\\
            $e_3$ & $ans(z) \leftarrow Univ(x,`MSU\textrm', `MO\textrm', y, z)$\\
            \hline
        \end{tabular}
        \label{example:equilib:table:intents}
    \end{subtable}%
    
    \begin{subtable}{1\linewidth}
        \centering
        \caption{Queries}
        \begin{tabular}{l l}
            \hline
            \hline
            Query\# & \multicolumn{1}{c}{Query} \\
            \hline
            $q_1$ & `MSU MI'\\
            $q_2$ & `MSU'\\
            \hline
        \end{tabular}
        \label{example:equilib:table:queries}
    \end{subtable}%
    \label{example:equilib:table:intentsandqueries}
\end{table}
\begin{table}[ht!]
    \caption{Two strategy profiles over the intents and queries in Table~\ref{example:equilib:table:intentsandqueries}. User and DBMS strategies at the top and bottom, respectively.}
    \centering
    \begin{subtable}{.45\linewidth}
        \caption{A strategy profile}
        \begin{tabular}{c|c|c|}
            \cline{2-3}
             & \multicolumn{1}{c|}{$q_1$} &$q_2$\\
            \hline
            \multicolumn{1}{|c|}{$e_1$} & 0 & 1\\
            \hline
            \multicolumn{1}{|c|}{$e_2$} & 0 & 1\\
            \hline
            \multicolumn{1}{|c|}{$e_3$} & 0 & 1\\
            \hline
        \end{tabular}
        \begin{tabular}{c|c|c|c|}
            \cline{2-4}
             & \multicolumn{1}{c|}{$e_1$} & $e_2$ & $e_3$ \\
            \hline
            \multicolumn{1}{|c|}{$q_1$} & 0 & 1 & 0\\
            \hline
            \multicolumn{1}{|c|}{$q_2$} & 0 & 1 & 0\\
            \hline
        \end{tabular}
        \label{example:equilib:table:badstrategy:nash}
    \end{subtable}  
    \begin{subtable}{.45\linewidth}
        \caption{Another strategy profile}
        \begin{tabular}{c|c|c|}
            \cline{2-3}
             & \multicolumn{1}{c|}{$q_1$} &$q_2$\\
            \hline
            \multicolumn{1}{|c|}{$e_1$} & 0 & 1\\
            \hline
            \multicolumn{1}{|c|}{$e_2$} & 1 & 0\\
            \hline
            \multicolumn{1}{|c|}{$e_3$} & 0 & 1\\
            \hline
        \end{tabular}
        \begin{tabular}{c|c|c|c|}
            \cline{2-4}
             & \multicolumn{1}{c|}{$e_1$} & $e_2$ & $e_3$ \\
            \hline
            \multicolumn{1}{|c|}{$q_1$} & 0 & 1 & 0\\
            \hline
            \multicolumn{1}{|c|}{$q_2$} & 0.5 & 0 & 0.5\\
            \hline
        \end{tabular}
        \label{example:equilib:table:beststrategy:nash}
    \end{subtable}
    \label{example:equilib:table:strategies}
\end{table}

\end{example}

\begin{definition}
\label{def:payoff-query}
    Given a strategy profile $(U,D)$, an intent $e_i$, and a query $q_j$, the payoff of $e_i$ using $q_j$ is
     \[u_r(e_i,q_j) = \sum\limits_{\ell=1}^{o} D_{j,\ell} r(e_i,s_{\ell}).\]
\end{definition}
\begin{definition}
\label{def:intentpool} 
    The pool of intents for query $q_j$ in user strategy $U$ is the set of intents $e_i$ such that $U_{i,j} > 0$. 
\end{definition}
\noindent
We denote the pool of intents of $q_j$ as $PL(q_j)$. Our definition of pool of intent resembles the notion of pool of state in signaling games~\cite{Cho:QuarterlyJournalEconomics:1987,Donaldson:MathBiology:2007}. Each result $s_{\ell}$ such that $D_{j,\ell} > 0$ may be returned in response to query $q_j$. We call the set of these results the {\it reply} to query $q_j$.
\begin{definition}
\label{def:bestreply} 
A {\it best reply} to query $q_j$ w.r.t. effectiveness measure $r$ is a reply that maximizes $\sum_{e_i \in PL(q_j)} \pi_i U_{i,j}$ $u_{r}(e_i, q_j)$.
\end{definition}
\noindent
The following characterizes the best response to a strategy.
\begin{lemma}
\label{lemma:BestResponse}  
Given a strategy profile $(U,D)$, $D$ is a best response to $U$ w.r.t. effectiveness measure $r$ if and only if $D$ maps every query to one of its best replies.
\end{lemma}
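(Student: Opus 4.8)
The plan is to show that the expected payoff $u^{r}(U,D)$ separates into a sum of per-query contributions, each depending on $D$ only through a single row, so that maximizing the total payoff reduces to maximizing each contribution independently over the corresponding row-simplex. First I would reorder the summation in \eqref{eqn:payoff}: interchanging the sums over $i$ and $j$ gives
\[
u^{r}(U,D)=\sum_{j=1}^{n}\sum_{i=1}^{m}\pi_i U_{ij}\sum_{\ell=1}^{o}D_{j\ell}\,r(e_i,e_{\ell})
=\sum_{j=1}^{n}\sum_{i=1}^{m}\pi_i U_{ij}\,u_r(e_i,q_j),
\]
where the inner sum is rewritten using Definition~\ref{def:payoff-query}. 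Since $U_{ij}>0$ exactly when $e_i\in PL(q_j)$ (Definition~\ref{def:intentpool}), the $j$-th outer term equals $f_j(D):=\sum_{e_i\in PL(q_j)}\pi_i U_{ij}\,u_r(e_i,q_j)$, which is precisely the quantity a best reply to $q_j$ maximizes (Definition~\ref{def:bestreply}). The crucial structural fact is that $f_j(D)$ depends on $D$ only through its $j$-th row $(D_{j1},\dots,D_{jo})$, and the row-stochastic constraint on $D$ is a product of independent simplex constraints, one per row.

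Given this separation, both directions follow. For the ``if'' direction, assume $D$ maps each $q_j$ to a best reply, so $f_j(D)\ge f_j(D')$ for every $j$ and every DBMS strategy $D'$; summing over $j$ yields $u^{r}(U,D)\ge u^{r}(U,D')$, hence $D$ is a best response. For the ``only if'' direction, suppose instead that some row $j_0$ of $D$ is not a best reply, i.e. $f_{j_0}(D)<f_{j_0}(\tilde D)$ for some strategy $\tilde D$. I would then form $D'$ by copying $D$ but overwriting its $j_0$-th row with that of $\tilde D$. Because the remaining contributions $f_j$ with $j\neq j_0$ are unchanged, this strictly increases $u^{r}$, contradicting that $D$ is a best response.

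The argument is essentially an exercise in separability, so I expect no deep obstacle; the points requiring care are the bookkeeping in the reordering step --- matching the reindexed inner sum to $u_r(e_i,q_j)$ and restricting the sum over $i$ to $PL(q_j)$ --- and being explicit that ``best reply'' is defined as a maximizer of exactly $f_j$, so that the per-row optima and the global optimum coincide. I would also remark that, since $f_j$ is linear in the $j$-th row of $D$, its maximum over the simplex is attained by placing all probability on the results $e_{\ell}$ maximizing $\sum_{e_i\in PL(q_j)}\pi_i U_{ij}\,r(e_i,e_{\ell})$, which makes the set of best replies concrete should an explicit characterization be desired.
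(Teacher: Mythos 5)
Your proposal is correct and is essentially the paper's own argument made explicit: the paper's proof also maps a non-best-reply query's row to a best reply while leaving the other rows unchanged, relying implicitly on exactly the per-query separability of $u^{r}(U,D)$ that you spell out via the summation reordering. Your additional remark that linearity of $f_j$ in the $j$-th row makes the best replies concrete is a useful clarification but not a different method.
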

\begin{proof}
If each query is assigned to its best reply in $D$, no improvement in the expected payoff is possible, thus $D$ is a best response for $U$. 
    Let $D$ be a best response for $U$ such that some query $q$ is not mapped to its best reply in $D$. Let $r_max$ be a best reply for $q$. 
    We create a DBMS strategy $D' \neq D$ such that    
    all queries $q' \neq q$ in $D'$ have the same reply as they have in $D$ and
    the reply of $q$ is $r_{max}$. Clearly, $D'$ has higher payoff than $D$ for $U$. Thus, $D$ is not a best response. 
\end{proof}
\noindent
The following corollary directly results from 
Lemma~\ref{lemma:BestResponse}. 
\begin{corollary}
\label{theorem:BestResponse:strict}  
Given a strategy profile $(U,D)$, $D$ is a strict best response to $U$ w.r.t. effectiveness measure $r$ if and only if every query has one and only one best reply and $D$ maps each query to its best reply. 
\end{corollary}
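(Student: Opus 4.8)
The plan is to obtain the corollary directly from Lemma~\ref{lemma:BestResponse} by exploiting the fact that the expected payoff decomposes into independent per-query contributions. Regrouping the terms of Equation~\eqref{eqn:payoff} query-by-query and using Definition~\ref{def:payoff-query} gives
\[
u_r(U,D) = \sum_{j=1}^n \; \sum_{e_i \in PL(q_j)} \pi_i\, U_{i,j}\; u_r(e_i,q_j),
\]
and the crucial structural observation is that the reply to $q_j$, i.e.\ the $j$-th row of $D$, enters only the $j$-th outer summand. Hence maximizing the total payoff is equivalent to maximizing each query's contribution separately, and a best reply to $q_j$ (Definition~\ref{def:bestreply}) is precisely a reply attaining that query's per-query maximum. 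This separability is what lets me convert statements about strictness of the global inequality into statements about uniqueness of the per-query maximizer.

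For the ($\Leftarrow$) direction I would assume that every query has a unique best reply and that $D$ maps each query to it. Lemma~\ref{lemma:BestResponse} immediately yields that $D$ is a best response, so the only thing to establish is strictness. Given any $D'\neq D$, the two matrices must assign different replies to at least one query $q$; since $q$'s best reply is unique and $D$ already uses it, the reply of $D'$ to $q$ is not a best reply and therefore contributes strictly less than the maximum for $q$, while every other query contributes at most its maximum (which $D$ attains). Summing the per-query contributions then gives $u_r(U,D') < u_r(U,D)$. For the ($\Rightarrow$) direction I would argue contrapositively: if $D$ is a strict best response it is in particular a best response, so by the lemma it maps each query to a best reply; were some query $q$ to admit two distinct best replies, I would build $D'$ by retaining $D$'s reply on every query except $q$ and swapping in the second best reply for $q$, obtaining $D'\neq D$ with $u_r(U,D')=u_r(U,D)$ and contradicting strictness. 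Thus each query must have exactly one best reply.

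The step I expect to carry the real content is the strict inequality in the ($\Leftarrow$) direction, specifically the claim that a reply to $q$ other than its unique best reply strictly lowers that query's contribution. Since the per-query objective $\sum_{e_i\in PL(q_j)}\pi_i U_{i,j}\,u_r(e_i,q_j)$ is linear in the $j$-th row of $D$, its maximum over stochastic rows is attained only by rows supported on the score-maximizing interpreted intents; uniqueness of the best reply is exactly the statement that this support, hence the optimal row, is unique (a single point mass), so any deviating reply is strictly suboptimal. The one subtlety worth flagging is to keep the support-set notion of \emph{reply} consistent with the distribution-level objective, ensuring that ``one and only one best reply'' coincides with uniqueness of the maximizing row; once that identification is made explicit, both directions follow from the decomposition above with no further computation.
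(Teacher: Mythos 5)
Your proof is correct, and it takes essentially the same route the paper intends: the paper gives no proof of this corollary, merely asserting that it directly results from Lemma~\ref{lemma:BestResponse}, and your per-query decomposition together with the strictness analysis is precisely the argument that assertion leaves implicit. One remark worth keeping explicit in any write-up: the claim in your second paragraph that $D' \neq D$ forces a different \emph{reply} (support set) on some query is false for general stochastic matrices---two rows can share a support yet differ in weights---and it is exactly the identification you make in your final paragraph, that a unique best reply means the maximizing row is a unique point mass, that rescues the strict inequality; you correctly flagged and resolved this subtlety.
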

Given an intent $e$ over database instance $I$, 
some effectiveness measures, such as precision, take their maximum for other results in addition to $e(I)$. For example, given intent $e$, the precision of every non-empty result $s\subset e(I)$ is  equal to the precision of $e(I)$ for $e$. Hence, there are more than one best reply for an intent w.r.t. precision. Thus, according to Corollary~\ref{theorem:BestResponse:strict}, there is not any strict best response w.r.t. precision.

\subsection{Nash Equilibrium}
\label{sec:nash}
In this section and Section~\ref{sec:strictNash}, we analyze the equilibria of the game where both user and DBMS may modify their strategies. A Nash equilibrium for a game is a strategy profile where the DBMS and user will not do better by unilaterally deviating from their strategies.
\begin{definition}
\label{def:Nash}
    A strategy profile $(U,D)$ is a Nash equilibrium w.r.t. a satisfaction function $r$ if $u_{r}(U,D)$ $\geq u_{r}(U',D)$ for all user strategy $U'$ and $u_{r}(U,D) \geq$ $u_{r}(U,D')$ for all database strategy $D'$.
\end{definition}
\begin{example}
\label{example:nashex}
Consider again the database about universities that is shown in Table~\ref{example:equilib:table:instance} and the intents, queries, and the strategy profiles in Tables~\ref{example:equilib:table:intents}, \ref{example:equilib:table:queries}, \ref{example:equilib:table:badstrategy:nash}, and \ref{example:equilib:table:beststrategy:nash}, respectively. 
Both strategy profiles~\ref{example:equilib:table:badstrategy:nash} 
and \ref{example:equilib:table:beststrategy:nash} are 
Nash equilibria w.r.t precision and $p@k$. User and DBMS cannot unilaterally change their strategies and receive a better payoff.
If one modifies the strategy of the database in strategy profile
\ref{example:equilib:table:beststrategy:nash} and replaces the probability of executing and returning $e_1$ and $e_3$ given query $q_2$ to $\epsilon$ and  $1 - \epsilon$, $0 \leq \epsilon \leq 1$,
the resulting strategy profiles are all Nash equilibria. 
\end{example}

Intuitively, the concept of Nash equilibrium captures the fact that users  may explore different ways of articulating and interpreting 
intents, but they may not be able to {\it look ahead} beyond the payoff of a single interaction when adapting their strategies.
Some users may be willing to lose some payoff in the short-term to gain more payoff in the long run, therefore, 
an interesting direction is to define and analyze less myopic equilibria for the game \cite{ArjitaGhosh}. 

If the interaction between user and DBMS reaches a Nash equilibrium, they user 
do not have a strong incentive to change her strategy. 
As a result the strategy of the DBMS and the expected payoff of the game 
will likely to remain unchanged.  
Hence, in a Nash equilibrium the strategies of user and DBMS are likely 
to be stable. Also, the payoff at a Nash equilibrium reflects a potential eventual payoff for the user and DBMS in their interaction. 
Query $q_j$ is a {\it best query} for intent $e_i$ if $q_j \in$ $\arg\max_{q_k} u_r(e_i,q_k)$.

The following lemma characterizes the Nash equilibrium of the game. 
\begin{lemma}\label{lemma:Nash}  
A strategy profile $(U,D)$ is a Nash equilibrium w.r.t. effectiveness measure $r$ if and only if 
\squishlisttwo
\item for every query $q$, $q$ is a best query for every intent $e \in PL(q)$, and 
\item $D$ is a best response to $U$.
\end{list}
\end{lemma}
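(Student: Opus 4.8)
The plan is to exploit the fact that the Nash condition in Definition~\ref{def:Nash} factors into two independent one-sided optimality requirements: that $D$ be a best response to $U$, and that $U$ be a best response to $D$. The requirement that $D$ be a best response to $U$ is literally the second bullet of the lemma (via Definition~\ref{def:bestResponse}), so no work is needed there. All the content lies in proving that the first bullet---for every query $q$, $q$ is a best query for each $e \in PL(q)$---is equivalent to the statement that $u_r(U,D) \ge u_r(U',D)$ for every user strategy $U'$, i.e. that $U$ is a best response to $D$.

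The key structural observation is that, with $D$ held fixed, the payoff is linear in $U$ and decomposes across intents. Rewriting the payoff~\eqref{eqn:payoff} with Definition~\ref{def:payoff-query} gives
\begin{align*}
u_r(U,D) = \sum_{i=1}^m \pi_i \sum_{j=1}^n U_{ij}\, u_r(e_i,q_j),
\end{align*}
so the contribution of each intent $e_i$ depends only on its own row $U_{i\cdot}$, which ranges over the probability simplex on queries. Maximizing a linear functional over a simplex is achieved exactly by the distributions supported on the coordinates of maximal coefficient; here the coefficient of $U_{ij}$ is (up to the factor $\pi_i$) the quantity $u_r(e_i,q_j)$, whose maximizers are by definition the best queries for $e_i$. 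Assuming $\pi_i > 0$ for every intent, this reduces the whole $U$-side problem to independent per-row simplex optimizations.

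For the forward direction I would assume $(U,D)$ is a Nash equilibrium and derive both bullets; the second is immediate. For the first, suppose toward a contradiction that some $q_j$ fails to be a best query for some $e_i \in PL(q_j)$, so $U_{ij} > 0$ yet there is a $q_k$ with $u_r(e_i,q_k) > u_r(e_i,q_j)$. I would construct $U'$ agreeing with $U$ except that in row $i$ the mass $U_{ij}$ is shifted from $q_j$ onto $q_k$; then $u_r(U',D) - u_r(U,D) = \pi_i U_{ij}\,(u_r(e_i,q_k) - u_r(e_i,q_j)) > 0$, contradicting that $U$ is a best response. For the converse, assume both bullets. The second gives $u_r(U,D) \ge u_r(U,D')$ for all $D'$ directly. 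For the $U$-side, the first bullet forces every row to be supported on best queries, so $\sum_j U_{ij} u_r(e_i,q_j) = \max_k u_r(e_i,q_k) \ge \sum_j U'_{ij} u_r(e_i,q_j)$ for any $U'$; multiplying by $\pi_i$ and summing over $i$ yields $u_r(U,D) \ge u_r(U',D)$. The two inequalities together are exactly Definition~\ref{def:Nash}.

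The main obstacle is bookkeeping rather than depth. First, one must carefully identify the per-row support condition $U_{ij} > 0$ with pool-of-intents membership $e_i \in PL(q_j)$ (Definition~\ref{def:intentpool}), so that the per-row optimality statement translates faithfully into the per-query phrasing of the first bullet. Second, one must address the degenerate intents with $\pi_i = 0$, whose rows are entirely unconstrained by the payoff; handling these cleanly---either by assuming a strictly positive prior $\pi$ or by observing that such intents never contribute to $u_r$ and hence may be ignored---is the only place where genuine care is required.
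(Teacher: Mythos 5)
Your proposal is correct and follows essentially the same route as the paper's proof: the forward direction is the identical mass-shifting exchange argument (move the weight $U_{ij}$ from a non-best query onto a best query for $e_i$, obtaining a strict payoff improvement that contradicts equilibrium). The paper dispatches the converse with ``a similar method,'' whereas you spell it out via the per-row simplex decomposition of $u_r(U,D)=\sum_i \pi_i \sum_j U_{ij}\,u_r(e_i,q_j)$, and you also flag the $\pi_i=0$ degeneracy that the paper's strict-inequality step silently assumes away---both are refinements of, not departures from, the paper's argument.
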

\begin{proof}
Assume that $(U,D)$ is a Nash equilibrium. Also, assume $q_j$ is not a best query for $e_i \in PL(q_j)$. Let $q_{j'}$ be a best query for $e_i$. 
We first consider the case where $u_{r}(e_i,q_{j'}) > 0$.
We build strategy $U'$ where $U'_{k,\ell} = U_{k,\ell}$ for all entries $(k,\ell) \neq (i,j)$ and $(k,\ell) \neq (i,j')$, $U'_{i,j}=0$, and $U'_{i,j'}=U_{i,j}$. We have $U' \neq U$ and $u_r(U,D) < u_r(U',D)$. Hence, $(U,D)$ is not a Nash equilibrium. Thus, we have $U_{i,j} = 0$ and the first condition of the theorem holds. Now, consider the case where $u_{r}(e_i,q_{j'}) = 0$. In this case, we will also have $u_{r}(e_i,q_{j}) = 0$, which makes $q_j$ a best query for $e_i$. We prove the necessity of the second condition of the theorem similarly. This concludes the proof for the necessity part of the theorem. Now, assume that both conditions of the theorem hold for strategies $U$ and $D$. We can prove that it is not possible to have strategies $U''$ and $D''$ such that $u_r(U,D) < u_r(U'',D)$ or $u_r(U,D) < u_r(U,D'')$ using a similar method.
\end{proof}
\noindent

\subsection{Strict Nash Equilibrium}
\label{sec:strictNash}
A strict Nash equilibrium is a strategy profile in which the DBMS and user will do worse by unilaterally changing their equilibrium strategy.
\begin{definition}\label{def:StrictNash}
A strategy profile $(U,D)$ is a strict Nash equilibrium w.r.t. effectiveness measure $r$ if we have  $u_{r}(U,D) >$ $u_{r}(U,D')$ for all DBMS strategies $D' \neq D$ and $u_{r}(U,D) >$ $u_{r}(U',D)$ for all user strategies $U' \neq U$.
\end{definition}

\begin{table}[htbp]
	\caption{Queries and Intents}
	\centering
    \begin{subtable}{1\linewidth}
        \centering
        \caption{Intents}
        \begin{tabular}{l l}
            \hline
            \hline
            Intent\# & Intent \\
            \hline
            $e_3$ & $ans(z) \leftarrow Univ(x,`MSU\textrm', `MO\textrm', y, z)$\\
            $e_4$ & $ans(z) \leftarrow Univ(x, `MSU\textrm', y, `public\textrm', z)$\\
            $e_5$ & $ans(z) \leftarrow Univ(x,`MSU\textrm', `KY\textrm', y, z)$\\
            \hline
        \end{tabular}
    \label{example:strictequilib:intents}
    \end{subtable}
    \begin{subtable}{1\linewidth}
            \centering
            \caption{Queries}
            \begin{tabular}{l l}
                \hline
                \hline
                Query\# & Query\\
                \hline
                $q_2$ & `MSU'\\
                $q_3$ & `KY'\\
                \hline
            \end{tabular}
        \label{example:strictequilib:queries}
     \end{subtable}
\end{table}

\begin{table}[htbp]
    	\caption{Strict best strategy profile}
        \centering
        \begin{tabular}{c|c|c|}
            \cline{2-3}
             & \multicolumn{1}{c|}{$q_2$} & $q_3$ \\
            \hline
            \multicolumn{1}{|c|}{$e_3$} & 1 & 0\\
            \hline
            \multicolumn{1}{|c|}{$e_4$} & 1 & 0\\
            \hline
            \multicolumn{1}{|c|}{$e_5$} & 0 & 1\\
            \hline
        \end{tabular}
        \begin{tabular}{c|c|c|c|}
            \cline{2-4}
             & \multicolumn{1}{c|}{$e_3$} & $e_4$ & $e_5$ \\
            \hline
            \multicolumn{1}{|c|}{$q_2$} & 1 & 0 & 0\\
            \hline
            \multicolumn{1}{|c|}{$q_3$} & 0 & 0 & 1\\
            \hline
        \end{tabular}
    \label{example:strictequilib:strategy}
\end{table}

\begin{example}
\label{example:equilib:strict}
Consider the intents, queries, strategy profile, and database instance in Tables~\ref{example:strictequilib:intents},~\ref{example:strictequilib:queries},~\ref{example:strictequilib:strategy}, and~\ref{example:equilib:table:instance}. The strategy profile is a strict Nash equilibrium w.r.t precision. However, the strategy profile in Example~\ref{example:nashex} is {\it not} a strict Nash equilibrium as one may modify the value of $D_{q_2,e_1}$ and $D_{q_2,e_3}$ without changing the payoff of the players. 
\end{example}
\noindent
Next, we investigate the characteristics of strategies in a strict Nash equilibria profile. Recall that a strategy is {\it pure} iff it has only 1 or 0 values.
A user strategy is {\it onto} if there is not any query $q_j$ such that $U_{i,j} =0$ 
for all intend $i$. A DBMS strategy is {\it one-to-one} if it does not map two queries to the same result. In other words, there is not any result $s_{ell}$ such that $D_{j\ell} > 0$ and $D_{j'\ell} > 0$ where $j \neq j' $.
\begin{theorem}
\label{theorem:strictNash:Pure}  
If $(U,D)$ is a strict Nash equilibrium w.r.t. satisfaction function $r$, we have 
\begin{itemize}
    \item $U$ is pure and onto. 
    \item $D$ is pure and one-to-one.
\end{itemize}
\end{theorem}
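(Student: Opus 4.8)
The plan is to derive each of the four conclusions from a separate use of the two strict inequalities in Definition~\ref{def:StrictNash}, leaning on Lemma~\ref{lemma:Nash} and Corollary~\ref{theorem:BestResponse:strict}. The starting observation is that a strict Nash equilibrium is in particular a Nash equilibrium and a strict best response of $D$ to $U$, so both bullet points of Lemma~\ref{lemma:Nash} hold and Corollary~\ref{theorem:BestResponse:strict} applies. The common technique throughout is an \emph{indifference} argument: whenever the equilibrium leaves a player indifferent between two options, I build a distinct strategy with the same payoff, which contradicts the relevant strict inequality.

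I would first treat $D$, since this is where the real work lies. Because $u_{r}(U,D) > u_{r}(U,D')$ for every $D' \neq D$, $D$ is a strict best response to $U$, so by Corollary~\ref{theorem:BestResponse:strict} every query has exactly one best reply. The key step is to observe that, for a fixed $U$, the per-query contribution $\sum_{\ell} D_{j\ell}\, c^{(j)}_{\ell}$ with $c^{(j)}_{\ell} = \sum_{i} \pi_i U_{ij}\, r(e_i,s_{\ell})$ is a linear function of the response distribution $D_{j,\cdot}$ over the probability simplex; hence it attains its maximum precisely on the face spanned by the results of maximal coefficient. If two results shared this maximum, then $q_j$ would admit several distinct best replies, contradicting uniqueness, so the maximizer is a single result and $D_{j,\cdot}$ is pure. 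Purity across all queries lets me write $D$ as a map $q_j \mapsto s_{\sigma(j)}$, which sets up the one-to-one claim.

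For the user side, $U$ is \emph{onto} because an unused query $q_j$ (one with $U_{ij}=0$ for all $i$) drops out of \eqref{eqn:payoff}, so its row $D_{j,\cdot}$ is payoff-irrelevant; replacing that row with any other distribution gives $D' \neq D$ of equal payoff, violating the strict inequality over DBMS strategies. And $U$ is \emph{pure} because, if some intent $e_i$ split its probability across two queries $q_j,q_{j'}$, then by the first bullet of Lemma~\ref{lemma:Nash} both are best queries for $e_i$, hence $u_{r}(e_i,q_j)=u_{r}(e_i,q_{j'})$; shifting all of $e_i$'s mass onto $q_j$ yields $U' \neq U$ with the same payoff, contradicting the strict inequality over user strategies. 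Finally, $D$ is \emph{one-to-one}: if the pure map $\sigma$ sent $q_j \neq q_{j'}$ to the same result $s_{\ell}$, then $u_{r}(e_i,q_j)=r(e_i,s_{\ell})=u_{r}(e_i,q_{j'})$ for every intent, and since $U$ is onto some intent uses $q_j$, so transferring its mass to $q_{j'}$ again produces an equal-payoff $U' \neq U$.

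The main obstacle is the purity of $D$: it is the only step that cannot be settled by the bookkeeping of Lemma~\ref{lemma:Nash} alone, and it requires translating the abstract ``unique best reply'' condition into a statement about the support of $D_{j,\cdot}$ via the linearity of the per-query payoff over the simplex. Beyond that, the care is mostly organizational: the one-to-one claim must be proved only after both $D$ pure and $U$ onto are in hand, and the onto step tacitly needs at least one alternative response distribution to exist so that $D' \neq D$ can actually be formed.
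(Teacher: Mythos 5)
Your proof is correct, but it is organized around a different engine than the paper's. The paper proves Theorem~\ref{theorem:strictNash:Pure} by self-contained strategy perturbations: for purity of $U$ it fixes an intent with split mass and runs a two-case analysis (if the two queries' payoffs for that intent are equal, shifting all mass to one of them gives an equal-payoff $U'\neq U$, contradicting strictness; if unequal, shifting mass to the better query gives a strictly better $U''$, contradicting even plain Nash), then dismisses purity of $D$ with ``similarly,'' and handles the onto and one-to-one claims with the same perturbation device you use. You instead route everything through the earlier characterizations: strict Nash implies Nash, so Lemma~\ref{lemma:Nash} makes every query in an intent's support a best query, which collapses the paper's two cases into the single indifference case; and strict Nash makes $D$ a strict best response, so Corollary~\ref{theorem:BestResponse:strict} together with the linearity of the per-query payoff over the probability simplex yields purity of $D$ directly. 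This modular route buys two things: it actually supplies the argument for purity of $D$ that the paper leaves as ``similarly'' (your observation that uniqueness of the best reply forces a unique maximizing coefficient, hence a singleton support, is exactly the right way to make that precise), and it is cleaner on the one-to-one step, where you correctly identify that onto-ness --- not merely purity, as the paper's somewhat garbled indexing suggests --- is what guarantees some intent actually uses the duplicated query so that the equal-payoff user deviation can be formed. The cost is dependence on Lemma~\ref{lemma:Nash} and Corollary~\ref{theorem:BestResponse:strict}, whose own proofs in the paper are terse, whereas the paper's argument stands on its own. Both proofs ultimately share the same core device: exhibiting a distinct strategy of equal payoff to violate the strict inequalities of Definition~\ref{def:StrictNash}.
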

\begin{proof}
Let us assume that there is an intent $e_i$ and a query $q_j$ such that $0 < U_{i,j}< 1$. Since $U$ is row stochastic, there is a query $q_{j'}$ where $0 < U_{i,j'}$ $< 1$. Let $u_r(U_{i,j},D)$ $=\sum_{\ell=1}^{o} D_{j,\ell}r(e_i,s_{\ell})$. If $u_r(U_{i,j},D)$ $= u_r(U_{i,j'},D)$, we can create a new user strategy $U'$ where $U'_{i,j} =1$ and $U'_{i,j'} =0$ and the values of other entries in $U'$ is the same as $U$. Note that the payoff of ($U$,$D$) and ($U'$,$D$) are equal and hence, ($U$,$D$) is not a strict Nash equilibrium.

If $u_r(U_{i,j},D)\not= u_r(U_{i,j'},D)$, without loss of generality one can assume that $u_r(U_{i,j},D)$ $> u_r(U_{i,j'},D)$. We construct a new user strategy $U''$ whose values for all entries except $(i,j)$ and $(i,j')$ are equal to $U$ and $U''_{i,j}=1,$ $U''_{i,j'}=0$. Because $u_r(U,D) < $ $u_r(U'',D)$, ($U$,$D$) is not a strict Nash equilibrium. Hence, $U$ must be a pure strategy. Similarly, it can be shown that $D$ should be a pure strategy. 
 
If $U$ is not onto, there is a query $q_j$ that is not mapped to any intent in $U$. Hence, one may change the value in row $j$ of $D$ without changing the payoff of $(U,D)$.

Assume that $D$ is not one-to-one. Hence, there are queries $q_i$ and $q_j$ and a result $s_{\ell}$ such that $D_{i,\ell} = $ $D_{j,\ell} = 1$. Because $(U,D)$ is a strict Nash, $U$ is pure and we have either $U_{i,\ell}=1$ or $U_{j,\ell}=1$. Assume that $U_{i,\ell}=1$. We can construct strategy $U'$ that have the same values as $U$ for all entries except for $(i,\ell)$ and $(j,\ell)$ and $U'_{i,\ell}=0$, $U'_{j,\ell}=1$. Since the payoffs of $(U,D)$ and  $(U',D)$ are equal, $(U,D)$ is not a strict Nash equilibrium. 
\end{proof}
\noindent
Theorem~\ref{theorem:strictNash:Pure} extends the Theorem 1 in
\cite{Donaldson:MathBiology:2007} for our model. 
In some settings, the user may knows and use fewer queries than intents, i.e., $m > n$. 
Because the DBMS strategy in a strict Nash equilibrium is one-to-one, 
the DBMS strategy does not map some of the results to any query. Hence, the DBMS will never return some results in a strict Nash equilibrium no matter 
what query is submitted. Interestingly, as Example~\ref{example:equilib:strictbest} suggests some of these results may be the results that perfectly satisfy some user's intents. That is, given intent $e_i$ over database instance $I$, the DBMS may never return $e_i(I)$ in a strict Nash equilibrium. 
Using a proof similar to the 
one of Lemma~\ref{lemma:Nash}, we
have the following properties of strict Nash equilibria of a game.
A strategy profile $(U,D)$ is a strict Nash equilibrium w.r.t. effectiveness measure $r$ if and only if:
\squishlisttwo
    \item Every intent $e$ has a unique best query and the user strategy maps $e$ to its best query, i.e., $e \in PL(q_i)$.
    \item $D$ is the strict best response to $U$.
\end{list}
\subsection{Number of Equilibria}
A natural question is how many (strict) Nash equilibria exist in a game. Theorem~\ref{theorem:strictNash:Pure} guarantees that both user and DBMS strategies in a strict Nash  
equilibrium are pure. Thus, given that the sets of intents and queries are finite, 
there are finitely many strict Nash
equilibria in the game. We note that each set of results 
is always finite.
However, we will show that  
if the sets of intents and queries in a game are finite, 
the game has infinite Nash equilibria. 
\begin{lemma}
\label{theorem:infiniteNash}
If a game has a non-strict Nash equilibrium. Then there is an infinitely many Nash equilibria.
\end{lemma}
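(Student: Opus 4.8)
The plan is to exploit the bilinearity of the payoff $u_r$ together with the characterizations in Lemma~\ref{lemma:BestResponse} and Lemma~\ref{lemma:Nash}: a non-strict equilibrium sits on a positive-dimensional face of one player's best-response polytope, and I would move along that face to manufacture a continuum of equilibria. First I would unpack what non-strictness buys us. Since $(U,D)$ is a Nash equilibrium, by Lemma~\ref{lemma:Nash} it is a mutual best response: $U$ is supported on best queries for each intent and, by Lemma~\ref{lemma:BestResponse}, $D$ maps each query to a best reply. Comparing this with the strict characterizations (Corollary~\ref{theorem:BestResponse:strict} and the displayed conditions following Theorem~\ref{theorem:strictNash:Pure}), failing to be strict in the sense of Definition~\ref{def:StrictNash} forces one of two situations: (I) some intent $e_{i_0}$ has at least two best queries with respect to $D$, or (II) some query $q_{j_0}$ has at least two best replies with respect to $U$, i.e. the set of results $s_\ell$ maximizing $\sum_i \pi_i U_{i,j_0}\, r(e_i,s_\ell)$ has size at least two. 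Indeed, if every intent had a unique best query and every query a unique best reply, mutual best response would pin down both $U$ and $D$ and the profile would be strict.

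I would treat Case (II); Case (I) is symmetric with the roles of $U$ and $D$ exchanged. Pick two distinct best replies $s_a,s_b$ for $q_{j_0}$ and, for small $t\ge 0$, let $D^{(t)}$ agree with $D$ except in row $j_0$, where a mass $t$ is shifted from $s_a$ to $s_b$. Each $D^{(t)}$ is row-stochastic, the $D^{(t)}$ are pairwise distinct, and row $j_0$ stays supported on the best-reply set, so by Lemma~\ref{lemma:BestResponse} every $D^{(t)}$ is a best response to $U$; hence the second condition of Lemma~\ref{lemma:Nash} holds for all $t$. It remains to preserve the first condition, namely that $U$ is still supported on best queries under $D^{(t)}$. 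Only payoffs routed through $q_{j_0}$ change, and linearly: $u_r(e_i,q_{j_0})$ shifts by $t\,(r(e_i,s_b)-r(e_i,s_a))$ while every $u_r(e_i,q_k)$ with $k\ne j_0$ is unchanged. Any best-query comparison that is strict at $t=0$ therefore persists for all sufficiently small $t$ by continuity, so $(U,D^{(t)})$ is a Nash equilibrium on a whole interval of $t$, yielding infinitely many equilibria.

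The delicate point, and the step I expect to be the main obstacle, is the comparisons that are tight at $t=0$: intents for which $q_{j_0}$ is tied as a best query with the query they actually use. For such an intent the shift must keep $u_r(e_i,q_{j_0})$ on the correct side of that tie (not below the rival value when $e_i\in PL(q_{j_0})$, not above it when $e_i$ uses a different best query), which constrains the sign of $t$ through the sign of $r(e_i,s_b)-r(e_i,s_a)$. I would resolve this by choosing the shift direction, and, when the best-reply set of $q_{j_0}$ has more than two elements, a suitable zero-sum perturbation vector supported on it, so that all of these active ties are respected at once. Any intent producing such a tie simultaneously witnesses Case (I), so in a purely Case-(II) situation, such as the interior of the one-parameter family in Example~\ref{example:nashex}, no tight comparison arises and the continuum is immediate.

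Thus the heart of the argument is showing that a consistent admissible perturbation direction always exists, and otherwise falling back on the symmetric user-side construction of Case (I). Once a nonzero such direction is secured, bilinearity keeps the perturbed player's best-response condition exactly satisfied and continuity keeps the opponent's condition satisfied for a range of $t$, so the profiles $(U,D^{(t)})$ (or $(U^{(t)},D)$ in Case (I)) form an infinite family of Nash equilibria, which is exactly the claim.
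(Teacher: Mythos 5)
Your overall engine---bilinearity plus motion along a best-response face---is the same as the paper's, but your execution has a genuine gap at precisely the point you flag. The paper's argument is global: it takes \emph{two} Nash equilibria sharing a component, $(U,D)$ and $(U,D')$ with $D\neq D'$, notes $u_r(U,\alpha D+(1-\alpha)D')=u_r(U,D)$ by bilinearity, and concludes that every convex combination is again an equilibrium. You instead try to manufacture the second equilibrium by a zero-sum perturbation supported on a \emph{single row} of $D$ (or of $U$), and your proposed resolution of the tight ties is circular: a tie obstructing the Case-(II) move witnesses Case (I), but the Case-(I) move can then be obstructed by a tie on the DBMS side which witnesses Case (II) again, and nothing guarantees this bouncing terminates in an admissible direction.

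Concretely, it need not: take two intents with $\pi=(\tfrac12,\tfrac12)$, two queries, two results, $r(e_i,s_\ell)=1$ if $i=\ell$ and $0$ otherwise; let $U$ send both intents to $q_1$ and let $D$ map both $q_1$ and $q_2$ to $s_1$. Then $(U,D)$ is a Nash equilibrium (every unilateral deviation by either player yields exactly $\tfrac12$), it is non-strict, and both of your cases hold (each query's best-reply set is all of $\{s_1,s_2\}$, and $e_1$ has both queries as best queries, with $u_r(e_1,q_1)=u_r(e_1,q_2)=1$). Yet every move in your perturbation class destroys the equilibrium: shifting mass in row $q_1$ of $D$ toward $s_2$ gives $u_r(e_1,q_1)=1-t<1=u_r(e_1,q_2)$ with $e_1\in PL(q_1)$, violating the first condition of Lemma~\ref{lemma:Nash}; shifting row $q_2$ toward $s_2$ gives $u_r(e_2,q_2)=t>0=u_r(e_2,q_1)$ with $e_2\in PL(q_1)$; shifting $U$'s row for $e_1$ toward $q_2$ makes $s_2$ the unique best reply to $q_1$, so $D$'s row $(1,0)$ is no longer a best reply; shifting $e_2$'s row toward $q_2$ does the same to row $q_2$; and the opposite directions are infeasible at the simplex vertices. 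Infinitely many equilibria do exist here---e.g.\ $(U,D_\gamma)$ with \emph{both} rows of $D_\gamma$ equal to $(\gamma,1-\gamma)$---but reaching them requires moving two rows of $D$ in a coordinated way, outside your single-row class. So the ``consistent admissible perturbation direction'' your proof rests on can fail to exist among single-row moves, and a repair needs a multi-row argument, e.g.\ observing that $\{D:(U,D)\ \mathrm{is\ Nash}\}$ is convex (a best-response face intersected with half-spaces linear in $D$) and exhibiting a second point of it, which is in essence the paper's route (and, to be fair, the paper itself leaves the step from non-strictness to that second equilibrium implicit).
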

\begin{proof}
The result follows from the fact that the payoff function \eqref{eqn:payoff} is a bilinear form of $U$ and $D$, i.e.\ it is a linear of $D$ when $U$ is fixed and a linear function of $U$, when $D$ is fixed. If for $D\not = D'$, $(U,D)$ and $(U,D')$ are Nash-equilibria, then $u_r(U,D)=u_r(U,D')$. Therefore $u_r(U,\alpha D+(1-\alpha)D')=u_r(U,D)$ for any $\alpha\in \R$. In particular, for $\alpha\in [0,1]$, if $D,D'$ are stochastic matrices,  $\alpha D+(1-\alpha)D'$ will be a stochastic matrix and hence, $(U,\alpha D+(1-\alpha)D')$ is a Nash equilibrium as well. Similarly, if $(U',D)$  and $(U,D)$ are Nash equilibria for $U\not=U'$, then $u_r(\alpha U+(1-\alpha)U',D)=u_r(U,D)$ and $(\alpha U+(1-\alpha)U',D)$ is a Nash-equilibrium for any $\alpha\in [0,1]$. 
\end{proof}

\begin{theorem}
\label{theorem:infiniteNash2}
Given a game with finitely many intents and queries, 
if the game has a non-strict Nash equilibrium, 
it has an infinite number of Nash equilibria.
\end{theorem}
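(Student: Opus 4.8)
The plan is to reduce to Lemma~\ref{theorem:infiniteNash}: it suffices to exhibit a one-parameter family of Nash equilibria, or equivalently two distinct Nash equilibria that share one component. Write $\mathrm{BR}_U(D)$ for the set of user strategies that are best responses to $D$ and $\mathrm{BR}_D(U)$ for the DBMS strategies that are best responses to $U$. By the characterization of best queries and by Lemma~\ref{lemma:BestResponse}, $\mathrm{BR}_U(D)$ consists of exactly those user strategies placing mass only on the best queries of each intent, and $\mathrm{BR}_D(U)$ of those DBMS strategies mapping each query to one of its best replies; both are convex polytopes (products of simplices). By the strict-Nash characterization following Theorem~\ref{theorem:strictNash:Pure}, $(U,D)$ is a strict Nash equilibrium precisely when both polytopes are singletons, so since $(U,D)$ is assumed non-strict, at least one of them is not a singleton. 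I would assume $\mathrm{BR}_U(D)\neq\{U\}$; the case $\mathrm{BR}_D(U)\neq\{D\}$ is symmetric.

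Given a second point $U'\in \mathrm{BR}_U(D)$ with $U'\neq U$, I would set $U_t=(1-t)U+tU'$ and study $T=\{t\in[0,1]:(U_t,D)\text{ is a Nash equilibrium}\}$. By convexity $U_t\in\mathrm{BR}_U(D)$ for every $t$, so by Lemma~\ref{lemma:Nash} the profile $(U_t,D)$ is Nash if and only if $D$ is still a best response to $U_t$, that is, for each query $q_j$ the result chosen by $D$ maximizes $\sum_k \pi_k (U_t)_{k,j}\,r(e_k,s_\ell)$ over the finitely many results $s_\ell$. Each such comparison is an inequality linear in $t$, and there are finitely many of them because the intents, queries, and results are all finite --- this is exactly where the finiteness hypothesis enters. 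Hence $T$ is a finite intersection of closed half-lines, i.e.\ a closed subinterval of $[0,1]$ containing $0$. If $T$ has positive length, the profiles $(U_t,D)$ for $t\in T$ already form infinitely many Nash equilibria; alternatively the two endpoints of $T$ are distinct Nash equilibria sharing $D$, and Lemma~\ref{theorem:infiniteNash} yields the infinite family.

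The one step that needs care --- and which I expect to be the real obstacle --- is excluding the degenerate possibility $T=\{0\}$. This can occur only if moving from $U$ toward $U'$ instantly destroys a best reply of $D$: for some query $D$ currently uses a \emph{tied} best reply and the tie breaks against it in the direction $U'-U$. To rule this out I would exploit the freedom that remains. First, if $U$ lies in the relative interior of a positive-dimensional face of $\mathrm{BR}_U(D)$ one may perturb in either direction, and a tie that breaks unfavorably for $U'-U$ breaks favorably for $U-U'$. Second, one may simultaneously switch $D$'s offending replies to the tied best replies that survive the chosen direction; since these remain best replies to $U$, the starting profile is still a Nash equilibrium and the argument may be restarted from it. The genuine difficulty is coordinating these choices across all queries at once; should the user side resist a nondegenerate perturbation, the symmetric construction on $\mathrm{BR}_D(U)\neq\{D\}$, which must then hold, supplies the family. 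Apart from this tie-breaking bookkeeping, the proof is only linearity of the payoff~\eqref{eqn:payoff} in each argument together with the convexity already isolated in Lemma~\ref{theorem:infiniteNash}.
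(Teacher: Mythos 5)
Your reduction is the right instinct, and your diagnosis of where the argument gets hard is accurate---but the proposal does not close the gap it identifies, and the fallback you offer for the degenerate case $T=\{0\}$ is a non sequitur. Non-strictness of $(U,D)$ gives you only the disjunction $\mathrm{BR}_U(D)\neq\{U\}$ \emph{or} $\mathrm{BR}_D(U)\neq\{D\}$. If the first disjunct is the one that holds and your segment argument degenerates there, nothing entitles you to conclude that $\mathrm{BR}_D(U)\neq\{D\}$ ``must then hold''; and even when it does hold, the symmetric construction faces exactly the same obstruction, since $D'\in\mathrm{BR}_D(U)$ does not make $(U,D')$ a Nash equilibrium unless $U$ remains a best response to $D'$. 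The repairs you sketch for $T=\{0\}$ (perturbing from the relative interior of a face of $\mathrm{BR}_U(D)$, switching $D$'s offending replies to surviving tied best replies and restarting) are not carried out, and you concede the coordination of these choices across all queries is unresolved. As it stands, your argument proves the theorem only under the stronger hypothesis that two Nash equilibria share a component---which is precisely the bridge that needed proof.

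It is worth saying why this step is the crux and not bookkeeping. At the level of generality your degenerate-case handling actually uses---payoff bilinear in $(U,D)$, identical interests, compact polytopal strategy sets---the implication from ``non-strict Nash'' to ``two Nash equilibria sharing a component'' is \emph{false}: in the $2\times 2$ common-payoff coordination game with payoff matrix $\left(\begin{smallmatrix}1&0\\0&1\end{smallmatrix}\right)$, the uniformly mixed profile is a non-strict Nash equilibrium, yet the game has exactly three Nash equilibria, no two of which share a component. (This game is not an instance of \eqref{eqn:payoff}, so it does not refute the theorem, but it shows any correct completion must exploit the specific signaling structure---unused query rows of $D$, per-row ties in the weighted rewards $\sum_i \pi_i U_{ij} r_{i\ell}$---rather than bilinearity and convexity alone.) For comparison, the paper's proof takes a different and much shorter route: it invokes Nash's existence theorem for finite games, observes via Theorem~\ref{theorem:strictNash:Pure} that a properly mixed equilibrium cannot be strict, and then cites Lemma~\ref{theorem:infiniteNash} as stated. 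That route never confronts your $T=\{0\}$ case---but only because the proof of Lemma~\ref{theorem:infiniteNash} itself quietly assumes the existence of two Nash equilibria sharing a component, i.e., the very bridge you tried and failed to build. Your attempt thus exposes a real lacuna in the paper's lemma; it just does not repair it.
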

\begin{proof}
Every finite game has always a mixed Nash equilibrium \cite{GameTheoryBook}. 
According to Theorem~\ref{theorem:strictNash:Pure}, 
a mixed Nash is not a strict Nash equilibrium. 
Therefore, using Lemma~\ref{theorem:infiniteNash}, the game will
have infinitely many Nash equilibria.
\end{proof}

\subsection{Efficiency}
\label{sec:desirability}
In this section we discuss the efficiency of different equilibria. We refer to the value of the 
utility (payoff) in formula~\eqref{eqn:payoff} at a strategy profile to the \textit{efficiency} of the strategy. 
Therefore, the most efficient strategy profile is naturally the one that maximizes \eqref{eqn:payoff}. 
We refer to an equilibria with maximum efficiency as an \textit{efficient equilibrium}.

Thus far we have discussed two types of equilibria, Nash and strict Nash, that once reached it is unlikely that either player will deviate from its current strategy. In some cases it may be possible to enter a state of equilibria where neither player has any incentive to deviate, but that equilibria may not be an efficient equilibrium. 

The strategy profile in Table~\ref{example:framework:table:beststrategy:nash} provides the highest payoff for the user and DBMS given the intents and queries in Tables~\ref{example:framework:table:intents} and~\ref{example:framework:table:queries} over the database in Table~\ref{example:framework:table:instance}. However, some Nash equilibria may not provide high payoffs. For instance, Table~\ref{example:framework:table:badstrategy:nash} depicts another strategy profile for the set of intents and queries in Tables~\ref{example:framework:table:intents} and~\ref{example:framework:table:queries} over the database in Table~\ref{example:framework:table:instance}. In this strategy profile, the user has little knowledge about the database content and expresses all of her intents using a single query $q_2$, which asks for the ranking of universities whose abbreviations are \textit{MSU}. Given query $q_2$, the DBMS always returns the ranking of Michigan State University. Obviously, the DBMS always returns the non-relevant answers for the intents of finding the rankings of Mississippi State University and Missouri State University. If all intents have equal prior probabilities, this strategy profile is a Nash equilibrium. For example, the user will not get a higher payoff by increasing their knowledge about the database and using query $q_1$ to express intent $e_2$. Clearly, the payoff of this strategy profile is less than the strategy profile in Table~\ref{example:framework:table:beststrategy:nash}. Nevertheless, the user and the DBMS do not have any incentive to leave this undesirable stable state once reached and will likely stay in this state.
\begin{definition}
\label{def:optimality:nash}
A strategy profile ($U$,$D$) is optimal w.r.t.\ an effectiveness measure $r$ if we have $u_{r}(U,D) \geq u(U',D')$ for all DBMS strategies $D'$ and $U'$
\end{definition}
\noindent
Since, the games discussed in this paper are games 
of identical interest, i.e.\ the payoff of the user and the DBMS are the same, therefore, an optimal strategy $(U,D)$ (w.r.t.\ an effectiveness measure $r$) is a Nash equilibrium. 
\begin{lemma}
    A strategy ($U$,$D$) is optimal if and only if it is an efficient equilibrium. 
\end{lemma}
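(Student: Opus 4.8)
The plan is to prove both directions of the equivalence, leaning on the observation stated just above the lemma that in a game of identical interest every optimal profile is automatically a Nash equilibrium. Throughout, recall that here an \emph{efficient equilibrium} means a Nash equilibrium whose payoff \eqref{eqn:payoff} is maximal among all Nash equilibria, whereas \emph{optimal} (Definition~\ref{def:optimality:nash}) means the payoff is maximal among \emph{all} strategy profiles, for the fixed effectiveness measure $r$.

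For the forward direction I would argue directly. Suppose $(U,D)$ is optimal. By the preceding remark it is a Nash equilibrium. Since $u_r(U,D)\geq u_r(U',D')$ holds for every profile $(U',D')$, it holds in particular for every Nash equilibrium $(U',D')$; hence $(U,D)$ attains the maximum payoff among Nash equilibria and is therefore an efficient equilibrium. This direction is immediate and requires no extra machinery.

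The substance lies in the converse. First I would establish that the payoff function attains a global maximum over the whole profile space: the set of user strategies (row-stochastic $m\times n$ matrices) and the set of DBMS strategies (row-stochastic $n\times o$ matrices) are each a finite product of probability simplices, hence compact, and $u_r$ is bilinear and therefore continuous on this compact domain, so by the extreme value theorem a global maximizer $(U^\star,D^\star)$ exists. By definition $(U^\star,D^\star)$ is optimal, and so, invoking the remark once more, it is itself a Nash equilibrium. Now let $(U,D)$ be any efficient equilibrium. Since $(U,D)$ has maximal payoff among all Nash equilibria and $(U^\star,D^\star)$ is one such equilibrium, we get $u_r(U,D)\geq u_r(U^\star,D^\star)$; conversely $(U^\star,D^\star)$ is a global maximizer, so $u_r(U^\star,D^\star)\geq u_r(U,D)$. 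Hence $u_r(U,D)=u_r(U^\star,D^\star)$ equals the global maximum, which is exactly the statement that $(U,D)$ is optimal.

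The only genuine obstacle is guaranteeing that the global maximum is actually achieved, so that it can be identified with a concrete Nash equilibrium lying inside the set over which an efficient equilibrium maximizes; this is what lets the efficient-equilibrium value be compared against the true optimum. Once compactness of the simplex products and continuity of the bilinear form \eqref{eqn:payoff} are noted, the remainder is a short chain of inequalities, and the decisive leverage is the already-established fact that optimal profiles are Nash equilibria.
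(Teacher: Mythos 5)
Your proposal is correct and follows essentially the same route as the paper's proof: both directions rest on the remark that optimal profiles are Nash equilibria (by identical interests), on compactness of the row-stochastic strategy spaces plus continuity of the bilinear payoff \eqref{eqn:payoff} to guarantee a global maximizer exists, and on a short payoff comparison identifying the efficient-equilibrium value with the global maximum. Your write-up merely makes explicit the inequality chain that the paper states more tersely.
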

\begin{proof}
    Note that if $(U,D)$ is optimal, then none of the two players (i.e.\ the user and the DBMS) has a unilateral incentive to deviate. Therefore $(U,D)$ is a Nash equilibrium. On the other hand, since the payoff function \eqref{eqn:payoff} is a continuous function of $U$ and $D$ and the domain of row-stochastic matrices is a compact space, therefore a maximizer $(U,D)$ of \eqref{eqn:payoff} exists and by the previous part it is a Nash equilibrium. Note that the efficiency of all strategies are bounded by the efficiency of an optimal strategy and hence, any efficient equilibrium is optimal. 
\end{proof}

Similar to the analysis on efficiency of a Nash equilibria, there are strict Nash equilibria that are less efficient than others. Strict Nash equilibria strategy profiles are unlikely to deviate from the current strategy profile, since any unilateral deviation will result in a lower payoff. From this we can say that strict Nash equilibria are also more stable than Nash equilibria since unilateral deviation will always have a lower payoff.


\begin{table}[htbp]
    \centering
    \caption{Strategy Profile 1}
    \begin{subtable}{.45\linewidth}
        \centering
        \caption{User strategy}%
        \begin{tabular}{c|c|c|}
            \cline{2-3}
             & \multicolumn{1}{c|}{$q_1$} &$q_2$\\
            \hline
            \multicolumn{1}{|c|}{$e_1$} & 0 & 1\\
            \hline
            \multicolumn{1}{|c|}{$e_2$} & 1 & 0\\
            \hline
            \multicolumn{1}{|c|}{$e_3$} & 1 & 0\\
            \hline
        \end{tabular}
    \end{subtable}
    \begin{subtable}{.45\linewidth}
        \centering
        \caption{Database strategy}
        \begin{tabular}{c|c|c|c|}
            \cline{2-4}
             & \multicolumn{1}{c|}{$e_1$} & $e_2$ & $e_3$ \\
            \hline
            \multicolumn{1}{|c|}{$q_1$} & 0 & 0 & 1\\
            \hline
            \multicolumn{1}{|c|}{$q_2$} & 1 & 0 & 0\\
            \hline
        \end{tabular}
    \end{subtable}
    \label{example:equilib:desirability:strictNash:less}
\end{table}

\begin{table}[htbp]
    \centering
    \caption{Strategy Profile 2}
    \begin{subtable}{.45\linewidth}
        \centering
        \caption{User Strategy}
        \begin{tabular}{c|c|c|}
            \cline{2-3}
             & \multicolumn{1}{c|}{$q_1$} &$q_2$\\
            \hline
            \multicolumn{1}{|c|}{$e_1$} & 0 & 1\\
            \hline
            \multicolumn{1}{|c|}{$e_2$} & 0 & 1\\
            \hline
            \multicolumn{1}{|c|}{$e_3$} & 1 & 0\\
            \hline
        \end{tabular}
    \end{subtable}
    \begin{subtable}{.45\linewidth}
        \centering
        \caption{Database Strategy}
        \begin{tabular}{c|c|c|c|}
            \cline{2-4}
             & \multicolumn{1}{c|}{$e_1$} & $e_2$ & $e_3$ \\
            \hline
            \multicolumn{1}{|c|}{$q_1$} & 0 & 0 & 1\\
            \hline
            \multicolumn{1}{|c|}{$q_2$} & 0 & 1 & 0\\
            \hline
        \end{tabular}
    \end{subtable}
    \label{example:equilib:desirability:strictNash:more}
\end{table}

As an example of a strict Nash equilibrium that is not efficient, consider both strategy profiles illustrated in Tables~\ref{example:equilib:desirability:strictNash:less} and~\ref{example:equilib:desirability:strictNash:more}. Note that the intents. queries, and results in this example are different from the ones in the previous examples. For this illustration, we set the rewards to $r(e_1,s_1)=1$, $r(e_2,s_2) = 2$, $r(e_2,s_3) = 0.1$, and $r(e_3,s_3) = 3$ where all other rewards are 0. Using our payoff function in Equation~\ref{eqn:payoff} we can calculate the total payoff for the strategy profile in Table~\ref{example:equilib:desirability:strictNash:less} as $u(U,D) = 4.1$. This strategy profile is a strict Nash since any unilateral deviation by either player will result in a strictly worse payoff. Consider the strategy profile in Table~\ref{example:equilib:desirability:strictNash:more} with payoff $u(U,D) = 5$. This payoff is higher than the payoff the strategy profile in Table~\ref{example:equilib:desirability:strictNash:less} receives. It is also not likely for the strategy profile with less payoff to change either strategy to the ones in the strategy profile with higher payoff as both are strict Nash. 

\subsection{Limiting Equilibria of the Investigated Learning Mechanisms}
An important immediate direction to be explored is to study the limiting behavior and equilibria of the game in which the user and DBMS use the 
adaptation mechanisms in Section~\ref{sec:learning}. 
Such questions are partially addressed in \cite{hu2011reinforcement} for the case that both the 
players in a singling game adapt their strategy synchronously and identically. 
In this case, in the limit, only certain equilibria can emerge. 
In particular, eventually, either an object, intent in our model, will be assigned to 
many signals, queries in our model, (synonyms) or many intents can be assigned to one query (polysemy). 
But two intents will not be assigned to two queries (see Theorem 2.3 in \cite{hu2011reinforcement} for more details).
The authors in  \cite{hu2011reinforcement} follow a traditional 
language game approach as explained in Section~\ref{sec:preliminaries}.
Our immediate future research directions on the study of the adaptation mechanism is to 
study the convergence properties of the proposed reinforcement algorithm in Section \ref{sec:arbitrary} 
and Section \ref{sec:userlearning}. 
In particular, when the user is not adapting, does the strategy of 
the DBMS converge to some limiting strategy such as is the best response to the user's strategy? 
Also, when both the user and the DBMS adapt to each other, what equilibria will emerge? 
We believe that answers to these questions will significantly contribute to the proposed 
learning framework in the database systems and provide a novel theoretical perspective on
the efficiency of a query interface.

\section{Related Work}
\label{sec:preliminaries}
\textbf{Query learning:}
Database community has proposed several systems that help the DBMS learn the user's information need by showing examples to the user and collecting her feedback~\cite{Maier:VLDB:2015,Dimitriadou:2014:EAQ:2588555.2610523,bonifati:hal-01187986,Tran:SIGMOD:2009,DBLP:conf/pods/AbouziedAPHS13}. In these systems, a user {\it explicitly teaches} the system by labeling a set of examples potentially in several steps without getting any answer to her information need. Thus, the system is broken into two steps: first it learns the information need of the user by soliciting labels on certain examples from the user and then once the learning has completed, it suggests a query that may express the user's information need. These systems usually leverage active learning methods to learn the user intent by showing the fewest possible examples to the user \cite{Dimitriadou:2014:EAQ:2588555.2610523}. However, ideally one would like to have a query interface in which the DBMS learns about the user's intents while answering her (vague) queries as our system does. As opposed to active learning methods, one should combine and balance exploration and learning with the normal query answering to build such a system. Thus, we focus on interaction systems that combine the capabilities of traditional query answering paradigm and leveraging user feedback. Moreover, current query learning systems assume that users follow a fixed strategy for expressing their intents. Also, we focus on the problems that arise in the long-term interaction that contain more than a single query and intent.

\textbf{Game-theoretic Models in Information Systems:}
Game theoretic approaches have been used in various areas of computer science, such as distributed systems, planning, security, and data mining~\cite{Abraham:PODC:2006,Koller+Pfeffer:IJCAI95,GMRS16IJCAI,Shoham:2008:CSG:1378704.1378721,DBLP:conf/icdt/SchusterS15,Gottlob:2001:RMG:375551.375579,Ma:BigScholar:2014}. Researchers have also leveraged economical models to build query interfaces that return desired results to the users using the fewest possible interactions \cite{Cheng:SIGIR:2015}. In particular, researchers have recently applied game-theoretic approaches to model the actions taken by users and document retrieval systems in a single session \cite{Luo:SIGIR:2014}. They propose a framework to find out whether the user likes to continue exploring the current topic or move to another topic. We, however, explore the development of common representations of intents between the user and DMBS. We also investigate the interactions that may contain various sessions and topics. Moreover, we focus on structured rather than unstructured data. Avestani \etal\ have used signaling games to create a shared lexicon between multiple autonomous systems~\cite{Avesani:WWW:2005}. Our work, however, focuses on modeling users' information needs and development of mutual understanding between users and the DBMS. Moreover, as opposed to the autonomous systems, a DBMS and user may update their information about the interaction in different time scales. We also propose novel strategy adaptation mechanism for the DBMS and efficient algorithms based on this mechanism over relational data.

\textbf{Signaling and Language Games:} 
Our game is special case of signaling games, which model communication between two or more agents and have been widely used in economics, sociology, biology, and linguistics~\cite{Lewis:Convention:69,Cho:QuarterlyJournalEconomics:1987,nowak1999evolution,Donaldson:MathBiology:2007}. Generally speaking, in a signaling game a player observes the current state of the world and informs the other player(s) by sending a signal. The other player interprets the signal and makes a decision and/or performs an action that affect the payoff of both players. A signaling game may not be cooperative in which the interests of players do not coincide~\cite{Cho:QuarterlyJournalEconomics:1987}. Our framework extends a particular category of signaling games called language games~\cite{Trapa:MathBiology:2000,nowak1999evolution,Donaldson:MathBiology:2007} and is closely related to learning in signaling games~\cite{hu2011reinforcement,touri2013language,fox2015dynamics}. These games have been used to model the evolution of a population's language in a shared environment. In a language game, the strategy of each player is a stochastic mapping between a set of signals and a set of states. Each player observes its internal state, picks a signal according to its strategy, and sends the signal to inform other player(s) about its state. If the other player(s) interpret the correct state from the sent signal, the communication is successful and both players will be rewarded. Our framework  differs from language games in several fundamental aspects. First, in a language game every player signals, but only one of our players, i.e., user, sends signals. Second, language games model states as an unstructured set of objects. However, each user's intent in our framework is a set of tuples and different intents may intersect. Third, the signals in language games do not posses any particular meaning and can be assigned to every state. A database query, however, restricts its possible answers. Finally, there is not any work on language games on analyzing the dynamics of reinforcement learning where players learn in different time scales.

The authors in \cite{hu2011reinforcement} have also analyzed the effectiveness of a 2-player signaling game in which both players use Roth and Erev's model for learning.  However, they assume that both players learn at the same time-scale. Our result in this section holds for the case where users and DBMS learn at different time-scales, which may arguably be the dominant case in our setting as generally users may learn in a much slower time-scale compared to the DBMS. 

\section{Conclusion}
\label{sec:conclusion}
Many users do {\it not} know how to express their information needs. A DBMS may interact with these users and learn their information needs. We showed that users learn and modify how they express their information needs during their interaction with the DBMS and modeled the interaction between the user and the DBMS as a game, where the players would like to establish a common mapping from information needs to queries via learning. As current query interfaces do {\it not} effectively learn the information needs behind queries in such a setting, we proposed a reinforcement learning algorithm for the DBMS that learns the querying strategy of the user effectively. We provided efficient implementations of this learning mechanisms over large databases.
\bibliographystyle{ACM-Reference-Format}
\bibliography{ref}


\begin{thebibliography}{66}


\ifx \showCODEN    \undefined \def \showCODEN     #1{\unskip}     \fi
\ifx \showDOI      \undefined \def \showDOI       #1{#1}\fi
\ifx \showISBNx    \undefined \def \showISBNx     #1{\unskip}     \fi
\ifx \showISBNxiii \undefined \def \showISBNxiii  #1{\unskip}     \fi
\ifx \showISSN     \undefined \def \showISSN      #1{\unskip}     \fi
\ifx \showLCCN     \undefined \def \showLCCN      #1{\unskip}     \fi
\ifx \shownote     \undefined \def \shownote      #1{#1}          \fi
\ifx \showarticletitle \undefined \def \showarticletitle #1{#1}   \fi
\ifx \showURL      \undefined \def \showURL       {\relax}        \fi
\providecommand\bibfield[2]{#2}
\providecommand\bibinfo[2]{#2}
\providecommand\natexlab[1]{#1}
\providecommand\showeprint[2][]{arXiv:#2}

\bibitem[\protect\citeauthoryear{Abiteboul, Hull, and Vianu}{Abiteboul
  et~al\mbox{.}}{1994}]%
        {AliceBook}
\bibfield{author}{\bibinfo{person}{Serge Abiteboul}, \bibinfo{person}{Richard
  Hull}, {and} \bibinfo{person}{Victor Vianu}.}
  \bibinfo{year}{1994}\natexlab{}.
\newblock \bibinfo{booktitle}{{\em {Foundations of Databases: The Logical
  Level}}}.
\newblock \bibinfo{publisher}{Addison-Wesley}.
\newblock


\bibitem[\protect\citeauthoryear{Abouzied, Angluin, Papadimitriou, Hellerstein,
  and Silberschatz}{Abouzied et~al\mbox{.}}{2013}]%
        {DBLP:conf/pods/AbouziedAPHS13}
\bibfield{author}{\bibinfo{person}{Azza Abouzied}, \bibinfo{person}{Dana
  Angluin}, \bibinfo{person}{Christos~H. Papadimitriou},
  \bibinfo{person}{Joseph~M. Hellerstein}, {and} \bibinfo{person}{Avi
  Silberschatz}.} \bibinfo{year}{2013}\natexlab{}.
\newblock \showarticletitle{Learning and verifying quantified boolean queries
  by example}. In \bibinfo{booktitle}{{\em PODS}}.
\newblock


\bibitem[\protect\citeauthoryear{Abraham, Dolev, Gonen, and Halpern}{Abraham
  et~al\mbox{.}}{2006}]%
        {Abraham:PODC:2006}
\bibfield{author}{\bibinfo{person}{I. Abraham}, \bibinfo{person}{D. Dolev},
  \bibinfo{person}{R. Gonen}, {and} \bibinfo{person}{Joseph Halpern}.}
  \bibinfo{year}{2006}\natexlab{}.
\newblock \showarticletitle{Distributed computing meets game theory: robust
  mechanisms for rational secret sharing and multiparty computation}. In
  \bibinfo{booktitle}{{\em PODC}}.
\newblock


\bibitem[\protect\citeauthoryear{Auer, Cesa-Bianchi, and Fischer}{Auer
  et~al\mbox{.}}{2002a}]%
        {auer2002finite}
\bibfield{author}{\bibinfo{person}{Peter Auer}, \bibinfo{person}{Nicolo
  Cesa-Bianchi}, {and} \bibinfo{person}{Paul Fischer}.}
  \bibinfo{year}{2002}\natexlab{a}.
\newblock \showarticletitle{Finite-time analysis of the multiarmed bandit
  problem}.
\newblock \bibinfo{journal}{{\em Machine learning\/}} \bibinfo{volume}{47},
  \bibinfo{number}{2-3} (\bibinfo{year}{2002}), \bibinfo{pages}{235--256}.
\newblock


\bibitem[\protect\citeauthoryear{Auer, Cesa-Bianchi, Freund, and Schapire}{Auer
  et~al\mbox{.}}{2002b}]%
        {auer2002nonstochastic}
\bibfield{author}{\bibinfo{person}{Peter Auer}, \bibinfo{person}{Nicolo
  Cesa-Bianchi}, \bibinfo{person}{Yoav Freund}, {and} \bibinfo{person}{Robert~E
  Schapire}.} \bibinfo{year}{2002}\natexlab{b}.
\newblock \showarticletitle{The nonstochastic multiarmed bandit problem}.
\newblock \bibinfo{journal}{{\em SIAM journal on computing\/}}
  \bibinfo{volume}{32}, \bibinfo{number}{1} (\bibinfo{year}{2002}),
  \bibinfo{pages}{48--77}.
\newblock


\bibitem[\protect\citeauthoryear{Avesani and Cova}{Avesani and Cova}{2005}]%
        {Avesani:WWW:2005}
\bibfield{author}{\bibinfo{person}{Paolo Avesani} {and} \bibinfo{person}{Marco
  Cova}.} \bibinfo{year}{2005}\natexlab{}.
\newblock \showarticletitle{Shared lexicon for distributed annotations on the
  {Web}}. In \bibinfo{booktitle}{{\em WWW}}.
\newblock


\bibitem[\protect\citeauthoryear{Barrett and Zollman}{Barrett and
  Zollman}{2008}]%
        {Barrett}
\bibfield{author}{\bibinfo{person}{J.~A. Barrett} {and} \bibinfo{person}{K.
  Zollman}.} \bibinfo{year}{2008}\natexlab{}.
\newblock \showarticletitle{The Role of Forgetting in the Evolution and
  Learning of Language}.
\newblock \bibinfo{journal}{{\em Journal of Experimental and Theoretical
  Artificial Intelligence\/}} \bibinfo{volume}{21}, \bibinfo{number}{4}
  (\bibinfo{year}{2008}), \bibinfo{pages}{293--309}.
\newblock


\bibitem[\protect\citeauthoryear{Bonifati, Ciucanu, and Staworko}{Bonifati
  et~al\mbox{.}}{2015}]%
        {bonifati:hal-01187986}
\bibfield{author}{\bibinfo{person}{Angela Bonifati}, \bibinfo{person}{Radu
  Ciucanu}, {and} \bibinfo{person}{Slawomir Staworko}.}
  \bibinfo{year}{2015}\natexlab{}.
\newblock \showarticletitle{Learning Join Queries from User Examples}.
\newblock \bibinfo{journal}{{\em TODS\/}} \bibinfo{volume}{40},
  \bibinfo{number}{4} (\bibinfo{year}{2015}).
\newblock


\bibitem[\protect\citeauthoryear{Bush and Mosteller}{Bush and
  Mosteller}{1953}]%
        {bush1953stochastic}
\bibfield{author}{\bibinfo{person}{Robert~R Bush} {and}
  \bibinfo{person}{Frederick Mosteller}.} \bibinfo{year}{1953}\natexlab{}.
\newblock \showarticletitle{A stochastic model with applications to learning}.
\newblock \bibinfo{journal}{{\em The Annals of Mathematical Statistics\/}}
  (\bibinfo{year}{1953}), \bibinfo{pages}{559--585}.
\newblock


\bibitem[\protect\citeauthoryear{Cen, Gan, and Bai}{Cen et~al\mbox{.}}{2013}]%
        {cen2013reinforcement}
\bibfield{author}{\bibinfo{person}{Yonghua Cen}, \bibinfo{person}{Liren Gan},
  {and} \bibinfo{person}{Chen Bai}.} \bibinfo{year}{2013}\natexlab{}.
\newblock \showarticletitle{Reinforcement Learning in Information Searching.}
\newblock \bibinfo{journal}{{\em Information Research: An International
  Electronic Journal\/}} \bibinfo{volume}{18}, \bibinfo{number}{1}
  (\bibinfo{year}{2013}), \bibinfo{pages}{n1}.
\newblock


\bibitem[\protect\citeauthoryear{Chatzopoulou, Eirinaki, and
  Polyzotis}{Chatzopoulou et~al\mbox{.}}{2009}]%
        {Chatzopoulou:2009:QRI:1561638.1561642}
\bibfield{author}{\bibinfo{person}{Gloria Chatzopoulou},
  \bibinfo{person}{Magdalini Eirinaki}, {and} \bibinfo{person}{Neoklis
  Polyzotis}.} \bibinfo{year}{2009}\natexlab{}.
\newblock \showarticletitle{Query Recommendations for Interactive Database
  Exploration}. In \bibinfo{booktitle}{{\em Proceedings of the 21st
  International Conference on Scientific and Statistical Database Management}}
  {\em (\bibinfo{series}{SSDBM 2009})}. \bibinfo{publisher}{Springer-Verlag},
  \bibinfo{address}{Berlin, Heidelberg}, \bibinfo{pages}{3--18}.
\newblock
\showISBNx{978-3-642-02278-4}
\showDOI{%
\url{https://doi.org/10.1007/978-3-642-02279-1_2}}


\bibitem[\protect\citeauthoryear{Chaudhuri, Das, Hristidis, and
  Weikum}{Chaudhuri et~al\mbox{.}}{2006}]%
        {Chaudhuri:2006:PIR:1166074.1166085}
\bibfield{author}{\bibinfo{person}{Surajit Chaudhuri}, \bibinfo{person}{Gautam
  Das}, \bibinfo{person}{Vagelis Hristidis}, {and} \bibinfo{person}{Gerhard
  Weikum}.} \bibinfo{year}{2006}\natexlab{}.
\newblock \showarticletitle{Probabilistic Information Retrieval Approach for
  Ranking of Database Query Results}.
\newblock \bibinfo{journal}{{\em TODS\/}} \bibinfo{volume}{31},
  \bibinfo{number}{3} (\bibinfo{year}{2006}).
\newblock


\bibitem[\protect\citeauthoryear{Chaudhuri, Ding, and Kandula}{Chaudhuri
  et~al\mbox{.}}{2017}]%
        {DBLP:conf/sigmod/ChaudhuriDK17}
\bibfield{author}{\bibinfo{person}{Surajit Chaudhuri}, \bibinfo{person}{Bolin
  Ding}, {and} \bibinfo{person}{Srikanth Kandula}.}
  \bibinfo{year}{2017}\natexlab{}.
\newblock \showarticletitle{Approximate Query Processing: No Silver Bullet}. In
  \bibinfo{booktitle}{{\em Proceedings of the 2017 {ACM} International
  Conference on Management of Data, {SIGMOD} Conference 2017, Chicago, IL, USA,
  May 14-19, 2017}}. \bibinfo{pages}{511--519}.
\newblock
\showDOI{%
\url{https://doi.org/10.1145/3035918.3056097}}


\bibitem[\protect\citeauthoryear{Chaudhuri, Motwani, and Narasayya}{Chaudhuri
  et~al\mbox{.}}{1999}]%
        {Chaudhuri:1999:RSO:304182.304206}
\bibfield{author}{\bibinfo{person}{Surajit Chaudhuri}, \bibinfo{person}{Rajeev
  Motwani}, {and} \bibinfo{person}{Vivek Narasayya}.}
  \bibinfo{year}{1999}\natexlab{}.
\newblock \showarticletitle{On Random Sampling over Joins}. In
  \bibinfo{booktitle}{{\em Proceedings of the 1999 ACM SIGMOD International
  Conference on Management of Data}} {\em (\bibinfo{series}{SIGMOD '99})}.
  \bibinfo{publisher}{ACM}, \bibinfo{address}{New York, NY, USA},
  \bibinfo{pages}{263--274}.
\newblock
\showISBNx{1-58113-084-8}
\showDOI{%
\url{https://doi.org/10.1145/304182.304206}}


\bibitem[\protect\citeauthoryear{Chen, Wang, Liu, and Lin}{Chen
  et~al\mbox{.}}{2009}]%
        {Chen:2009:KSS}
\bibfield{author}{\bibinfo{person}{Yi Chen}, \bibinfo{person}{Wei Wang},
  \bibinfo{person}{Ziyang Liu}, {and} \bibinfo{person}{Xuemin Lin}.}
  \bibinfo{year}{2009}\natexlab{}.
\newblock \showarticletitle{Keyword Search on Structured and Semi-structured
  Data}. In \bibinfo{booktitle}{{\em SIGMOD}}.
\newblock


\bibitem[\protect\citeauthoryear{Cho and Kreps}{Cho and Kreps}{1987}]%
        {Cho:QuarterlyJournalEconomics:1987}
\bibfield{author}{\bibinfo{person}{I. Cho} {and} \bibinfo{person}{D. Kreps}.}
  \bibinfo{year}{1987}\natexlab{}.
\newblock \showarticletitle{Signaling games and stable equilibria}.
\newblock \bibinfo{journal}{{\em Quarterly Journal of Economics\/}}
  \bibinfo{volume}{102} (\bibinfo{year}{1987}).
\newblock


\bibitem[\protect\citeauthoryear{Cross}{Cross}{1973}]%
        {cross1973stochastic}
\bibfield{author}{\bibinfo{person}{John~G Cross}.}
  \bibinfo{year}{1973}\natexlab{}.
\newblock \showarticletitle{A stochastic learning model of economic behavior}.
\newblock \bibinfo{journal}{{\em The Quarterly Journal of Economics\/}}
  \bibinfo{volume}{87}, \bibinfo{number}{2} (\bibinfo{year}{1973}),
  \bibinfo{pages}{239--266}.
\newblock


\bibitem[\protect\citeauthoryear{Daskalakis, Frongillo, Papadimitriou,
  Pierrakos, and Valiant}{Daskalakis et~al\mbox{.}}{2010}]%
        {Daskalakis:2010:LAN:1929237.1929248}
\bibfield{author}{\bibinfo{person}{Constantinos Daskalakis},
  \bibinfo{person}{Rafael Frongillo}, \bibinfo{person}{Christos~H.
  Papadimitriou}, \bibinfo{person}{George Pierrakos}, {and}
  \bibinfo{person}{Gregory Valiant}.} \bibinfo{year}{2010}\natexlab{}.
\newblock \showarticletitle{On Learning Algorithms for Nash Equilibria}. In
  \bibinfo{booktitle}{{\em Proceedings of the Third International Conference on
  Algorithmic Game Theory}} {\em (\bibinfo{series}{SAGT'10})}.
  \bibinfo{publisher}{Springer-Verlag}, \bibinfo{address}{Berlin, Heidelberg},
  \bibinfo{pages}{114--125}.
\newblock
\showISBNx{3-642-16169-3, 978-3-642-16169-8}
\showURL{%
\url{http://dl.acm.org/citation.cfm?id=1929237.1929248}}


\bibitem[\protect\citeauthoryear{Dimitriadou, Papaemmanouil, and
  Diao}{Dimitriadou et~al\mbox{.}}{2014}]%
        {Dimitriadou:2014:EAQ:2588555.2610523}
\bibfield{author}{\bibinfo{person}{Kyriaki Dimitriadou}, \bibinfo{person}{Olga
  Papaemmanouil}, {and} \bibinfo{person}{Yanlei Diao}.}
  \bibinfo{year}{2014}\natexlab{}.
\newblock \showarticletitle{Explore-by-example: An Automatic Query Steering
  Framework for Interactive Data Exploration}. In \bibinfo{booktitle}{{\em
  SIGMOD}}.
\newblock


\bibitem[\protect\citeauthoryear{Donaldson, Lachmannb, and
  Bergstroma}{Donaldson et~al\mbox{.}}{2007}]%
        {Donaldson:MathBiology:2007}
\bibfield{author}{\bibinfo{person}{Matina~C. Donaldson},
  \bibinfo{person}{Michael Lachmannb}, {and} \bibinfo{person}{Carl~T.
  Bergstroma}.} \bibinfo{year}{2007}\natexlab{}.
\newblock \showarticletitle{The evolution of functionally referential meaning
  in a structured world}.
\newblock \bibinfo{journal}{{\em Journal of Mathematical Biology\/}}
  \bibinfo{volume}{246} (\bibinfo{year}{2007}).
\newblock


\bibitem[\protect\citeauthoryear{Durrett}{Durrett}{2010}]%
        {durrett2010}
\bibfield{author}{\bibinfo{person}{Rick Durrett}.}
  \bibinfo{year}{2010}\natexlab{}.
\newblock \bibinfo{booktitle}{{\em Probability: theory and examples}}.
\newblock \bibinfo{publisher}{Cambridge university press}.
\newblock


\bibitem[\protect\citeauthoryear{{Elena Demidova and Xuan Zhou and Irina Oelze
  and Wolfgang Nejdl}}{{Elena Demidova and Xuan Zhou and Irina Oelze and
  Wolfgang Nejdl}}{2010}]%
        {EvaluatingEvidences:Demidova}
\bibfield{author}{\bibinfo{person}{{Elena Demidova and Xuan Zhou and Irina
  Oelze and Wolfgang Nejdl}}.} \bibinfo{year}{2010}\natexlab{}.
\newblock \showarticletitle{{Evaluating Evidences for Keyword Query
  Disambiguation in Entity Centric Database Search}}. In
  \bibinfo{booktitle}{{\em {DEXA}}}.
\newblock


\bibitem[\protect\citeauthoryear{Erev and Roth}{Erev and Roth}{1995}]%
        {erev1995need}
\bibfield{author}{\bibinfo{person}{Ido Erev} {and} \bibinfo{person}{Alvin~E
  Roth}.} \bibinfo{year}{1995}\natexlab{}.
\newblock \bibinfo{booktitle}{{\em On the Need for Low Rationality, Gognitive
  Game Theory: Reinforcement Learning in Experimental Games with Unique, Mixed
  Strategy Equilibria}}.
\newblock


\bibitem[\protect\citeauthoryear{Fagin, Lotem, and Naor}{Fagin
  et~al\mbox{.}}{2001}]%
        {Fagin:2001:OAA:375551.375567}
\bibfield{author}{\bibinfo{person}{Ronald Fagin}, \bibinfo{person}{Amnon
  Lotem}, {and} \bibinfo{person}{Moni Naor}.} \bibinfo{year}{2001}\natexlab{}.
\newblock \showarticletitle{Optimal Aggregation Algorithms for Middleware}. In
  \bibinfo{booktitle}{{\em Proceedings of the Twentieth ACM
  SIGMOD-SIGACT-SIGART Symposium on Principles of Database Systems}} {\em
  (\bibinfo{series}{PODS '01})}. \bibinfo{publisher}{ACM},
  \bibinfo{address}{New York, NY, USA}, \bibinfo{pages}{102--113}.
\newblock
\showISBNx{1-58113-361-8}
\showDOI{%
\url{https://doi.org/10.1145/375551.375567}}


\bibitem[\protect\citeauthoryear{Fox, Touri, and Shamma}{Fox
  et~al\mbox{.}}{2015}]%
        {fox2015dynamics}
\bibfield{author}{\bibinfo{person}{Michael~J Fox}, \bibinfo{person}{Behrouz
  Touri}, {and} \bibinfo{person}{Jeff~S Shamma}.}
  \bibinfo{year}{2015}\natexlab{}.
\newblock \showarticletitle{Dynamics in atomic signaling games}.
\newblock \bibinfo{journal}{{\em Journal of theoretical biology\/}}
  \bibinfo{volume}{376} (\bibinfo{year}{2015}).
\newblock


\bibitem[\protect\citeauthoryear{Ghosh and Sen}{Ghosh and Sen}{2004}]%
        {ArjitaGhosh}
\bibfield{author}{\bibinfo{person}{Arjita Ghosh} {and} \bibinfo{person}{Sandip
  Sen}.} \bibinfo{year}{2004}\natexlab{}.
\newblock \showarticletitle{Learning TOMs: Towards Non-Myopic Equilibria}. In
  \bibinfo{booktitle}{{\em AAAI}}.
\newblock


\bibitem[\protect\citeauthoryear{Giacomo, , Stasio, Murano, and Rubin}{Giacomo
  et~al\mbox{.}}{2016}]%
        {GMRS16IJCAI}
\bibfield{author}{\bibinfo{person}{Giuseppe~De Giacomo}, \bibinfo{person}{},
  \bibinfo{person}{Antonio~Di Stasio}, \bibinfo{person}{Aniello Murano}, {and}
  \bibinfo{person}{Sasha Rubin}.} \bibinfo{year}{2016}\natexlab{}.
\newblock \showarticletitle{Imperfect information games and generalized
  planning}. In \bibinfo{booktitle}{{\em IJCAI}}.
\newblock


\bibitem[\protect\citeauthoryear{Gottlob, Leone, and Scarcello}{Gottlob
  et~al\mbox{.}}{2001}]%
        {Gottlob:2001:RMG:375551.375579}
\bibfield{author}{\bibinfo{person}{Georg Gottlob}, \bibinfo{person}{Nicola
  Leone}, {and} \bibinfo{person}{Francesco Scarcello}.}
  \bibinfo{year}{2001}\natexlab{}.
\newblock \showarticletitle{Robbers, Marshals, and Guards: Game Theoretic and
  Logical Characterizations of Hypertree Width}. In \bibinfo{booktitle}{{\em
  PODS}}.
\newblock


\bibitem[\protect\citeauthoryear{Granka, Joachims, and Gay}{Granka
  et~al\mbox{.}}{2004}]%
        {Granka:2004:EAU:1008992.1009079}
\bibfield{author}{\bibinfo{person}{Laura~A. Granka}, \bibinfo{person}{Thorsten
  Joachims}, {and} \bibinfo{person}{Geri Gay}.}
  \bibinfo{year}{2004}\natexlab{}.
\newblock \showarticletitle{Eye-tracking Analysis of User Behavior in WWW
  Search}. In \bibinfo{booktitle}{{\em SIGIR}}.
\newblock


\bibitem[\protect\citeauthoryear{Grotov and de~Rijke}{Grotov and
  de~Rijke}{2016}]%
        {Grotov:2016:OLR:2911451.2914798}
\bibfield{author}{\bibinfo{person}{Artem Grotov} {and} \bibinfo{person}{Maarten
  de Rijke}.} \bibinfo{year}{2016}\natexlab{}.
\newblock \showarticletitle{Online Learning to Rank for Information Retrieval:
  SIGIR 2016 Tutorial}. In \bibinfo{booktitle}{{\em Proceedings of the 39th
  International ACM SIGIR Conference on Research and Development in Information
  Retrieval}} {\em (\bibinfo{series}{SIGIR '16})}. \bibinfo{publisher}{ACM},
  \bibinfo{address}{New York, NY, USA}, \bibinfo{pages}{1215--1218}.
\newblock
\showISBNx{978-1-4503-4069-4}
\showDOI{%
\url{https://doi.org/10.1145/2911451.2914798}}


\bibitem[\protect\citeauthoryear{Hofmann, Whiteson, and de~Rijke}{Hofmann
  et~al\mbox{.}}{2013}]%
        {hofmann2013balancing}
\bibfield{author}{\bibinfo{person}{Katja Hofmann}, \bibinfo{person}{Shimon
  Whiteson}, {and} \bibinfo{person}{Maarten de Rijke}.}
  \bibinfo{year}{2013}\natexlab{}.
\newblock \showarticletitle{Balancing exploration and exploitation in listwise
  and pairwise online learning to rank for information retrieval}.
\newblock \bibinfo{journal}{{\em Information Retrieval\/}}
  \bibinfo{volume}{16}, \bibinfo{number}{1} (\bibinfo{year}{2013}),
  \bibinfo{pages}{63--90}.
\newblock


\bibitem[\protect\citeauthoryear{Hristidis, Gravano, and
  Papakonstantinou}{Hristidis et~al\mbox{.}}{[n. d.]}]%
        {IRStyle}
\bibfield{author}{\bibinfo{person}{Vagelis Hristidis}, \bibinfo{person}{Luis
  Gravano}, {and} \bibinfo{person}{Yannis Papakonstantinou}.}
  \bibinfo{year}{[n. d.]}\natexlab{}.
\newblock \showarticletitle{{Efficient IR-Style Keyword Search over Relational
  Databases}}. In \bibinfo{booktitle}{{\em VLDB 2003}}.
\newblock


\bibitem[\protect\citeauthoryear{Hu, Skyrms, and Tarr{\`e}s}{Hu
  et~al\mbox{.}}{2011}]%
        {hu2011reinforcement}
\bibfield{author}{\bibinfo{person}{Yilei Hu}, \bibinfo{person}{Brian Skyrms},
  {and} \bibinfo{person}{Pierre Tarr{\`e}s}.} \bibinfo{year}{2011}\natexlab{}.
\newblock \showarticletitle{Reinforcement learning in signaling game}.
\newblock \bibinfo{journal}{{\em arXiv preprint arXiv:1103.5818\/}}
  (\bibinfo{year}{2011}).
\newblock


\bibitem[\protect\citeauthoryear{Huang, White, and Buscher}{Huang
  et~al\mbox{.}}{2012}]%
        {Huang:2012:USU:2207676.2208591}
\bibfield{author}{\bibinfo{person}{Jeff Huang}, \bibinfo{person}{Ryen White},
  {and} \bibinfo{person}{Georg Buscher}.} \bibinfo{year}{2012}\natexlab{}.
\newblock \showarticletitle{User See, User Point: Gaze and Cursor Alignment in
  Web Search}. In \bibinfo{booktitle}{{\em CHI}}.
\newblock


\bibitem[\protect\citeauthoryear{Idreos, Papaemmanouil, and Chaudhuri}{Idreos
  et~al\mbox{.}}{2015}]%
        {Idreos:SIGMOD:2015}
\bibfield{author}{\bibinfo{person}{Stratos Idreos}, \bibinfo{person}{Olga
  Papaemmanouil}, {and} \bibinfo{person}{Surajit Chaudhuri}.}
  \bibinfo{year}{2015}\natexlab{}.
\newblock \showarticletitle{Overview of Data Exploration Techniques}. In
  \bibinfo{booktitle}{{\em SIGMOD}}.
\newblock


\bibitem[\protect\citeauthoryear{Jagadish, Chapman, Elkiss, Jayapandian, Li,
  Nandi, and Yu}{Jagadish et~al\mbox{.}}{2007}]%
        {Usable:Jagadish}
\bibfield{author}{\bibinfo{person}{H.~V. Jagadish}, \bibinfo{person}{Adriane
  Chapman}, \bibinfo{person}{Aaron Elkiss}, \bibinfo{person}{Magesh
  Jayapandian}, \bibinfo{person}{Yunyao Li}, \bibinfo{person}{Arnab Nandi},
  {and} \bibinfo{person}{Cong Yu}.} \bibinfo{year}{2007}\natexlab{}.
\newblock \showarticletitle{Making Database Systems Usable}. In
  \bibinfo{booktitle}{{\em {SIGMOD}}}.
\newblock


\bibitem[\protect\citeauthoryear{Kandula, Shanbhag, Vitorovic, Olma, Grandl,
  Chaudhuri, and Ding}{Kandula et~al\mbox{.}}{2016}]%
        {DBLP:conf/sigmod/KandulaSVOGCD16}
\bibfield{author}{\bibinfo{person}{Srikanth Kandula}, \bibinfo{person}{Anil
  Shanbhag}, \bibinfo{person}{Aleksandar Vitorovic}, \bibinfo{person}{Matthaios
  Olma}, \bibinfo{person}{Robert Grandl}, \bibinfo{person}{Surajit Chaudhuri},
  {and} \bibinfo{person}{Bolin Ding}.} \bibinfo{year}{2016}\natexlab{}.
\newblock \showarticletitle{Quickr: Lazily Approximating Complex AdHoc Queries
  in BigData Clusters}. In \bibinfo{booktitle}{{\em SIGMOD}}.
  \bibinfo{pages}{631--646}.
\newblock
\showDOI{%
\url{https://doi.org/10.1145/2882903.2882940}}


\bibitem[\protect\citeauthoryear{Khoussainova, Kwon, Balazinska, and
  Suciu}{Khoussainova et~al\mbox{.}}{2010}]%
        {Khoussainova:2010:SCA:1880172.1880175}
\bibfield{author}{\bibinfo{person}{Nodira Khoussainova},
  \bibinfo{person}{YongChul Kwon}, \bibinfo{person}{Magdalena Balazinska},
  {and} \bibinfo{person}{Dan Suciu}.} \bibinfo{year}{2010}\natexlab{}.
\newblock \showarticletitle{SnipSuggest: Context-aware Autocompletion for SQL}.
\newblock \bibinfo{journal}{{\em PVLDB\/}} \bibinfo{volume}{4},
  \bibinfo{number}{1} (\bibinfo{year}{2010}).
\newblock


\bibitem[\protect\citeauthoryear{Koller and Pfeffer}{Koller and
  Pfeffer}{1995}]%
        {Koller+Pfeffer:IJCAI95}
\bibfield{author}{\bibinfo{person}{D. Koller} {and} \bibinfo{person}{A.
  Pfeffer}.} \bibinfo{year}{1995}\natexlab{}.
\newblock \showarticletitle{Generating and solving imperfect information
  games}. In \bibinfo{booktitle}{{\em IJCAI}}.
\newblock


\bibitem[\protect\citeauthoryear{Lewis}{Lewis}{1969}]%
        {Lewis:Convention:69}
\bibfield{author}{\bibinfo{person}{David Lewis}.}
  \bibinfo{year}{1969}\natexlab{}.
\newblock \bibinfo{booktitle}{{\em {Convention}}}.
\newblock \bibinfo{publisher}{Cambridge: Harvard University Press}.
\newblock


\bibitem[\protect\citeauthoryear{Li, Chan, and Maier}{Li et~al\mbox{.}}{2015}]%
        {Maier:VLDB:2015}
\bibfield{author}{\bibinfo{person}{Hao Li}, \bibinfo{person}{Chee-Yong Chan},
  {and} \bibinfo{person}{David Maier}.} \bibinfo{year}{2015}\natexlab{}.
\newblock \showarticletitle{Query From Examples: An Iterative, Data-Driven
  Approach to Query Construction}.
\newblock \bibinfo{journal}{{\em PVLDB\/}} \bibinfo{volume}{8},
  \bibinfo{number}{13} (\bibinfo{year}{2015}).
\newblock


\bibitem[\protect\citeauthoryear{Liarou and Idreos}{Liarou and Idreos}{2014}]%
        {DBLP:conf/icde/LiarouI14}
\bibfield{author}{\bibinfo{person}{Erietta Liarou} {and}
  \bibinfo{person}{Stratos Idreos}.} \bibinfo{year}{2014}\natexlab{}.
\newblock \showarticletitle{dbTouch in action database kernels for touch-based
  data exploration}. In \bibinfo{booktitle}{{\em {IEEE} 30th International
  Conference on Data Engineering, Chicago, {ICDE} 2014, IL, USA, March 31 -
  April 4, 2014}}. \bibinfo{pages}{1262--1265}.
\newblock
\showDOI{%
\url{https://doi.org/10.1109/ICDE.2014.6816756}}


\bibitem[\protect\citeauthoryear{Luo, Zhang, and Yang}{Luo
  et~al\mbox{.}}{2014}]%
        {Luo:SIGIR:2014}
\bibfield{author}{\bibinfo{person}{Jiyun Luo}, \bibinfo{person}{Sicong Zhang},
  {and} \bibinfo{person}{Hui Yang}.} \bibinfo{year}{2014}\natexlab{}.
\newblock \showarticletitle{Win-Win Search: Dual-Agent Stochastic Game in
  Session Search}. In \bibinfo{booktitle}{{\em SIGIR}}.
\newblock


\bibitem[\protect\citeauthoryear{Luo, Lin, Wang, and Zhou}{Luo
  et~al\mbox{.}}{[n. d.]}]%
        {spark}
\bibfield{author}{\bibinfo{person}{Yi Luo}, \bibinfo{person}{Xumein Lin},
  \bibinfo{person}{Wei Wang}, {and} \bibinfo{person}{Xiaofang Zhou}.}
  \bibinfo{year}{[n. d.]}\natexlab{}.
\newblock \showarticletitle{{SPARK: Top-k Keyword Query in Relational
  Databases}}. In \bibinfo{booktitle}{{\em SIGMOD 2007}}.
\newblock


\bibitem[\protect\citeauthoryear{Ma, Muthukrishnan, Thompson, and Cormode}{Ma
  et~al\mbox{.}}{2014}]%
        {Ma:BigScholar:2014}
\bibfield{author}{\bibinfo{person}{Qiang Ma}, \bibinfo{person}{S.
  Muthukrishnan}, \bibinfo{person}{Brian Thompson}, {and}
  \bibinfo{person}{Graham Cormode}.} \bibinfo{year}{2014}\natexlab{}.
\newblock \showarticletitle{Modeling Collaboration in Academia: A Game
  Theoretic Approach}. In \bibinfo{booktitle}{{\em BigScholar}}.
\newblock


\bibitem[\protect\citeauthoryear{Manning, Raghavan, and Schutze}{Manning
  et~al\mbox{.}}{2008}]%
        {IRBook}
\bibfield{author}{\bibinfo{person}{Christopher Manning},
  \bibinfo{person}{Prabhakar Raghavan}, {and} \bibinfo{person}{Hinrich
  Schutze}.} \bibinfo{year}{2008}\natexlab{}.
\newblock \bibinfo{booktitle}{{\em {An Introduction to Information
  Retrieval}}}.
\newblock \bibinfo{publisher}{{Cambridge University Press}}.
\newblock


\bibitem[\protect\citeauthoryear{Moon, Chu, Li, Zheng, and Chang}{Moon
  et~al\mbox{.}}{2012}]%
        {moon2012online}
\bibfield{author}{\bibinfo{person}{Taesup Moon}, \bibinfo{person}{Wei Chu},
  \bibinfo{person}{Lihong Li}, \bibinfo{person}{Zhaohui Zheng}, {and}
  \bibinfo{person}{Yi Chang}.} \bibinfo{year}{2012}\natexlab{}.
\newblock \showarticletitle{An online learning framework for refining recency
  search results with user click feedback}.
\newblock \bibinfo{journal}{{\em ACM Transactions on Information Systems
  (TOIS)\/}} \bibinfo{volume}{30}, \bibinfo{number}{4} (\bibinfo{year}{2012}),
  \bibinfo{pages}{20}.
\newblock


\bibitem[\protect\citeauthoryear{Niv}{Niv}{2009}]%
        {ICMLRL}
\bibfield{author}{\bibinfo{person}{Yael Niv}.} \bibinfo{year}{2009}\natexlab{}.
\newblock \showarticletitle{The Neuroscience of Reinforcement Learning}. In
  \bibinfo{booktitle}{{\em ICML}}.
\newblock


\bibitem[\protect\citeauthoryear{Nowak and Krakauer}{Nowak and
  Krakauer}{1999}]%
        {nowak1999evolution}
\bibfield{author}{\bibinfo{person}{Martin~A Nowak} {and}
  \bibinfo{person}{David~C Krakauer}.} \bibinfo{year}{1999}\natexlab{}.
\newblock \showarticletitle{The evolution of language}.
\newblock \bibinfo{journal}{{\em PNAS\/}} \bibinfo{volume}{96},
  \bibinfo{number}{14} (\bibinfo{year}{1999}).
\newblock


\bibitem[\protect\citeauthoryear{Olken}{Olken}{1993}]%
        {Olken93randomsampling}
\bibfield{author}{\bibinfo{person}{Frank Olken}.}
  \bibinfo{year}{1993}\natexlab{}.
\newblock {\em \bibinfo{title}{Random Sampling from Databases}}.
\newblock \bibinfo{thesistype}{Ph.D. Dissertation}. \bibinfo{school}{University
  of California, Berkeley}.
\newblock


\bibitem[\protect\citeauthoryear{Radlinski, Kleinberg, and Joachims}{Radlinski
  et~al\mbox{.}}{2008}]%
        {radlinski2008learning}
\bibfield{author}{\bibinfo{person}{Filip Radlinski}, \bibinfo{person}{Robert
  Kleinberg}, {and} \bibinfo{person}{Thorsten Joachims}.}
  \bibinfo{year}{2008}\natexlab{}.
\newblock \showarticletitle{Learning diverse rankings with multi-armed
  bandits}. In \bibinfo{booktitle}{{\em Proceedings of the 25th international
  conference on Machine learning}}. ACM, \bibinfo{pages}{784--791}.
\newblock


\bibitem[\protect\citeauthoryear{Robbins and Siegmund}{Robbins and
  Siegmund}{1985}]%
        {robbins1985convergence}
\bibfield{author}{\bibinfo{person}{Herbert Robbins} {and}
  \bibinfo{person}{David Siegmund}.} \bibinfo{year}{1985}\natexlab{}.
\newblock \showarticletitle{A convergence theorem for non negative almost
  supermartingales and some applications}.
\newblock In \bibinfo{booktitle}{{\em Herbert Robbins Selected Papers}}.
  \bibinfo{publisher}{Springer}.
\newblock


\bibitem[\protect\citeauthoryear{Roth and Erev}{Roth and Erev}{1995}]%
        {roth1995learning}
\bibfield{author}{\bibinfo{person}{Alvin~E Roth} {and} \bibinfo{person}{Ido
  Erev}.} \bibinfo{year}{1995}\natexlab{}.
\newblock \showarticletitle{Learning in extensive-form games: Experimental data
  and simple dynamic models in the intermediate term}.
\newblock \bibinfo{journal}{{\em Games and economic behavior\/}}
  \bibinfo{volume}{8}, \bibinfo{number}{1} (\bibinfo{year}{1995}),
  \bibinfo{pages}{164--212}.
\newblock


\bibitem[\protect\citeauthoryear{Schuster and Schwentick}{Schuster and
  Schwentick}{2015}]%
        {DBLP:conf/icdt/SchusterS15}
\bibfield{author}{\bibinfo{person}{Martin Schuster} {and}
  \bibinfo{person}{Thomas Schwentick}.} \bibinfo{year}{2015}\natexlab{}.
\newblock \showarticletitle{Games for Active {XML} Revisited}. In
  \bibinfo{booktitle}{{\em ICDT}}.
\newblock


\bibitem[\protect\citeauthoryear{Shapley et~al\mbox{.}}{Shapley
  et~al\mbox{.}}{1964}]%
        {shapley1964some}
\bibfield{author}{\bibinfo{person}{Lloyd~S Shapley} {et~al\mbox{.}}}
  \bibinfo{year}{1964}\natexlab{}.
\newblock \showarticletitle{Some topics in two-person games}.
\newblock \bibinfo{journal}{{\em Advances in game theory\/}}
  \bibinfo{volume}{52}, \bibinfo{number}{1-29} (\bibinfo{year}{1964}),
  \bibinfo{pages}{1--2}.
\newblock


\bibitem[\protect\citeauthoryear{Shoham}{Shoham}{2008}]%
        {Shoham:2008:CSG:1378704.1378721}
\bibfield{author}{\bibinfo{person}{Yoav Shoham}.}
  \bibinfo{year}{2008}\natexlab{}.
\newblock \showarticletitle{Computer Science and Game Theory}.
\newblock \bibinfo{journal}{{\em Commun. ACM\/}} \bibinfo{volume}{51},
  \bibinfo{number}{8} (\bibinfo{year}{2008}).
\newblock


\bibitem[\protect\citeauthoryear{Shteingart and Loewenstein}{Shteingart and
  Loewenstein}{2014}]%
        {4176}
\bibfield{author}{\bibinfo{person}{Hanan Shteingart} {and}
  \bibinfo{person}{Yonatan Loewenstein}.} \bibinfo{year}{2014}\natexlab{}.
\newblock \showarticletitle{Reinforcement learning and human behavior}.
\newblock \bibinfo{journal}{{\em Current Opinion in Neurobiology\/}}
  \bibinfo{volume}{25} (\bibinfo{date}{04/2014} \bibinfo{year}{2014}),
  \bibinfo{pages}{93--98}.
\newblock


\bibitem[\protect\citeauthoryear{Slivkins, Radlinski, and Gollapudi}{Slivkins
  et~al\mbox{.}}{2013}]%
        {slivkins2013ranked}
\bibfield{author}{\bibinfo{person}{Aleksandrs Slivkins}, \bibinfo{person}{Filip
  Radlinski}, {and} \bibinfo{person}{Sreenivas Gollapudi}.}
  \bibinfo{year}{2013}\natexlab{}.
\newblock \showarticletitle{Ranked bandits in metric spaces: learning diverse
  rankings over large document collections}.
\newblock \bibinfo{journal}{{\em Journal of Machine Learning Research\/}}
  \bibinfo{volume}{14}, \bibinfo{number}{Feb} (\bibinfo{year}{2013}),
  \bibinfo{pages}{399--436}.
\newblock


\bibitem[\protect\citeauthoryear{Tadelis}{Tadelis}{2013}]%
        {GameTheoryBook}
\bibfield{author}{\bibinfo{person}{Steve Tadelis}.}
  \bibinfo{year}{2013}\natexlab{}.
\newblock \bibinfo{booktitle}{{\em Game Theory: An Introduction}}.
\newblock \bibinfo{publisher}{Princeton University Press}.
\newblock


\bibitem[\protect\citeauthoryear{Touri and Langbort}{Touri and
  Langbort}{2013}]%
        {touri2013language}
\bibfield{author}{\bibinfo{person}{Behrouz Touri} {and} \bibinfo{person}{Cedric
  Langbort}.} \bibinfo{year}{2013}\natexlab{}.
\newblock \showarticletitle{Language evolution in a noisy environment}. In
  \bibinfo{booktitle}{{\em (ACC)}}. IEEE.
\newblock


\bibitem[\protect\citeauthoryear{Tran, Chan, and Parthasarathy}{Tran
  et~al\mbox{.}}{2009}]%
        {Tran:SIGMOD:2009}
\bibfield{author}{\bibinfo{person}{Q. Tran}, \bibinfo{person}{C. Chan}, {and}
  \bibinfo{person}{S. Parthasarathy}.} \bibinfo{year}{2009}\natexlab{}.
\newblock \showarticletitle{Query by Output}. In \bibinfo{booktitle}{{\em
  SIGMOD}}.
\newblock


\bibitem[\protect\citeauthoryear{Trapa and Nowak}{Trapa and Nowak}{2000}]%
        {Trapa:MathBiology:2000}
\bibfield{author}{\bibinfo{person}{Peter Trapa} {and} \bibinfo{person}{Martin
  Nowak}.} \bibinfo{year}{2000}\natexlab{}.
\newblock \showarticletitle{Nash equilibria for an evolutionary language game}.
\newblock \bibinfo{journal}{{\em Journal of Mathematical Biology\/}}
  \bibinfo{volume}{41} (\bibinfo{year}{2000}).
\newblock


\bibitem[\protect\citeauthoryear{Vorobev, Lefortier, Gusev, and
  Serdyukov}{Vorobev et~al\mbox{.}}{2015}]%
        {vorobev2015gathering}
\bibfield{author}{\bibinfo{person}{Aleksandr Vorobev}, \bibinfo{person}{Damien
  Lefortier}, \bibinfo{person}{Gleb Gusev}, {and} \bibinfo{person}{Pavel
  Serdyukov}.} \bibinfo{year}{2015}\natexlab{}.
\newblock \showarticletitle{Gathering additional feedback on search results by
  multi-armed bandits with respect to production ranking}. In
  \bibinfo{booktitle}{{\em Proceedings of the 24th international conference on
  World wide web}}. International World Wide Web Conferences Steering
  Committee, \bibinfo{pages}{1177--1187}.
\newblock


\bibitem[\protect\citeauthoryear{Yahoo!}{Yahoo!}{2011}]%
        {yahoo}
\bibfield{author}{\bibinfo{person}{Yahoo!}} \bibinfo{year}{2011}\natexlab{}.
\newblock \bibinfo{title}{{Yahoo! webscope dataset anonymized Yahoo! search
  logs with relevance judgments version 1.0}}.
\newblock
  \bibinfo{howpublished}{\url{http://labs.yahoo.com/Academic_Relations}}.
  (\bibinfo{year}{2011}).
\newblock
\newblock
\shownote{[Online; accessed 5-January-2017].}


\bibitem[\protect\citeauthoryear{Yue, Broder, Kleinberg, and Joachims}{Yue
  et~al\mbox{.}}{2012}]%
        {Yue:2012:KDB:2240304.2240501}
\bibfield{author}{\bibinfo{person}{Yisong Yue}, \bibinfo{person}{Josef Broder},
  \bibinfo{person}{Robert Kleinberg}, {and} \bibinfo{person}{Thorsten
  Joachims}.} \bibinfo{year}{2012}\natexlab{}.
\newblock \showarticletitle{The K-armed Dueling Bandits Problem}.
\newblock \bibinfo{journal}{{\em J. Comput. Syst. Sci.\/}}
  \bibinfo{volume}{78}, \bibinfo{number}{5} (\bibinfo{year}{2012}).
\newblock


\bibitem[\protect\citeauthoryear{Zhai}{Zhai}{2015}]%
        {Cheng:SIGIR:2015}
\bibfield{author}{\bibinfo{person}{Chengxiang Zhai}.}
  \bibinfo{year}{2015}\natexlab{}.
\newblock \showarticletitle{Towards a Game-Theoretic Framework for Information
  Retrieval}. In \bibinfo{booktitle}{{\em SIGIR}}.
\newblock


\end{thebibliography}

\end{document}